\newcommand*{\FULL}{}
\newcommand{\noun}[1]{\textsc{#1}}
\providecommand{\tabularnewline}{\\}
\providecommand{\algorithmname}{Algorithm}
\newtheorem{theorem}{Theorem}
\newtheorem{prop}[theorem]{Proposition}
\newtheorem{example}{Example}
\newcommand{\proof}{{\bf Proof.}\ }
\newcommand{\kommentar}[1]{}
\newcommand{\hs}{\hspace{-0.5cm}}
\def\real{\hbox{\rm\vrule\kern-1pt R}}
\def\nat{\hbox{\rm\vrule\kern-1pt N}}
\def\eps{\varepsilon}
\def\ed2v{ed^{2}}
\def\ed{ed}
\def\gvc{{H}}
\def\vvc{{V_H}}
\def\evc{{E_H}}
\def\gscg{{G}}
\def\vscg{{V}}
\def\escg{{E}}
\global\long\def\spg{{\sc Shortest Path Game }}
\global\long\def\scg{{\sc Shortest Connection Game}}
\global\long\def\geo{{\sc Geography}}
\global\long\def\vgeo{{\sc Vertex Geography }}
\global\long\def\pageo{{\sc Partizan Arc Geography }}
\global\long\def\pvgeo{{\sc Partizan Vertex Geography }}
\global\long\def\psp{{\sf PSPACE}-complete}
\global\long\def\np{{\sf NP}}
\def\opath{{\em spe-conn}}
\global\long\def\pathdec{{\sc Shortest Connection Game}} 
\begin{document}


\title{The Shortest Connection Game\tnoteref{titlenote}}
\tnotetext[titlenote]{
Andreas Darmann was supported by the Austrian Science Fund (FWF):
{[}P 23724-G11{]}.
Ulrich Pferschy and Joachim Schauer were supported by the Austrian
Science Fund (FWF): {[}P 23829-N13{]}.
}

\author{Andreas Darmann}
\address{Institute of Public Economics, University of Graz,\\
Universitaetsstr.~15, 8010 Graz, Austria,\\
\texttt{andreas.darmann@uni-graz.at}}

\author{Ulrich Pferschy}
\address{Department of Statistics and Operations Research, University of Graz,\\
Universitaetsstr.~15, 8010 Graz, Austria,\\
\texttt{pferschy@uni-graz.at}}

\author{Joachim Schauer}
\address{Department of Mathematics, University of Klagenfurt,\\
 Universitaetsstr.~65-67, 9020 Klagenfurt, Austria,\\
\texttt{joachim.schauer@aau.at}}
\date{}


\begin{abstract}
We introduce \scg, a two-player game played on a directed graph with edge costs. 
Given two designated vertices in which they start, the players take turns in choosing edges emanating from the vertex they are currently located at.
In this way, each of the players forms a path that origins from its respective starting vertex. 
The game ends as soon as the two paths meet, i.e., a connection between the players is established. Each player has to carry the cost of its chosen edges and thus aims at minimizing its own total cost. 

In this work we analyze the computational complexity of \scg. 
On the negative side, \scg\ turns out to be computationally hard even on restricted graph classes such as bipartite, acyclic and cactus graphs. 
On the positive side, we can give a polynomial time algorithm for cactus graphs when 
the game is restricted to simple paths. 
\end{abstract}

\begin{keyword}
shortest path problem,
game theory,
computational complexity,
cactus graph
\end{keyword}

\maketitle


\section{Introduction}
\label{sec:intro}

We consider the following game on a directed graph $G=(V,E)$ 
with vertex set $V$ and a set of directed edges $E$
with nonnegative edge costs $c(u,v)$ for each edge $(u,v)\in E$
and two designated vertices $s,t\in V$.

In \scg\  two players $A$ and $B$ start from their respective homebases 
($A$ in $s$ and $B$ in $t$). 
The aim of the game is to establish a connection
between $s$ and $t$ in the following sense: 
The players take turns in moving along an edge and thus 
each of them constructs a directed path.
The game ends as soon as one player reaches a vertex $m$, 
i.e.\ a meeting point, which was already visited by the other player. 
This means that at the end of the game one player, say $A$, has selected
a path from $s$ to $m$, while $B$ has selected a path from $t$
to $m$ and possibly further on to additional vertices (or vice versa).
Each player has to carry the cost of its chosen edges and wants to minimize 
the total costs it has to pay.

Note that it is not always beneficial for both players to move closer
to each other. Instead, one player may take advantage of cheap edges
and move away from the other player, who then has to bear the burden
of building the connection.

To avoid unnecessary technicalities we assume that for every graph
considered a solution of the game does exist. 
This could be checked in a preprocessing step e.g.\ by executing a breadth-first-search
 starting in parallel from $s$ and $t$. 

\smallskip
As a motivational example consider the decision problem of two persons $A$ and $B$ 
(let's say a married couple)
who want to decide on a joint holiday trip.
Each such trip consists of a number of components such as
transportation by car, train, or plane,
accommodation, sight-seeing trips, special events etc.
Thus, it requires a certain effort to organize such a trip
(inquiries, reservations, 
etc.)
The two persons have different ideas of such a trip but want to reach a consensus,
i.e.\ a trip they both agree with.
Their decision procedure runs as follows:

Person $A$ makes a first offer, i.e.\ a detailed plan of a trip
containing all its components. 
Naturally, putting together such a trip requires a certain effort (e.g.\ measured in time).
Then Person $B$ makes an offer corresponding to its own preferences.
Most likely the two offered trip plans will not match.
Thus, it is $A$'s turn again to modify its previous offer,
e.g.\ by changing an excursion, moving to a different hotel
or changing the mode of transportation. 
This change of plans again requires a certain effort. 
In this way, the decision process goes on with a sequence of offers
alternating between $A$ and $B$.
A final decision is reached as soon as one person proposes a trip
(= offer) which is identical to an offer proposed by the other person
in one of the previous rounds. 
It means that a trip was identified that both persons agree with.

The total cost incurred by the decision process for each person
consists of the sum of efforts spent by that person until the
final decision was reached.
Note that the effort to generate an offer depends on the sequence of offers
proposed before since the effort of changing from one offer to another
may differ a lot and thus different sequences of offers give rise to different total efforts.

Formally, we can represent each offer by a vertex $v\in V$ in a graph $G=(V,E)$.
Each edge $(u,v)\in E$ represent the change from the offer associated to $u$ to offer $v$
and incurs costs $c(u,v)$.
The two persons $A,B$ start with initial offers $s\in V$ and  $t\in V$ respectively.
The sequence of trips offered by each person corresponds to a simple path
in the graph
(it does not make sense in this setting to make the same offer twice).
The final decision is reached as soon as a vertex is offered that lies on both paths.
Note that this is not necessarily the last offer proposed by both persons.
Assuming total knowledge (an {\em old} married couple) each person will try to minimize
its own total cost taking into account the rational optimal decisions by the other person.


\medskip
We will impose two restrictions on the problem setting to ensure that 
the two players actually meet in some vertex and to guarantee finiteness of the game. 
To enable a feasible outcome of the game we restrict
the players in every decision to choose only edges which still permit
a meeting point of the two paths. 
\begin{quote}
\textbf{(R1)} The players can not select an edge which does not permit
the paths of the players to meet. 
\end{quote}
Note that this aspect may lead to interesting strategic behavior, since
we also require that each player must be able to choose an edge in every round. 
Thus, each player has to take care that the other player
does not get stuck.

\medskip
Secondly, we want to guarantee that the game remains finite.
This raises the question of how to handle cycles.
It could be argued that moving in a cycle makes sense for a player
who wants to avoid an edge of large costs.
However, for the sake of finiteness each cycle should be traversed 
at most once as guaranteed by the following restriction: 
\begin{quote}
\textbf{(R2)} Each edge may be used at most once. 
\end{quote}
Note that (R2) does not rule out the possibility to visit a vertex more than once.
However, it is also interesting to restrict the game to simple paths 
and thus exclude cycles completely by the following condition.
\begin{quote}
\textbf{(R3)} Each vertex may be used at most once by each player. 
\end{quote}
While (R1) and (R2) will be strictly enforced throughout the paper,
we will consider the variants with and without (R3).

\bigskip
We will study \scg\ on general directed graphs 
(also the case of an undirected graph is interesting, but this is beyond the scope of the current paper)
and on relevant special graph classes 
namely bipartite graphs, acyclic graphs, cactus graphs and trees.
Each of these special cases (with the trivial exception of acyclic graphs)
arises if the associated undirected graph implied by removing the
orientation from all edges has the stated property\footnote{This means that
two directed edges $(u,v)$ and $(v,u)$ imply two parallel edges $(u,v)$ in
the associated undirected graph.}.

\subsection{Game Theoretic Setting}
\label{sec:game}

The game described above can be represented as a finite game 
in extensive form. 
All feasible decisions for the players can be represented in a game tree,
where each node corresponds to the decision of a certain player in a
vertex of the graph $G$.
A similar representation was given in \cite{dps15a} for \spg (see also Section~\ref{sec:lit}).

The standard procedure to determine equilibria in a game tree is {\em backward induction}
(see \citet[ch.~5]{osb04}).
This means that for each node in the game tree 
whose child nodes are all leaves, the associated player can reach
a decision by choosing the best of all child nodes w.r.t.\ the
cost resulting from its corresponding path in $G$.
Then these leaf nodes can be deleted and the
costs accrued by each of the two players is moved to its parent node.
In this way, we can move upwards in the game tree towards the root and settle
all decisions along the way.

This backward induction procedure implies a strategy for each player
given by the decisions in the game tree:
{\em Always choose the edge according to the result of backward induction.}
This strategy for both players is a {\em Nash equilibrium} and also a so-called
{\em subgame perfect equilibrium} (a slightly stronger property),
since the decisions made in the underlying backward induction procedure
are also optimal for every subtree.

The outcome, if both players follow this strategy,
is a set of two directed paths, namely from $s$ to some vertex $v_A$ and 
from $t$ to some vertex $v_B$,  
where both paths contain the unique common vertex $m$ which coincides with at least one of the vertices $v_A, v_B$.
These paths make up the unique 
{\bf s}ubgame {\bf p}erfect {\bf e}quilibrium\footnote{
As a tie-breaking rule we will use the ``optimistic case'', where in case of indifference
a player chooses the option with lowest possible cost for the other player.
} and will be denoted by {\bf \opath}.
A \opath\ for \scg\ is the particular solution in the game tree
with minimal cost for both selfish players
under the assumption that they have complete and perfect information
of the game and know that the opponent will
also strive for its own selfish optimal value.
An illustration of \scg\ is given in the following example.

\begin{example}\label{ex:strat}
Consider the following graph where numbers refer to the actual costs of the corresponding edges and the vertices are labelled by letters.
In the associated game tree (depicted below)
\opath\ is determined by backward induction.
Thick arrows indicate the optimal decision in each node.
Each node of the game tree contains the current positions of both players
and the cost until the end of the game from the current position
as it is computed by backward induction; 
e.g.\ $(a,4|c,2)$ says that if $A$ is in $a$ and $B$ in $c$,
in the subgame perfect equilibrium the respective total cost for establishing a meeting point is $4$ for $A$ and $2$ for $B$.
The number depicted for each edge of the game tree gives the cost of the edge
in the graph associated with the corresponding decision.
In our example \opath\ is given by the sequence $s,d,b,e$ for $A$ (cost $4$), and $t,c,e$ (cost $5$) for $B$.
Note that in general the feasible moves from the current position might be restricted by the
paths chosen for reaching this position from the starting point $(s,t)$, 
e.g.\ because of (R2); in the given acyclic graph, however, this is not the case.
\end{example}

\begin{center}
\psscalebox{0.9 0.9} 
{
\begin{pspicture}(0,-1.66)(10.32,1.66)
\psdots[linecolor=black, dotsize=0.16](0.36,-0.06)
\psdots[linecolor=black, dotsize=0.16](2.76,1.14)
\psdots[linecolor=black, dotsize=0.16](2.76,-1.26)
\psdots[linecolor=black, dotsize=0.16](5.16,1.14)
\psline[linecolor=black, linewidth=0.04, arrowsize=0.05291666666666667cm 2.01,arrowlength=1.4,arrowinset=0.0]{->}(0.44,0.04)(2.62,1.1)
\psline[linecolor=black, linewidth=0.04, arrowsize=0.05291666666666667cm 2.01,arrowlength=1.4,arrowinset=0.0]{->}(0.46,-0.16)(2.62,-1.22)
\psline[linecolor=black, linewidth=0.04, arrowsize=0.05291666666666667cm 2.01,arrowlength=1.4,arrowinset=0.0]{->}(2.94,1.14)(4.98,1.14)
\psdots[linecolor=black, dotsize=0.16](5.18,-1.24)
\psdots[linecolor=black, dotsize=0.16](7.56,1.14)
\psdots[linecolor=black, dotsize=0.16](7.56,-1.26)
\psdots[linecolor=black, dotsize=0.16](9.96,-0.06)
\psline[linecolor=black, linewidth=0.04, arrowsize=0.05291666666666667cm 2.01,arrowlength=1.4,arrowinset=0.0]{->}(2.94,-1.24)(5.04,-1.26)
\psline[linecolor=black, linewidth=0.04, arrowsize=0.05291666666666667cm 2.01,arrowlength=1.4,arrowinset=0.0]{->}(7.4,1.14)(5.32,1.14)
\psline[linecolor=black, linewidth=0.04, arrowsize=0.05291666666666667cm 2.01,arrowlength=1.4,arrowinset=0.0]{->}(7.38,-1.26)(5.32,-1.26)
\psline[linecolor=black, linewidth=0.04, arrowsize=0.05291666666666667cm 2.01,arrowlength=1.4,arrowinset=0.0]{->}(9.86,0.04)(7.72,1.14)
\psline[linecolor=black, linewidth=0.04, arrowsize=0.05291666666666667cm 2.01,arrowlength=1.4,arrowinset=0.0]{->}(9.82,-0.14)(7.72,-1.2)
\psline[linecolor=black, linewidth=0.04, arrowsize=0.05291666666666667cm 2.01,arrowlength=1.4,arrowinset=0.0]{->}(7.48,0.98)(5.34,-1.06)
\psline[linecolor=black, linewidth=0.04, arrowsize=0.05291666666666667cm 2.01,arrowlength=1.4,arrowinset=0.0]{->}(5.14,0.96)(5.16,-1.04)
\psline[linecolor=black, linewidth=0.04, arrowsize=0.05291666666666667cm 2.01,arrowlength=1.4,arrowinset=0.0]{->}(2.82,-1.08)(5.04,1.02)
\rput[bl](0.0,-0.1){\large $s$}
\rput[bl](10.2,-0.16){\large $t$}
\rput[bl](2.7,1.4){\large $a$}
\rput[bl](5.0,1.4){\large $b$}
\rput[bl](7.5,1.4){\large $c$}
\rput[bl](2.7,-1.7){\large $d$}
\rput[bl](5.0,-1.7){\large $e$}
\rput[bl](7.5,-1.7){\large $f$}
\rput[bl](1.3,0.74){$1$}
\rput[bl](3.84,1.28){$3$}
\rput[bl](6.36,1.28){$4$}
\rput[bl](9.0,0.68){$3$}
\rput[bl](3.7,-0.04){$1$}
\rput[bl](5.26,-0.04){$1$}
\rput[bl](6.7,-0.04){$2$}
\rput[bl](1.12,-0.9){$2$}
\rput[bl](3.88,-1.6){$4$}
\rput[bl](6.38,-1.6){$5$}
\rput[bl](9.0,-0.82){$1$}
\end{pspicture}
}
\end{center}

\vspace*{3mm}
\begin{center}
\psscalebox{0.8 0.8} 
{
\begin{pspicture}(0,-4.175)(14.72,4.175)
\rput[bl](0.5,-2.595){$(b,0|b,0)$}
\rput[bl](3.64,-2.575){$(b,1|e,0)$}
\rput[bl](1.28,-0.995){$(b,1|c,2)$}
\rput[bl](2.1,-4.175){$(e,0|e,0)$}
\rput[bl](2.18,-2.615){$(b,1|e,0)$}
\rput[bl](3.74,-0.975){$(b,1|f,5)$}
\rput[bl](3.74,0.625){$(a,4|f,5)$}
\rput[bl](2.54,2.225){$(a,4|t,5)$}
\rput[bl](1.34,0.625){$(a,4|c,2)$}
\rput[bl](7.74,0.625){$(d,2|c,2)$}
\rput[bl](6.14,-0.975){$(b,1|c,2)$}
\rput[bl](9.34,-0.975){$(e,0|c,2)$}
\rput[bl](11.74,-0.975){$(b,1|f,5)$}
\rput[bl](13.34,-0.975){$(e,0|f,5)$}
\rput[bl](12.54,0.625){$(d,2|f,5)$}
\rput[bl](5.3,-2.575){$(b,0|b,0)$}
\rput[bl](6.1,3.825){$(s,4|t,5)$}
\rput[bl](10.1,2.225){$(d,2|t,5)$}
\rput[bl](13.3,-2.575){$(e,0|e,0)$}
\rput[bl](11.7,-2.575){$(b,1|e,0)$}
\rput[bl](10.1,-2.575){$(e,0|e,0)$}
\rput[bl](8.4,-2.575){$(e,\infty|b,\infty)$}
\rput[bl](6.9,-2.575){$(b,1|e,0)$}
\rput[bl](3.7,-4.175){$(e,0|e,0)$}
\rput[bl](6.9,-4.175){$(e,0|e,0)$}
\rput[bl](11.7,-4.175){$(e,0|e,0)$}
\rput[bl](0.0,3.145){$A$}
\rput[bl](4.4,3.145){1}
\rput[bl](9.6,3.145){2}
\rput[bl](0.0,1.545){$B$}
\rput[bl](2.1,1.545){3}
\rput[bl](4.2,1.545){1}
\rput[bl](9.1,1.545){3}
\rput[bl](12.4,1.545){1}
\rput[bl](0.0,-0.055){$A$}
\rput[bl](1.7,-0.055){3}
\rput[bl](4.6,-0.055){3}
\rput[bl](7.1,-0.055){1}
\rput[bl](9.7,-0.055){4}
\rput[bl](12.3,-0.055){1}
\rput[bl](14.0,-0.055){4}
\rput[bl](0.0,-1.655){$B$}
\rput[bl](1.1,-1.655){4}
\rput[bl](2.7,-1.655){2}
\rput[bl](4.6,-1.655){5}
\rput[bl](5.9,-1.655){4}
\rput[bl](7.6,-1.655){2}
\rput[bl](9.0,-1.655){4}
\rput[bl](10.8,-1.655){2}
\rput[bl](12.0,-1.655){5}
\rput[bl](14.2,-1.655){5}
\rput[bl](0.0,-3.255){$A$}
\rput[bl](2.9,-3.255){1}
\rput[bl](4.6,-3.255){1}
\rput[bl](7.7,-3.255){1}
\rput[bl](12.0,-3.255){1}
\psline[linecolor=black, linewidth=0.04](6.4,3.545)(3.2,2.745)
\psline[linecolor=black, linewidth=0.04](3.6,1.945)(4.4,1.145)
\psline[linecolor=black, linewidth=0.04](11.2,1.945)(13.2,1.145)
\psline[linecolor=black, linewidth=0.06, arrowsize=0.05291666666666667cm 2.0,arrowlength=1.4,arrowinset=0.0]{->}(2.0,0.345)(2.0,-0.455)
\psline[linecolor=black, linewidth=0.06, arrowsize=0.05291666666666667cm 2.0,arrowlength=1.4,arrowinset=0.0]{->}(4.4,0.345)(4.4,-0.455)
\psline[linecolor=black, linewidth=0.04](8.8,0.345)(10.0,-0.455)
\psline[linecolor=black, linewidth=0.04](13.6,0.345)(14.0,-0.455)
\psline[linecolor=black, linewidth=0.04](1.6,-1.255)(1.2,-2.055)
\psline[linecolor=black, linewidth=0.06, arrowsize=0.05291666666666667cm 2.0,arrowlength=1.4,arrowinset=0.0]{->}(4.4,-1.255)(4.4,-2.055)
\psline[linecolor=black, linewidth=0.04](6.4,-1.255)(6.0,-2.055)
\psline[linecolor=black, linewidth=0.04](9.6,-1.255)(9.2,-2.055)
\psline[linecolor=black, linewidth=0.06, arrowsize=0.05291666666666667cm 2.0,arrowlength=1.4,arrowinset=0.0]{->}(12.4,-1.255)(12.4,-2.055)
\psline[linecolor=black, linewidth=0.06, arrowsize=0.05291666666666667cm 2.0,arrowlength=1.4,arrowinset=0.0]{->}(14.0,-1.255)(14.0,-2.055)
\psline[linecolor=black, linewidth=0.06, arrowsize=0.05291666666666667cm 2.0,arrowlength=1.4,arrowinset=0.0]{->}(12.4,-2.855)(12.4,-3.655)
\psline[linecolor=black, linewidth=0.06, arrowsize=0.05291666666666667cm 2.0,arrowlength=1.4,arrowinset=0.0]{->}(7.6,-2.855)(7.6,-3.655)
\psline[linecolor=black, linewidth=0.06, arrowsize=0.05291666666666667cm 2.0,arrowlength=1.4,arrowinset=0.0]{->}(4.4,-2.855)(4.4,-3.655)
\psline[linecolor=black, linewidth=0.06, arrowsize=0.05291666666666667cm 2.0,arrowlength=1.4,arrowinset=0.0]{->}(2.8,-2.855)(2.8,-3.655)
\psline[linecolor=black, linewidth=0.06, arrowsize=0.05291666666666667cm 2.0,arrowlength=1.4,arrowinset=0.0]{->}(2.4,-1.255)(2.8,-2.055)
\psline[linecolor=black, linewidth=0.06, arrowsize=0.05291666666666667cm 2.0,arrowlength=1.4,arrowinset=0.0]{->}(2.8,1.945)(2.0,1.145)
\psline[linecolor=black, linewidth=0.06, arrowsize=0.05291666666666667cm 2.0,arrowlength=1.4,arrowinset=0.0]{->}(7.2,-1.255)(7.6,-2.055)
\psline[linecolor=black, linewidth=0.06, arrowsize=0.05291666666666667cm 2.0,arrowlength=1.4,arrowinset=0.0]{->}(10.4,-1.255)(10.8,-2.055)
\psline[linecolor=black, linewidth=0.06, arrowsize=0.05291666666666667cm 2.0,arrowlength=1.4,arrowinset=0.0]{->}(8.0,0.345)(6.8,-0.455)
\psline[linecolor=black, linewidth=0.06, arrowsize=0.05291666666666667cm 2.0,arrowlength=1.4,arrowinset=0.0]{->}(12.8,0.345)(12.4,-0.455)
\psline[linecolor=black, linewidth=0.06, arrowsize=0.05291666666666667cm 2.0,arrowlength=1.4,arrowinset=0.0]{->}(10.4,1.945)(8.4,1.145)
\psline[linecolor=black, linewidth=0.06, arrowsize=0.05291666666666667cm 2.0,arrowlength=1.4,arrowinset=0.0]{->}(7.2,3.545)(10.8,2.745)
\end{pspicture}

}
\end{center}

In this setting finding the \opath\ for the two players is not
an optimization problem as dealt with in combinatorial optimization
but rather the identification
of two sequences of decisions for the two players
fulfilling a certain property in the game tree.
Clearly, such a \opath\ solution can be computed in exponential time
by exploring the full game tree. 

It is the main goal of this paper to
study the complexity status of finding this \opath.
In particular, we want to establish the hardness of computation for general graphs
and identify special graph classes where a \opath\ is either still
hard to determine or can be found in polynomial time without exploring the exponential size game tree.

\medskip
Comparing the outcome of our game, i.e.\ the total cost of \opath,
with the cost of the shortest connection obtained by a cooperative strategy
we can consider the {\em Price of Anarchy} (cf.~\cite{nrt07})
defined by the ratio between these two values.
However, as illustrated by the following example it is easy to see that the Price of Anarchy can become arbitrarily large.  

\begin{example}\label{ex:poa}
In the graph depicted below there are only two feasible solutions with meeting points $v_1$ or $v_2$ and player $A$ can decide between the two.
The cheapest connection would use $v_1$ with total cost $3$ while
the selfish optimal solution of $A$ would use $v_2$ with total cost $1+M$,
i.e.\ the Price of Anarchy is $\frac{M+1}{3}$ which can be become arbitrarily large.
\end{example}

\begin{center}
\psscalebox{0.8 0.8} 
{
\begin{pspicture}(0,-1.9)(7.32,1.9)
\psdots[linecolor=black, dotsize=0.16](3.6,1.7)
\psdots[linecolor=black, dotsize=0.16](3.6,-1.5)
\psdots[linecolor=black, dotsize=0.16](6.8,0.1)
\psdots[linecolor=black, dotsize=0.16](0.4,0.1)
\psline[linecolor=black, linewidth=0.04, arrowsize=0.15cm 2.0,arrowlength=1.4,arrowinset=0.0]{->}(0.4,0.1)(3.6,1.7)
\psline[linecolor=black, linewidth=0.04, arrowsize=0.15cm 2.0,arrowlength=1.4,arrowinset=0.0]{->}(0.4,0.1)(3.6,-1.5)
\psline[linecolor=black, linewidth=0.04, arrowsize=0.15cm 2.0,arrowlength=1.4,arrowinset=0.0]{->}(6.8,0.1)(3.6,1.7)
\psline[linecolor=black, linewidth=0.04, arrowsize=0.15cm 2.0,arrowlength=1.4,arrowinset=0.0]{->}(6.8,0.1)(3.6,-1.5)
\rput[bl](0.0,0.1){$s$}
\rput[bl](7.1,0.1){$t$}
\rput[bl](1.6,1.2){2}
\rput[bl](5.6,1.2){1}
\rput[bl](1.6,-1.1){1}
\rput[bl](5.6,-1.1){M}
\rput[bl](3.55,1.8){$v_1$}
\rput[bl](3.55,-1.9){$v_2$}
\end{pspicture}
}
\end{center}

\medskip
An intuitive approach to the problem may suggest that each agent should
reach the meeting point via a {\em shortest path} from its origin
using a certain number of edges resulting from the iterative selection process. 
However, the following example shows that the restriction
to shortest paths from both origins to every possible meeting point
under any possible number of rounds may miss the optimal strategy.

\begin{example}
Consider the graph represented in the following figure.
All unlabelled edges have cost $0$. 
For sake of clarity we number only some of the vertices.
Edges with two arrows represent two edges in either direction.

$A$ starts the game by moving from $s$ to $v_1$.
If $B$ chooses the edge $(t, v_7)$, 
$A$ can move to $v_2$ and to $v_3$. 
This forces $B$ to move via $v_4$ to the meeting point $v_2$
implying cost $M$ for both $A$ and $B$.
However, $B$ can do better by starting with $(t, v_6)$.
If $A$ still moves to $v_2$, then $B$ enters the detour via $v_8$,
which forbids $A$ to take the path to $v_3$ but forces $A$
to go via $v_4$ and $v_5$ to the meeting point $v_6$
at a cost of $2M+4$, while $B$ has cost $6$.
A better option for $A$ would be the more expensive edge to $v_9$.
Now there is no need for $B$ to use the detour as a ``deterrent''.
Instead, $B$ can go via $v_5$ to the meeting point $v_4$ 
with cost $2M+2$ for $A$ and cost $4$ for $B$
which is the \opath\ of the game.

Note that neither $A$ nor $B$ can use their shortest paths 
with three edges from $s$ (resp.\ $t$) 
via $v_2$ (resp.\ $v_7$) to $v_4$.
\end{example}
\begin{center}
\psscalebox{0.9 0.9} 
{
\begin{pspicture}(0,-2.6788464)(12.84,2.6788464)
\psdots[linecolor=black, dotsize=0.16](0.4,-0.19999985)
\psdots[linecolor=black, dotsize=0.16](2.8,-0.19999985)
\psdots[linecolor=black, dotsize=0.16](5.2,1.0000001)
\psdots[linecolor=black, dotsize=0.16](5.2,-1.3999999)
\psdots[linecolor=black, dotsize=0.16](7.6,-0.19999985)
\psdots[linecolor=black, dotsize=0.16](9.2,-0.19999985)
\psdots[linecolor=black, dotsize=0.16](10.8,-0.19999985)
\psdots[linecolor=black, dotsize=0.16](12.4,-0.19999985)
\psline[linecolor=black, linewidth=0.04, arrowsize=0.05291666666666667cm 2.0,arrowlength=1.4,arrowinset=0.0]{->}(0.4,-0.19999985)(2.7,-0.19999985)
\psline[linecolor=black, linewidth=0.04, arrowsize=0.05291666666666667cm 2.0,arrowlength=1.4,arrowinset=0.0]{->}(2.8,-0.19999985)(5.1,1.0000001)
\psline[linecolor=black, linewidth=0.04, arrowsize=0.05291666666666667cm 2.0,arrowlength=1.4,arrowinset=0.0]{->}(2.8,-0.19999985)(5.1,-1.3999999)
\psline[linecolor=black, linewidth=0.04, arrowsize=0.05291666666666667cm 2.0,arrowlength=1.4,arrowinset=0.0]{->}(5.2,-1.3999999)(7.5,-0.19999985)
\psdots[linecolor=black, dotsize=0.16](9.2,-1.7999998)
\psdots[linecolor=black, dotsize=0.16](10.8,-1.7999998)
\psline[linecolor=black, linewidth=0.04, arrowsize=0.05291666666666667cm 2.0,arrowlength=1.4,arrowinset=0.0]{<-}(10.8,-0.19999985)(12.4,-0.19999985)
\psline[linecolor=black, linewidth=0.04, arrowsize=0.05291666666666667cm 2.0,arrowlength=1.4,arrowinset=0.0]{<-}(10.8,-1.7999998)(10.8,-0.19999985)
\psdots[linecolor=black, dotsize=0.16](10.0,-2.6)
\psline[linecolor=black, linewidth=0.04, arrowsize=0.05291666666666667cm 2.0,arrowlength=1.4,arrowinset=0.0]{<-}(9.2,-1.7999998)(10.0,-2.6)
\psline[linecolor=black, linewidth=0.04, arrowsize=0.05291666666666667cm 2.0,arrowlength=1.4,arrowinset=0.0]{<-}(10.0,-2.6)(10.8,-1.7999998)
\psdots[linecolor=black, dotsize=0.16](9.2,1.0000001)
\psdots[linecolor=black, dotsize=0.16](10.8,1.0000001)
\psline[linecolor=black, linewidth=0.04, arrowsize=0.05291666666666667cm 2.0,arrowlength=1.4,arrowinset=0.0]{<-}(10.8,1.0000001)(12.4,-0.19999985)
\psline[linecolor=black, linewidth=0.04, arrowsize=0.05291666666666667cm 2.0,arrowlength=1.4,arrowinset=0.0]{<-}(9.2,1.0000001)(10.8,1.0000001)
\psline[linecolor=black, linewidth=0.04, arrowsize=0.05291666666666667cm 2.0,arrowlength=1.4,arrowinset=0.0]{<-}(7.6,-0.19999985)(9.2,1.0000001)
\psline[linecolor=black, linewidth=0.04, arrowsize=0.05291666666666667cm 2.0,arrowlength=1.4,arrowinset=0.0]{<-}(9.2,-0.19999985)(9.2,-1.7999998)
\psdots[linecolor=black, dotsize=0.16](5.2,1.8000002)
\psdots[linecolor=black, dotsize=0.16](5.2,2.6000001)
\psdots[linecolor=black, dotsize=0.16](6.0,2.6000001)
\psline[linecolor=black, linewidth=0.04, arrowsize=0.05291666666666667cm 2.0,arrowlength=1.4,arrowinset=0.0]{<-}(5.2,1.8000002)(5.2,1.0000001)
\psline[linecolor=black, linewidth=0.04, arrowsize=0.05291666666666667cm 2.0,arrowlength=1.4,arrowinset=0.0]{<-}(5.2,2.6000001)(5.2,1.8000002)
\psline[linecolor=black, linewidth=0.04, arrowsize=0.05291666666666667cm 2.0,arrowlength=1.4,arrowinset=0.0]{<-}(6.0,2.6000001)(5.2,2.6000001)
\rput[bl](0.0,-0.19999985){$s$}
\rput[bl](12.72,-0.27999985){$t$}
\rput[bl](2.68,-0.59999985){$v_1$}
\rput[bl](5.08,0.58000016){$v_2$}
\rput[bl](5.08,-1.8){$v_9$}
\rput[bl](4.7,1.7400001){$v_3$}
\rput[bl](7.44,0.120000154){$v_4$}
\rput[bl](9.08,0.080000155){$v_5$}
\rput[bl](10.66,0.06000015){$v_6$}
\rput[bl](10.82,1.2200001){$v_7$}
\rput[bl](10.9,-1.9799999){$v_8$}
\rput[bl](8.4,-0.59999985){$2$}
\rput[bl](10.0,-0.59999985){$2$}
\rput[bl](8.94,-1.1599998){$2$}
\rput[bl](9.36,-2.4199998){$2$}
\rput[bl](10.46,-2.4199998){$2$}
\rput[bl](11.0,-1.1599998){$2$}
\rput[bl](3.64,0.5000002){$M$}
\rput[bl](6.44,0.54000014){$M$}
\rput[bl](3.2,-1.1999998){$M+1$}
\rput[bl](6.3,-1.1599998){$M+1$}
\psline[linecolor=black, linewidth=0.04, arrowsize=0.05291666666666667cm 2.07,arrowlength=1.4,arrowinset=0.0]{<->}(5.26,1.0000001)(7.52,-0.17999984)
\psline[linecolor=black, linewidth=0.04, arrowsize=0.05291666666666667cm 2.07,arrowlength=1.4,arrowinset=0.0]{<->}(7.72,-0.19999985)(9.08,-0.17999984)
\psline[linecolor=black, linewidth=0.04, arrowsize=0.05291666666666667cm 2.07,arrowlength=1.4,arrowinset=0.0]{<->}(9.32,-0.15999985)(10.72,-0.17999984)
\end{pspicture}
}
\end{center}

\bigskip
In Section~\ref{sec:hard} we will show that this failure of the straightforward
idea can not be overcome at all since the underlying decision problem
is \psp\ on general graphs.
Note that by using depth-first search
the problem can be seen to be in $\sf{PSPACE}$ since the height
of the game tree is bounded by $2|E|$. 
In every node currently under consideration we keep a list of decisions 
still remaining to be explored 
and among all previously explored options starting in this node 
we keep the cost of the currently preferred subpath.
By proceeding in a depth-first manner there are at most $2|E|$ vertices on the path from
the root of the game tree to the current vertex for which the information is kept.

\subsection{Related Literature and Our Contribution}
\label{sec:lit}

Recently, \citet{dps15a} (see also \citet{dps14}) considered the closely related 
\spg where two players move together from a joint origin to a joint destination and take turns in selecting the next edge. 
The player choosing an edge has to pay its cost, and each of the players is aiming at 
minimizing its own total cost. 
Several complexity results are given for \spg including
\psp ness for directed bipartite graphs and a linear time algorithm for directed acyclic graphs. 

In contrast, the results of our work show that not only \scg\ is \psp\ in directed bipartite graphs, but also remains strongly $\np$-hard in the case of directed acyclic bipartite graphs, irrespective of whether or not the game is restricted to simple paths, i.e.\ (R3) is imposed (see Sections~\ref{sec:pspace} and \ref{sec:acyclic}). 
Searching for polynomially solvable special cases in Section~\ref{sec:cactus}, 
we move on from trees, where \scg\ is trivial to solve, 
to a slightly more general graph class,
namely cactus graphs, 
i.e.\ graphs where each edge is contained in at most one simple cycle.
On the positive side, we provide a polynomial time algorithm for the case of cactus graphs 
when (R3) is imposed (see Section~\ref{sec:cactussimple}), 
while we can show that \scg\ is already strongly $\np$-hard on cactus graphs
without condition (R3) (see Section~\ref{sec:cactushard}). 
Thus, we can describe a rather precise boundary between polynomially solvable 
and hard graph classes.
Note that the latter result is somewhat surprising 
given the fact that the related \spg is easy to solve on cactus graphs 
(see \citet{dps15a}). 

\medskip
Another related game is \geo\ (see \citet{sch78}), and in particular its variants 
\pageo and \pvgeo (see \citet{frsi93}). \pageo and \pvgeo both are two-player games played on a directed graph (there are no costs involved). 
In these games, starting from its designated homebase vertex, 
each of the two players forms a path by alternately moving along edges 
(in  \pageo only not-yet traversed edges can be taken, 
in \pvgeo only not-yet visited vertices can be moved to).
The first player unable to perform another move is the loser of the game. 
Note that the restrictions defining the two games share the spirit of (R2) and (R3). 
Both of these games are known to be \psp\  for directed bipartite graphs and $\np$-hard for directed acyclic graphs, while they are solvable in polynomial time for trees (see \citet{frsi93}). 
Recall that our problem \scg\ turns out to share this complexity behavior and, in particular, is also  polynomial time solvable for trees (see Proposition~\ref{th:tree}). 
However, not only do we enrich the analysis of \scg\ by considering cactus graphs, but we also  show that imposing or relaxing (R3) yields in fact different computational complexity results on this graph class.

In the classical game \geo, the two players 
move together along unused edges starting from the same designated starting vertex
and take turns in selecting the next edge. 
Again, the first player unable to choose an edge loses the game.
Besides the well-known \psp ness result of \citet{sch78} 
several complexity results have been provided for \geo\ and its variants such as  
\vgeo (each vertex can be visited at most once); see, e.g., \citet{lisi80}, \citet{frsi93}, \citet{fsu93}, and \citet{bo93}. In particular, both \geo\ and \vgeo are known to be solvable in polynomial time when played on cactus graphs (cf.~\citet{bo93}).

Another variant of two players taking turns in their decision on a discrete
optimization problem and each of them optimizing its own objective function
was considered for the Subset Sum problem in \citet{dnp13}.

\section{Hardness Results for \scg}
\label{sec:hard}

Let us first define formally the decision problem associated to \scg.
\begin{quote}
\pathdec

\textbf{Input:} A directed graph $G=(V,E)$ with nonnegative edge lengths $c_{e}$
for $e\in E$, two dedicated vertices $s,t$ in $V$, and nonnegative values $C_{A}, C_{B}$.

\textbf{Question:} Does \opath\ induce a path for agent $A$ 
with total cost $\leq C_{A}$ and a path for $B$ with total cost $\leq C_{B}$? 
\end{quote}

\subsection{\psp ness for directed bipartite graphs}
\label{sec:pspace}

Our proof is based on a reduction from \noun{Quantified $\leq4$-Sat}.
By \noun{Quantified $\leq k$-Sat,} we refer to the variation of \noun{Quantified
$3$-Sat} where each clause consists of at most  $k$ literals (instead
of exactly three). 

\begin {theorem}\noun{Quantified $\leq4$-Sat} is \psp~even when
each clause contains exactly one universal literal.\end{theorem}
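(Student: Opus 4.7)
The plan is to reduce from \noun{Quantified $3$-Sat}, which is well-known to be \psp. Given an instance $\Phi = Q_1 x_1 \cdots Q_n x_n\,\phi$ with $\phi$ in 3-CNF and $m$ clauses, I will build in polynomial time an equivalent \noun{Quantified $\leq 4$-Sat} instance $\Phi'$ in which each clause contains exactly one universal literal.

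The construction proceeds in two stages. \emph{First}, for every universally quantified variable $u$ of $\Phi$, introduce a fresh existential variable $v_u$ inserted into the prefix immediately after the corresponding $\forall u$, together with the two clauses $(u \lor \neg v_u)$ and $(\neg u \lor v_u)$. These two ``definitional'' clauses together express $u \leftrightarrow v_u$, and each of them already has exactly one universal literal and one existential literal. Next, in the original matrix $\phi$, replace every universal literal $u$ (resp.\ $\neg u$) by the corresponding $v_u$ (resp.\ $\neg v_u$). After this substitution the rewritten original clauses $C_1',\dots,C_m'$ contain no universal literal and still at most three literals each.

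\emph{Second}, to force the required universal literal into each $C_i'$, introduce for every $i$ a fresh universal variable $u_i$, placed at the outermost position of the prefix, and replace $C_i'$ by the pair of clauses $(C_i' \lor u_i)$ and $(C_i' \lor \neg u_i)$. Each of these contains exactly one universal literal and at most four literals in total, so $\Phi'$ is a legitimate \noun{Quantified $\leq 4$-Sat} instance of polynomial size.

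For correctness I will argue both directions. If $\Phi$ is true, fix Skolem functions for its existentials and additionally set $v_u := u$; then the definitional clauses hold, each $C_i'$ coincides in value with the original $C_i$ (hence is true), and both $(C_i' \lor u_i)$ and $(C_i' \lor \neg u_i)$ are satisfied regardless of $u_i$. Conversely, if $\Phi'$ is true, the definitional clauses force $v_u = u$ along every play, and fixing each dummy $u_i$ to \emph{false} turns $(C_i' \lor u_i)$ into $C_i'$, which therefore must hold; reading off the values of the original existentials from the winning strategy then yields a winning strategy for $\Phi$. The main obstacle I anticipate is ensuring that the rearranged quantifier prefix really preserves the game-theoretic semantics: this is why $v_u$ must be placed immediately after $u$ (so that the existential player can actually copy the value of $u$ into $v_u$) and why each dummy universal $u_i$ may harmlessly be placed outermost (so that no original existential loses access to a variable it used to depend on).
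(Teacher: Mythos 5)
Your proof is correct, but it takes a genuinely different route from the paper. The paper starts from the stronger known result (cited from the literature) that \noun{Quantified $\leq3$-Sat} remains \psp\ when every clause already has \emph{at most} one universal literal; then the only work left is to pad the purely existential clauses $C_j$ with a fresh universal $u_j$ via the pair $C_j\vee u_j$, $C_j\vee\bar{u}_j$, interleaving dummy existentials $e_j$ so that the prefix keeps its strict $\exists\forall$ alternation. You instead reduce from plain \noun{Quantified $3$-Sat} and first eliminate universal literals from all clauses by the copy-variable gadget ($\exists v_u$ placed after $\forall u$, with the definitional clauses $u\vee\neg v_u$ and $\neg u\vee v_u$), and only then pad each now purely existential clause with a dummy universal. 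Your version is more self-contained -- it does not need the specialized hardness result for the one-universal-literal restriction of \noun{Quantified $\leq3$-Sat} -- at the cost of a slightly larger construction and the extra equivalence argument for the copies, which you carry out correctly (the definitional clauses force $v_u=u$ in any winning strategy precisely because $v_u$ is quantified after $u$, and the dummy universals are harmless wherever they sit since the clause pair is equivalent to $C_i'$ for either value). One small mismatch with the paper's conventions: \noun{Quantified $\leq4$-Sat} is defined there as a variant of \noun{Quantified $3$-Sat} with a strictly alternating $\exists\forall$ prefix, which is exactly why the paper introduces the dummy existentials $e_j$; your prefix (a block of outermost universals, and consecutive existentials after each $\forall u\,\exists v_u$ insertion) is not alternating. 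This is cosmetic -- inserting dummy quantified variables that do not occur in the matrix restores alternation without affecting the one-universal-literal property -- but it should be said explicitly.
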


\ifdefined\FULL
\textbf{Proof.} It can be concluded from~\cite{floegl} that \noun{Quantified
$\leq3$-Sat }is \psp~when restricted to quantified Boolean formulas
with at most one universal literal per clause. Let $F$ be such a
formula, i.e., 
\[
F=\exists x_{1}\forall x_{2}\ldots\exists x_{n-1}\forall x_{n}:\phi(x_{1},\ldots,x_{n})
\]
with $\phi(x_{1},\ldots,x_{n})=C_{1}\wedge C_{2}\ldots\wedge C_{m}$
such that $C_{j}$ contains at most three literals of which at most
one is universal. \\
Let $\gamma$ be the set of clauses among $C_{1},\ldots,C_{m}$ that
contain existential literals only. If $\gamma=\emptyset$, there is
nothing to show. Otherwise, w.l.o.g.\ let $\gamma=\{C_{1},C_{2},\ldots,C_{g}\}$,
for some $g\in\mathbb{N}$. For each $C_{j}\in\gamma$ introduce a
new (dummy) existential variable $e_{j}$ and a new universal variable
$u_{j}$, and define new clauses $C'_{j,1}:=C_{j}\vee u_{j}$ and
$C'_{j,2}:=C_{j}\vee\bar{u}_{j}$. Let 
\[
F'=\exists x_{1}\forall x_{2}\ldots\exists x_{n-1}\forall x_{n}\exists e_{1}\forall u_{1}\exists e_{2}\forall u_{2}\ldots\exists e_{g}\forall u_{g}:\phi'(x_{1},\ldots,x_{n})
\]
where $\phi'(x_{1},\ldots,x_{n})$ is derived from $\phi(x_{1},\ldots,x_{n})$
by replacing each $C_{j}\in\gamma$ by the clauses $C'_{j,1}$ and
$C'_{j,2}$. \\
Clearly, this transformation is polynomial, and $F$ is true if and
only if $F'$ is true. Note that each clause contains at most $4$
literals exactly one of which is universal.\qed 

\else{}
\textbf{Proof.} By reduction from  \noun{Quantified $\leq3$-Sat } 
restricted to quantified Boolean formulas with at most one universal literal per clause,
which is \psp~from~\cite{floegl}.
A detailed proof is given in~\cite{dpsconn}.\qed
\fi{}

\begin{theorem}\label{theo:pspace} \pathdec\ is \psp~for directed bipartite graphs
even if all costs are bounded by a constant.
\end{theorem}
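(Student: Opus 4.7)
The plan is to establish \psp-hardness by reducing from the restricted version of \noun{Quantified $\leq 4$-Sat} provided by the preceding theorem, where every clause contains exactly one universal literal and at most three existential literals. Membership in \sf{PSPACE} already follows from the depth-first-search argument given in Section~\ref{sec:game}, so the work lies in the hardness reduction. Given an instance $F=\exists x_{1}\forall x_{2}\ldots\exists x_{n-1}\forall x_{n}\colon\phi$, I would build a directed bipartite graph whose game tree simulates the alternation of quantifiers: player $A$ is the existential player and player $B$ is the universal one, and the structure of the graph enforces that the $i$-th move of each player corresponds to choosing a truth value of the appropriate variable $x_{i}$.

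The graph consists of two parallel \emph{variable-selection tracks} (one starting at $s$ for $A$, one starting at $t$ for $B$) followed by a \emph{clause-verification gadget} in which the two tracks converge. On $A$'s track I place a diamond gadget for every existential variable $x_{2k-1}$, whose two parallel edges of cost $0$ represent setting $x_{2k-1}$ to true or to false, while for the universal positions of the quantifier alternation $A$ simply traverses a fixed bipartite ``waiting'' segment (so that $A$'s moves are spent in lockstep with $B$'s). Symmetrically, $B$'s track has a diamond gadget for each universal $x_{2k}$ and waiting segments at the existential positions. Because all diamonds and waiting segments have even length, the resulting graph is bipartite, and all edge costs so far are $0$. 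After both tracks have been fully traversed, the players are located at vertices whose identity encodes the full assignment to $x_1,\dots,x_n$.

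The clause-verification gadget is where $B$ tries to expose a falsified clause. Here $B$ is offered an edge of cost $0$ into each clause-subgadget $G_{C_j}$, and once inside $G_{C_j}$ the structure forces $B$ to ``pay'' a small constant only if \emph{none} of the at-most-four literals of $C_j$ is made true by the assignment (literals that are true block the cheap escape by being wired into $A$'s track, using (R1) together with the unique-universal-literal property to keep the wiring bipartite and of bounded degree). Thus $B$ can keep its cost below $C_B$ iff some clause is falsified, i.e.\ iff the assignment does not satisfy $\phi$; otherwise $B$ is forced through an expensive edge (the only edge of cost bounded by a small constant but strictly larger than $0$). The thresholds $C_A$ and $C_B$ are then set to exactly the costs incurred under a true formula. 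A standard induction on the quantifier prefix shows that $A$ achieves the required cost bound iff $F$ evaluates to true: existential choices in the diamonds correspond to existential witnesses, universal choices to the adversary's counter-moves, and the restriction ``exactly one universal literal per clause'' is what keeps each clause-subgadget bipartite with constant cost values.

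The main obstacle I expect is the simultaneous reconciliation of three constraints in the clause gadget: preserving bipartiteness (so that no odd cycles creep in when connecting the at-most-four literal-wires of a clause back to the two tracks), using only constantly many distinct edge costs, and respecting restriction (R1) so that a player is never forced into a dead end prematurely. The single-universal-literal normal form secured by the previous theorem is tailored precisely for this: it lets each clause-subgadget be split into a universal half (wired to $B$'s track) and an existential half (wired to $A$'s track), making the parity bookkeeping go through. Once the gadget is verified to have these properties, bipartiteness of the whole graph and constant-bounded costs are immediate, completing the reduction.
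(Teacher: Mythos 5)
Your overall plan coincides with the paper's: both reduce from \noun{Quantified $\leq4$-Sat} with exactly one universal literal per clause, both let the players fix the existential respectively universal variables on two disjoint tracks traversed in lockstep (the paper uses chained six-edge ``hexagons'' rather than diamonds plus waiting segments, which is cosmetic), both then let $B$ challenge a clause through a constant-cost gadget, and both take \textsf{PSPACE}-membership from the depth-first-search argument. The problem is that the part you defer --- ``once the gadget is verified to have these properties \dots completing the reduction'' --- is exactly where the difficulty of the theorem sits. You never construct the clause gadget; you only list properties it should have, and your description of it is not even internally consistent (first $B$ ``pays'' only if no literal of $C_j$ is true, then true literals ``block the cheap escape'', i.e.\ make $B$ pay). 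In the paper the satisfaction test is implemented concretely: each clause $C_j$ gets vertices $C_{j,A},C_{j,B},q_{j,A},q_{j,B},w_j,z_j$ with edge costs $1,1,2.1$ plus $\eps$-edges, and whether a literal is true is detected by whether a player can still re-enter the corresponding half of a variable hexagon, i.e.\ via edges not yet used under (R2) combined with the reachability requirement (R1) --- not by a generic ``wiring into $A$'s track''.

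Second, the claim that ``a standard induction on the quantifier prefix'' finishes the argument underestimates the game-theoretic side. The outcome is a subgame perfect equilibrium of a cost game under the feasibility rule (R1), so one must exclude all profitable deviations, not only analyze the intended plays: a player may leave its variable track early and jump into the clause region, or may try to strand the opponent so that the opponent's intended continuation becomes infeasible. The paper needs Observation~1 (reaching some $q_{j,A}$ without traversing $y$ costs $A$ at least $3.1+2\eps$), a lower bound on $c(A)$, and a multi-level case analysis (Cases I and II with subcases IIa, IIb, including moves that re-enter the hexagons) to pin down the equilibrium costs in both directions of the equivalence. Without an explicit gadget and this deviation analysis, your proposal reproduces the paper's outline but does not yet prove the theorem.
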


\ifdefined\FULL

\proof Let $\mathcal{I}$ be an instance of \noun{Quantified $\leq4$-Sat}
with set $X=\{x_{1},\ldots,x_{n}\}$ of variables and set $C=\{C_{1},\ldots,C_{m}\}$
of clauses, with exactly one universal literal per clause. We reduce
$\mathcal{I}$ to an instance $\mathcal{M}$ of \pathdec\ as follows. 

\textbf{\emph{\noun{Part 1: Construction of  Graph $G$.}}} \\
First, we introduce a directed graph $G=(V,E)$, which again will
be $2$-colored in order to verify that $G$ is bipartite. An illustration
of $G$ is given in Fig.~\ref{fig:varIIdirected}, where, for the
sake of readability, the attention is restricted to the clause $C_{1}=(x_{1}\vee\bar{x}_{2}\vee x_{3})$.
To formally create $G$ from $\mathcal{I}$, we introduce 
\begin{itemize}
\item for each $i$, $1\leq i\leq n$, a ``hexagon'', i.e., 

\begin{itemize}
\item green vertices $v_{i,0},v_{i,2},v_{i,4}$ 
\item red vertices $v_{i,1},v_{i,3},v_{i,5}$
\item edges $x_{i}:=(v_{i,0},v_{i,1})$, $(v_{i,1},v_{i,2})$, and $(v_{i,2},v_{i,3})$
\item edges $\bar{x}_{i}:=(v_{i,0},v_{i,5})$, $(v_{i,5},v_{i,4})$ and
$(v_{i,4},v_{i,3})$
\end{itemize}
\item for each $i$, $1\leq i\leq n-1$, an edge $(v_{i,3},v_{i+1,0})$
\item green vertices $y,d_{B}$ and red vertex $d_{A}$
\item edges $(y,d_{A})$ and $(v_{n,3},d_{B})$
\item for each $j$, $1\leq j\leq m$

\begin{itemize}
\item green vertices $C_{j,A},q_{j,B},w_{j}$ and red vertices $C_{j,B},q_{j,A},z_{j}$
\item edges $(d_{A},C_{j,A})$, $(C_{j,A},q_{j,A})$, $(q_{j,A},w_{j})$,
$(w_{j},z_{j})$, $(z_{j},q_{j,B})$
\item edges $(d_{B},C_{j,B})$, $(C_{j,B},q_{j,B})$, $(q_{j,B},z_{j})$,
$(z_{j},w_{j})$
\item edges $(C_{j,A},v_{i,1})$ and $(v_{i,2},q_{j,A})$ if $i$ is odd
and $\bar{x}_{i}\in C_{j}$
\item edges $(C_{j,A},v_{i,5})$ and $(v_{i,4},q_{j,A})$ if $i$ is odd
and $x_{i}\in C_{j}$
\item edge $(q_{j,B},v_{i,5})$ if $i$ is even and $\bar{x}_{i}\in C_{j}$
\item edge $(q_{j,B},v_{i,1})$ if $i$ is even and $x_{i}\in C_{j}$
\end{itemize}
\end{itemize}
The edge costs are given by, $c(w_{j},z_{j})=c(z_{j},w_{j})=1$, $c(z_{j},q_{j,B})=2.1$
for $1\leq j\leq m$. \kommentar{For $1\leq i\leq n$ with $(v_{i,2},q_{j,A})\in E$
resp.\ $(v_{i,4},q_{j,A})\in E$, let $c(v_{i,2},q_{j,A})=c(v_{i,4},q_{j,A})=0.5$,
$1\leq j\leq m$. }Finally, $c(e)=\eps$ with $\strut$$\eps<\tfrac{1}{2n+6}$
for each of the remaining edges $e\in E$. \\
Note that $G$ is bipartite, since each edge links a green vertex
with a red vertex. In instance $\mathcal{M}$ of \scg, player
$A$ starts in vertex $s:=v_{1,0}$ and $B$ starts in vertex $t:=v_{2,0}$.
It is not hard to see that, by construction, in order to establish
a meeting point at least one of the players has to traverse an edge
of cost at $1$. In other words, $c(P)<1$ implies $c(Q)>1$ for $P,Q\in\{A,B\}$,
$P\not=Q$.

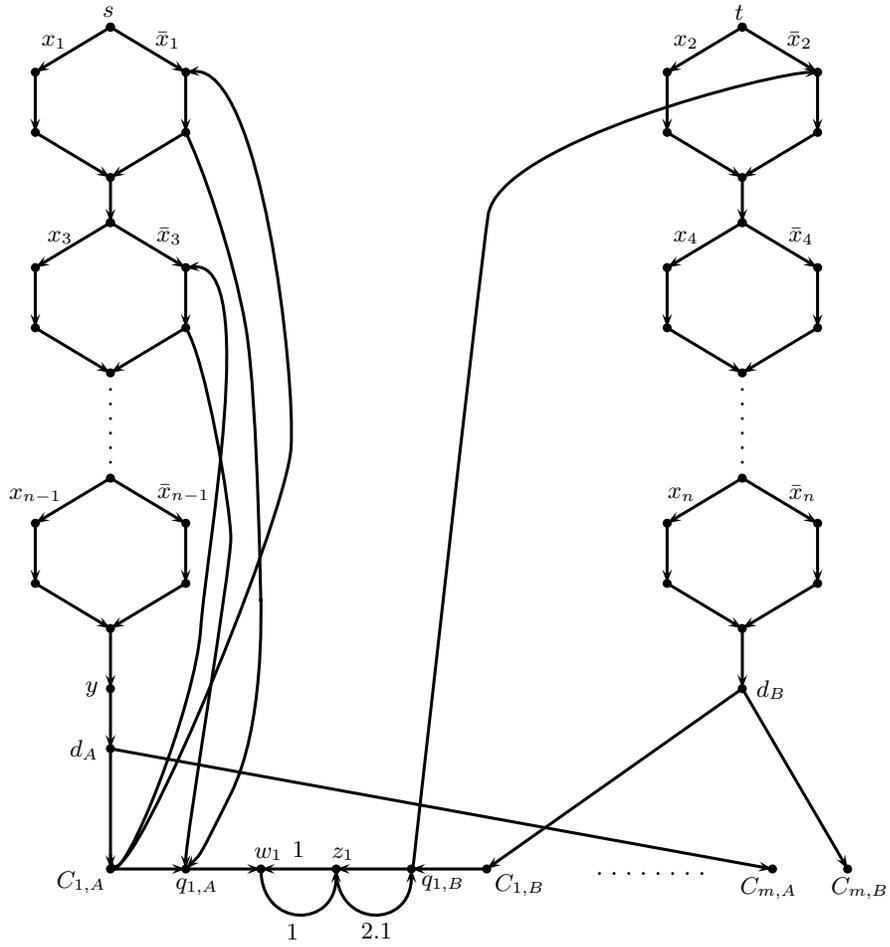
\begin{figure}
\begin{center}
\scalebox{1}
{

\begin{pspicture}(0,-6.309219)(13.28,6.3192186) \psline[linewidth=0.04cm,arrowsize=0.05291667cm 2.0,arrowlength=1.4,arrowinset=0.4]{->}(9.471875,5.320781)(9.471875,4.520781) \psdots[dotsize=0.12](10.471875,5.920781) \psdots[dotsize=0.12](10.471875,5.920781) \psline[linewidth=0.04cm,arrowsize=0.05291667cm 2.0,arrowlength=1.4,arrowinset=0.4]{->}(10.471875,5.920781)(9.471875,5.320781) \psline[linewidth=0.04cm,arrowsize=0.05291667cm 2.0,arrowlength=1.4,arrowinset=0.4]{<-}(10.471875,3.9207811)(11.471875,4.520781) \psline[linewidth=0.04cm,arrowsize=0.05291667cm 2.0,arrowlength=1.4,arrowinset=0.4]{<-}(11.471875,4.520781)(11.471875,5.320781) \psline[linewidth=0.04cm,arrowsize=0.05291667cm 2.0,arrowlength=1.4,arrowinset=0.4]{<-}(11.471875,5.320781)(10.471875,5.920781) \psdots[dotsize=0.12](11.471875,5.320781) \psdots[dotsize=0.12](11.471875,4.520781) \usefont{T1}{ppl}{m}{n} \rput(11.251875,3.1407812){\small $\bar{x}_4$} \psline[linewidth=0.04cm,arrowsize=0.05291667cm 2.0,arrowlength=1.4,arrowinset=0.4]{<-}(10.471875,3.9207811)(9.471875,4.520781) \psdots[dotsize=0.12](9.471875,5.320781) \psdots[dotsize=0.12](9.471875,4.520781) \psdots[dotsize=0.12](10.471875,3.9207811) \psline[linewidth=0.04cm,arrowsize=0.05291667cm 2.0,arrowlength=1.4,arrowinset=0.4]{->}(10.471875,3.9207811)(10.471875,3.3207812) \psline[linewidth=0.04cm,arrowsize=0.05291667cm 2.0,arrowlength=1.4,arrowinset=0.4]{->}(9.471875,2.7207813)(9.471875,1.9207813) \psdots[dotsize=0.12](10.471875,3.3207812) \psdots[dotsize=0.12](10.471875,3.3207812) \psline[linewidth=0.04cm,arrowsize=0.05291667cm 2.0,arrowlength=1.4,arrowinset=0.4]{->}(10.471875,3.3207812)(9.471875,2.7207813) \psline[linewidth=0.04cm,arrowsize=0.05291667cm 2.0,arrowlength=1.4,arrowinset=0.4]{<-}(10.471875,1.3207812)(11.471875,1.9207813) \psline[linewidth=0.04cm,arrowsize=0.05291667cm 2.0,arrowlength=1.4,arrowinset=0.4]{<-}(11.471875,1.9207813)(11.471875,2.7207813) \psline[linewidth=0.04cm,arrowsize=0.05291667cm 2.0,arrowlength=1.4,arrowinset=0.4]{<-}(11.471875,2.7207813)(10.471875,3.3207812) \psdots[dotsize=0.12](11.471875,2.7207813) \psdots[dotsize=0.12](11.471875,1.9207813) \psline[linewidth=0.04cm,arrowsize=0.05291667cm 2.0,arrowlength=1.4,arrowinset=0.4]{<-}(10.471875,1.3207812)(9.471875,1.9207813) \psdots[dotsize=0.12](9.471875,2.7207813) \psdots[dotsize=0.12](9.471875,1.9207813) \psdots[dotsize=0.12](10.471875,1.3207812) \psline[linewidth=0.04cm,linestyle=dotted,dotsep=0.16cm](10.471875,1.1207813)(10.471875,-0.07921875) \psline[linewidth=0.04cm,arrowsize=0.05291667cm 2.0,arrowlength=1.4,arrowinset=0.4]{->}(9.471875,-0.67921877)(9.471875,-1.4792187) \psdots[dotsize=0.12](10.471875,-0.07921875) \psline[linewidth=0.04cm,arrowsize=0.05291667cm 2.0,arrowlength=1.4,arrowinset=0.4]{->}(10.471875,-0.07921875)(9.471875,-0.67921877) \psline[linewidth=0.04cm,arrowsize=0.05291667cm 2.0,arrowlength=1.4,arrowinset=0.4]{<-}(10.471875,-2.0792189)(11.471875,-1.4792187) \psline[linewidth=0.04cm,arrowsize=0.05291667cm 2.0,arrowlength=1.4,arrowinset=0.4]{<-}(11.471875,-1.4792187)(11.471875,-0.67921877) \psline[linewidth=0.04cm,arrowsize=0.05291667cm 2.0,arrowlength=1.4,arrowinset=0.4]{<-}(11.471875,-0.67921877)(10.471875,-0.07921875) \psdots[dotsize=0.12](11.471875,-0.67921877) \psdots[dotsize=0.12](11.471875,-1.4792187) \psline[linewidth=0.04cm,arrowsize=0.05291667cm 2.0,arrowlength=1.4,arrowinset=0.4]{<-}(10.471875,-2.0792189)(9.471875,-1.4792187) \psdots[dotsize=0.12](9.471875,-0.67921877) \psdots[dotsize=0.12](9.471875,-1.4792187) \psdots[dotsize=0.12](10.471875,-2.0792189) \psline[linewidth=0.04cm,arrowsize=0.05291667cm 2.0,arrowlength=1.4,arrowinset=0.4]{->}(10.471875,-2.8792188)(7.071875,-5.2792187) \usefont{T1}{ppl}{m}{n} \rput(11.231875,5.7407813){\small $\bar{x}_2$} \usefont{T1}{ppl}{m}{n} \rput(11.271875,-0.27921876){\small $\bar{x}_{n}$} \usefont{T1}{ppl}{m}{n} \rput(9.661875,-0.29921874){\small ${x}_{n}$} \usefont{T1}{ppl}{m}{n} \rput(9.721875,3.1207812){\small ${x}_{4}$} \usefont{T1}{ppl}{m}{n} \rput(9.721875,5.7207813){\small ${x}_{2}$} \usefont{T1}{ppl}{m}{n} \rput(10.431875,6.1207814){\small $t$} 
\psline[linewidth=0.04cm,arrowsize=0.05291667cm 2.0,arrowlength=1.4,arrowinset=0.4]{->}(1.071875,5.320781)(1.071875,4.520781) \psdots[dotsize=0.12](2.071875,5.920781) \psdots[dotsize=0.12](2.071875,5.920781) \psline[linewidth=0.04cm,arrowsize=0.05291667cm 2.0,arrowlength=1.4,arrowinset=0.4]{->}(2.071875,5.920781)(1.071875,5.320781) \psline[linewidth=0.04cm,arrowsize=0.05291667cm 2.0,arrowlength=1.4,arrowinset=0.4]{<-}(2.071875,3.9207811)(3.071875,4.520781) \psline[linewidth=0.04cm,arrowsize=0.05291667cm 2.0,arrowlength=1.4,arrowinset=0.4]{<-}(3.071875,4.520781)(3.071875,5.320781) \psline[linewidth=0.04cm,arrowsize=0.05291667cm 2.0,arrowlength=1.4,arrowinset=0.4]{<-}(3.071875,5.320781)(2.071875,5.920781) \psdots[dotsize=0.12](3.071875,5.320781) \psdots[dotsize=0.12](3.071875,4.520781) \usefont{T1}{ppl}{m}{n} \rput(2.851875,3.1407812){\small $\bar{x}_3$} \psline[linewidth=0.04cm,arrowsize=0.05291667cm 2.0,arrowlength=1.4,arrowinset=0.4]{<-}(2.071875,3.9207811)(1.071875,4.520781) \psdots[dotsize=0.12](1.071875,5.320781) \psdots[dotsize=0.12](1.071875,4.520781) \psdots[dotsize=0.12](2.071875,3.9207811) \psline[linewidth=0.04cm,arrowsize=0.05291667cm 2.0,arrowlength=1.4,arrowinset=0.4]{->}(2.071875,3.9207811)(2.071875,3.3207812) \psline[linewidth=0.04cm,arrowsize=0.05291667cm 2.0,arrowlength=1.4,arrowinset=0.4]{->}(1.071875,2.7207813)(1.071875,1.9207813) \psdots[dotsize=0.12](2.071875,3.3207812) \psdots[dotsize=0.12](2.071875,3.3207812) \psline[linewidth=0.04cm,arrowsize=0.05291667cm 2.0,arrowlength=1.4,arrowinset=0.4]{->}(2.071875,3.3207812)(1.071875,2.7207813) \psline[linewidth=0.04cm,arrowsize=0.05291667cm 2.0,arrowlength=1.4,arrowinset=0.4]{<-}(2.071875,1.3207812)(3.071875,1.9207813) \psline[linewidth=0.04cm,arrowsize=0.05291667cm 2.0,arrowlength=1.4,arrowinset=0.4]{<-}(3.071875,1.9207813)(3.071875,2.7207813) \psline[linewidth=0.04cm,arrowsize=0.05291667cm 2.0,arrowlength=1.4,arrowinset=0.4]{<-}(3.071875,2.7207813)(2.071875,3.3207812) \psdots[dotsize=0.12](3.071875,2.7207813) \psdots[dotsize=0.12](3.071875,1.9207813) \psline[linewidth=0.04cm,arrowsize=0.05291667cm 2.0,arrowlength=1.4,arrowinset=0.4]{<-}(2.071875,1.3207812)(1.071875,1.9207813) \psdots[dotsize=0.12](1.071875,2.7207813) \psdots[dotsize=0.12](1.071875,1.9207813) \psdots[dotsize=0.12](2.071875,1.3207812) \psline[linewidth=0.04cm,linestyle=dotted,dotsep=0.16cm](2.071875,1.1207813)(2.071875,-0.07921875) \psline[linewidth=0.04cm,arrowsize=0.05291667cm 2.0,arrowlength=1.4,arrowinset=0.4]{->}(1.071875,-0.67921877)(1.071875,-1.4792187) \psdots[dotsize=0.12](2.071875,-0.07921875) \psline[linewidth=0.04cm,arrowsize=0.05291667cm 2.0,arrowlength=1.4,arrowinset=0.4]{->}(2.071875,-0.07921875)(1.071875,-0.67921877) \psline[linewidth=0.04cm,arrowsize=0.05291667cm 2.0,arrowlength=1.4,arrowinset=0.4]{<-}(2.071875,-2.0792189)(3.071875,-1.4792187) \psline[linewidth=0.04cm,arrowsize=0.05291667cm 2.0,arrowlength=1.4,arrowinset=0.4]{<-}(3.071875,-1.4792187)(3.071875,-0.67921877) \psline[linewidth=0.04cm,arrowsize=0.05291667cm 2.0,arrowlength=1.4,arrowinset=0.4]{<-}(3.071875,-0.67921877)(2.071875,-0.07921875) \psdots[dotsize=0.12](3.071875,-0.67921877) \psdots[dotsize=0.12](3.071875,-1.4792187) \psline[linewidth=0.04cm,arrowsize=0.05291667cm 2.0,arrowlength=1.4,arrowinset=0.4]{<-}(2.071875,-2.0792189)(1.071875,-1.4792187) \psdots[dotsize=0.12](1.071875,-0.67921877) \psdots[dotsize=0.12](1.071875,-1.4792187) \psdots[dotsize=0.12](2.071875,-2.0792189) \psline[linewidth=0.04cm,arrowsize=0.05291667cm 2.0,arrowlength=1.4,arrowinset=0.4]{->}(2.071875,-2.0792189)(2.071875,-2.8792188) \usefont{T1}{ppl}{m}{n} \rput(2.831875,5.7407813){\small $\bar{x}_1$} \usefont{T1}{ppl}{m}{n} \rput(3.041875,-0.27921876){\small $\bar{x}_{n-1}$} \usefont{T1}{ppl}{m}{n} \rput(1.681875,-5.459219){\small ${C}_{1,A}$} \usefont{T1}{ppl}{m}{n} \rput(1.401875,3.1207812){\small ${x}_{3}$} \usefont{T1}{ppl}{m}{n} \rput(1.321875,5.7207813){\small ${x}_{1}$} \usefont{T1}{ppl}{m}{n} \rput(2.041875,6.1207814){\small $s$} \psdots[dotsize=0.12](2.071875,-5.2792187) \psdots[dotsize=0.12](6.071875,-5.2792187) \psline[
linewidth=0.04cm,arrowsize=0.05291667cm 2.0,arrowlength=1.4,arrowinset=0.4]{->}(10.471875,-2.8792188)(11.871875,-5.2792187) \psline[linewidth=0.04cm,arrowsize=0.05291667cm 2.0,arrowlength=1.4,arrowinset=0.4]{->}(2.071875,-3.6792188)(2.071875,-5.2792187) \psline[linewidth=0.04cm,arrowsize=0.05291667cm 2.0,arrowlength=1.4,arrowinset=0.4]{->}(2.071875,-3.6792188)(10.871875,-5.2792187) \psline[linewidth=0.04cm,linestyle=dotted,dotsep=0.16cm](8.551875,-5.3392186)(9.951875,-5.3392186) \psdots[dotsize=0.12](10.871875,-5.2792187) \psdots[dotsize=0.12](11.871875,-5.2792187) \usefont{T1}{ppl}{m}{n} \rput(7.511875,-5.499219){\small ${C}_{1,B}$} \usefont{T1}{ppl}{m}{n} \rput(10.821875,-5.539219){\small ${C}_{m,A}$} \usefont{T1}{ppl}{m}{n} \rput(12.031875,-5.539219){\small ${C}_{m,B}$} \psdots[dotsize=0.12](3.071875,-5.2792187) \psdots[dotsize=0.12](4.071875,-5.2792187) \psline[linewidth=0.04cm,arrowsize=0.05291667cm 2.0,arrowlength=1.4,arrowinset=0.4]{->}(2.071875,-5.2792187)(3.071875,-5.2792187) \psline[linewidth=0.04cm,arrowsize=0.05291667cm 2.0,arrowlength=1.4,arrowinset=0.4]{->}(3.071875,-5.2792187)(4.071875,-5.2792187) \psline[linewidth=0.04cm,arrowsize=0.05291667cm 2.0,arrowlength=1.4,arrowinset=0.4]{->}(5.071875,-5.2792187)(4.071875,-5.2792187) \psdots[dotsize=0.12](5.071875,-5.2792187) \psline[linewidth=0.04cm,arrowsize=0.05291667cm 2.0,arrowlength=1.4,arrowinset=0.4]{->}(7.071875,-5.2792187)(6.071875,-5.2792187) \psbezier[linewidth=0.04,arrowsize=0.05291667cm 2.0,arrowlength=1.4,arrowinset=0.4]{->}(4.071875,-5.2792187)(4.071875,-6.079219)(5.071875,-6.079219)(5.071875,-5.2792187) \usefont{T1}{ppl}{m}{n} \rput(4.491875,-6.119219){\small $1$} \usefont{T1}{ppl}{m}{n} \rput(4.571875,-5.019219){\small $1$} \psline[linewidth=0.04cm,arrowsize=0.05291667cm 2.0,arrowlength=1.4,arrowinset=0.4]{->}(6.071875,-5.2792187)(5.071875,-5.2792187) \psdots[dotsize=0.12](7.071875,-5.2792187) \psdots[dotsize=0.12](2.071875,-3.6792188) \psline[linewidth=0.04cm,arrowsize=0.05291667cm 2.0,arrowlength=1.4,arrowinset=0.4]{->}(2.071875,-2.8792188)(2.071875,-3.6792188) \psdots[dotsize=0.12](2.071875,-2.8792188) \psbezier[linewidth=0.04,arrowsize=0.05291667cm 2.0,arrowlength=1.4,arrowinset=0.4]{->}(5.071875,-5.2792187)(5.071875,-6.079219)(6.071875,-6.079219)(6.071875,-5.2792187) \usefont{T1}{ppl}{m}{n} \rput(5.611875,-6.099219){\small $2.1$} \psline[linewidth=0.04cm,arrowsize=0.05291667cm 2.0,arrowlength=1.4,arrowinset=0.4]{->}(10.471875,-2.0792189)(10.471875,-2.8792188) \psdots[dotsize=0.12](10.471875,-2.8792188) \psbezier[linewidth=0.04,arrowsize=0.05291667cm 2.0,arrowlength=1.4,arrowinset=0.4]{->}(2.071875,-5.079219)(2.071875,-6.2792187)(4.405981,-0.6770454)(4.471875,0.32078126)(4.537769,1.3186079)(4.067875,5.4101357)(3.071875,5.320781) \psbezier[linewidth=0.04](3.071875,4.520781)(3.271875,4.1207814)(3.7042,2.9066236)(3.871875,1.9207813)(4.03955,0.9349389)(4.071875,-2.0792189)(4.071875,-1.6792188) \psbezier[linewidth=0.04,arrowsize=0.05291667cm 2.0,arrowlength=1.4,arrowinset=0.4]{->}(4.071875,-1.6792188)(4.071875,-1.8792187)(4.1162376,-3.3833718)(3.671875,-4.2792187)(3.2275121,-5.1750655)(3.271875,-5.079219)(3.071875,-5.2792187) \psbezier[linewidth=0.04,arrowsize=0.05291667cm 2.0,arrowlength=1.4,arrowinset=0.4]{->}(2.071875,-5.2792187)(2.271875,-5.479219)(3.2059813,-3.0770454)(3.271875,-2.0792189)(3.3377688,-1.081392)(4.067875,2.8101356)(3.071875,2.7207813) \psbezier[linewidth=0.04,arrowsize=0.05291667cm 2.0,arrowlength=1.4,arrowinset=0.4]{->}(3.071875,1.9207813)(3.3385417,1.5086601)(3.671875,-0.47921875)(3.671875,-0.87921876)(3.671875,-1.2792188)(3.071875,-4.8792186)(3.071875,-5.2792187) \usefont{T1}{ppl}{m}{n} \rput(1.721875,-3.6992188){\small ${d}_{A}$} \usefont{T1}{ppl}{m}{n} \rput(10.851875,-2.8992188){\small ${d}_{B}$} \usefont{T1}{ppl}{m}{n} \rput(3.221875,-5.519219){\small ${q}_{1,A}$} \usefont{T1}{ppl}{m}{n} \rput(6.491875,-5.479219){\small ${q}_{1,B}$} \psbezier[linewidth=0.04,arrowsize=0.05291667cm 2.0,arrowlength=1.4,arrowinset=0.4]{->}(6.091875,-5.2792187)(6.091875,-5.2792187)(6.9533734,2.4474173)(7.091875,3.4182172)(7.2303762,4.389017)(
10.891875,5.320781)(11.491875,5.320781) \usefont{T1}{ppl}{m}{n} \rput(1.071875,-0.31921875){\small ${x}_{n-1}$} \usefont{T1}{ppl}{m}{n} \rput(1.821875,-2.8792188){\small ${y}$} \usefont{T1}{ppl}{m}{n} \rput(4.171875,-5.079219){\small ${w}_{1}$} \usefont{T1}{ppl}{m}{n} \rput(5.161875,-5.079219){\small ${z}_{1}$} \end{pspicture}  }

\end{center}

\caption{\label{fig:varIIdirected}Directed graph $G$ in instance $\mathcal{M}$}

\end{figure}

\textbf{\emph{\noun{Part 2: The Reduction.}}}\\
 In what follows, we show that $\mathcal{I}$ is a ``yes'' instance
of \noun{Quantified $\leq4$-Sat} if and only if the outcome of instance
$\mathcal{M}$ of \pathdec\ yields $c(A)<1$. 

In instance $\mathcal{M}$, player $A$ starts the game with moving
an edge emanating from $s=v_{1,0}$. I.e., $A$ moves along one of
the edges $x_{1}$ and $\bar{x}_{1}$. In the next step, $B$ needs
to decide between moving along $x_{2}$ and $\bar{x}_{2}$. In the
next step ($A$ is either in $v_{1,1}$ or $v_{1,5}$, depending on
$A$'s choice in the first step), there is only one edge $A$ can
move along; after that move, $A$ is either in $v_{1,2}$ or $v_{1,4}$.
Similarly, $B$ has no choice but to follow the only available edge,
and ends up in either $v_{2,2}$ or $v_{2,4}$. In the following step
$A$ possibly has the choice between moving (i) along an edge to $q_{j,A}$
for some $1\leq j\leq m$ or (ii) along the edge to $v_{1,3}$.\\
(i) Assume that $A$ chooses the former. Clearly, $B$ cannot reach
a meeting point with $A$ in less than four moves, because $B$ would
have to traverse $d_{B}$ in order to do so. Hence, irrespective of
$B$'s decisions in the next three steps, $A$'s decision to move
to $q_{j,A}$ either causes $A$ to move along the edges $(q_{j,A},w_{j})$,
$(w_{j},z_{j}),(z_{j},q_{j,B})$, resulting in $c(A)=2\eps+1+2.1=3.1+2\eps$,
or is even infeasible because a meeting point is no longer possible.
\\
(ii) Assume that $A$ moves to $v_{1,3}$; $B$ necessarily moves
to $v_{2,3}$. In the next steps, first $A$ moves to $v_{3,0}$ and
$B$ to $v_{4,0}$. The above situation repeats: First, $A$ chooses
between $x_{3}$ and $\bar{x}_{3}$, then $B$ chooses between $x_{4}$
and $\bar{x}_{4}$. After each of the players makes another move,
$A$ again needs to decide between moving along an edge to $q_{j,A}$
for some $1\leq j\leq m$ or along the edge to $v_{3,3}$. Again,
the former move is either infeasible, or the game would end up with
$A$ having to carry at least the cost of $c(A)=3.1+6\eps$. This
leads us to the following simple observation. 

\textbf{Observation~1.} If, for some $1\leq j\leq m$, $A$ reaches
$q_{j,A}$ via a feasible move without traversing $y$, then the game
ends with $A$ having to carry a cost of at least $3.1+2\eps$.

We will argue that using the optimal strategy, $A$ will always traverse
$y$. Note that if the meeting point is not one of the vertices $w_{j}$,
then $A$ has to traverse at least one edge of cost at least one.
Together with the above observation, this means that the best possible
outcome for $A$ results from meeting in one of these vertices such
that $A$ uses edges of cost $\eps$ only. It is not hard to verify
that the number of edges $B$ needs to traverse to reach such a vertex
$w_{j}$ is $4\frac{n}{2}+4$. Hence, we get the following observation. 

\textbf{Observation~2.} In instance $\mathcal{M}$, we have $c(A)\geq\eps(2n+4)$.

Assume that $A$ reaches vertex $y$. By construction of $G$, this
means that $A$ has not moved along an edge with endpoint $q_{j,A}$
for some $1\leq j\leq m$ yet. In particular, $A$ has used exactly
$4\frac{n}{2}=2n$ edges. After $2n$ edges, $B$'s position is in
$d_{B}$ in any case. Let $\tau$ be the truth assignment that sets
to true exactly the set of edges (i.e., literals) of $X$ choses so
far. \\
After each player has chosen $2n$ edges, it is $A$'s turn, moving
along the only edge $(y,d_{A})$ emanating from $y$. Now, it is $B$'s
turn to choose the next vertex to be visited, i.e., to pick a vertex
$C_{j,B}$, $1\leq j\leq m$. We will show that $B$ can choose a
vertex $C_{j,B}$ with the result that the game ends with $c(B)<1$
(and, as a consequence, $c(A)>1$) if and only if there is a clause
$C_{j}$ which is not satisfied by the produced truth assignment $\tau$.
Putting things the other way round, we show that $\mathcal{I}$ is
a ``yes''-instance if and only if in instance $\mathcal{M}$ of
\pathdec ~the outcome is $c(A)\leq1$. For the sake of readability,
w.l.o.g.\ we assume that $B$ chooses vertex $C_{1,B}$. By construction
of $G$, this implies that there are exactly three possible meeting
points: $w_{1},z_{1}$ and $q_{1,B}$.\\
\emph{Case~1: $\mathcal{I}$ is a ``yes''-instance.}\noun{ }I.e.,
assume that $C_{1}$ is satisfied by $\tau$. \\
\uline{\noun{Case I:}} $C_{1}$ is satisfied by its unique universal
literal, i.e., $\bar{x}_{2}$. We show that for $A$, moving along
the edge $(d_{A},C_{1,A})$ results in the lowest possible total cost
for $A$. If $A$ does so, in the next step, $B$ necessarily moves
along $(C_{1,B},q_{1,B})$. Next, $A$ moves along $(C_{1,A},q_{1,A})$.
We distinguish the following cases (see also Table~\ref{tab:Case-I}).

\begin{table}
{\small }%
\begin{tabular}{|c|c||ccccc||c|}
\hline 
\multirow{2}{*}{\noun{\small Case}{\small{} Ia}} & {\small $A$} & {\small $C_{1,A}$} &  & {\small \hs$q_{1,A}$} &  & {\small \hs$w_{1}$} & \multirow{2}{*}{{\small infeasible}}\tabularnewline
\cline{2-7} 
 & {\small $B$} &  & {\small \hs$q_{1,B}$} &  & {\small \hs$v_{2,5}$} &  & \tabularnewline
\hline 
\end{tabular}{\small }\\
{\small \par}

{\small }%
\begin{tabular}{|c|c||cccccc||l|}
\hline 
\multirow{2}{*}{\noun{\small Case}{\small{} Ib}} & {\small $A$} & {\small $C_{1,A}$} &  & {\small \hs$q_{1,A}$} &  & {\small \hs$w_{1}$} &  & {\small $c(A)=\eps(2n+4)$}\tabularnewline
\cline{2-9} 
 & {\small $B$} &  & {\small \hs$q_{1,B}$} &  & {\small \hs$z_{1}$} &  & {\small \hs$z_{1}$} & {\small $c(B)=1+\eps(2n+3)$}\tabularnewline
\hline 
\end{tabular}{\small \par}

\caption{\label{tab:Case-I}Moves and resulting outcomes in subcase I of Case~1
in instance $\mathcal{M}$.}

\end{table}

\uline{\noun{Case I}}\uline{a}\uline{\noun{:}} $B$ moves
along $(q_{1,B},v_{2,5})$. Then, $A$ necessarily moves along $(q_{1,A},w_{1})$,
and $B$ is not able to move along an edge, because $B$ has used
the edge $\bar{x}_{2}$ already. Hence, in contradiction with (R1),
a meeting point is no longer possible, because the players need to
move alternately. 

\uline{\noun{Case I}}\uline{b}\uline{\noun{:}} $B$ needs
to move along $(q_{1,B},z_{1})$. $A$ moves along $(q_{1,A},w_{1})$.
The game ends with $B$ moving along $(z_{1},w_{1})$, because moving
along $(z_{1},q_{1,B})$ would be more expensive for $B$. The corresponding
costs are $c(A)=\eps\cdot(4\frac{n}{2}+4)=\eps(2n+4)$ (which is the
lowest possible cost for $A$, see Observation~2) and $c(B)=\eps(4\frac{n}{2}+3)+1=1+\eps(2n+3)$.

To summarize these cases, $B$ moves along $(q_{1,B},z_{1})$, i.e.,
\noun{C}ase Ib applies. Thus, in Case~I, $A$ has a strategy such
that the game ends with $c(A)=\eps(2n+4)<1$.

\uline{\noun{Case II:}} $C_{1}$ is not satisfied by its universal
literal $\bar{x}_{2}$. I.e., $C_{1}$ is satisfied by at least one
of its existential literals $x_{1},x_{3}$. W.l.o.g.\ assume that
$x_{1}$ is set to true. \\
Now, $A$ has to choose between the edges $(d_{A},C_{1,A})$ and $(d_{A},C_{j,A})$
for some $j>1$. 

\begin{table}
{\small }%
\begin{tabular}{|c|c||ccccccccc||l|}
\hline 
\multirow{2}{*}{\noun{\small Case}{\small{} IIa(i)\hspace{0.65cm}}} & {\small $A$} & {\small $C_{1,A}$} &  & {\small \hs$q_{1,A}$} &  & {\small \hs$w_{1}$} &  & {\small \hs$z_{1}$} &  & {\small \hs$q_{1,B}$} & {\small $c(A)=3.1+\eps(2n+4)$}\tabularnewline
\cline{2-12} 
 & {\small $B$} &  & {\small \hs$q_{1,B}$} &  & {\small \hs$v_{2,5}$} &  & {\small \hs$v_{2,4}$} &  & {\small \hs$v_{2,3}$} &  & {\small $c(B)=\eps(2n+5)$}\tabularnewline
\hline 
\end{tabular}{\small }\\
{\small \par}

{\small }%
\begin{tabular}{|c|c||cccccc||l|}
\hline 
\multirow{2}{*}{\noun{\small Case}{\small{} IIa(ii)({*})\hspace{0.15cm}}} & {\small $A$} & {\small $C_{1,A}$} &  & {\small \hs$v_{1,5}$} &  & {\small \hs$v_{1,4}$} &  & \multirow{2}{*}{{\small infeasible}}\tabularnewline
\cline{2-8} 
 & {\small $B$} &  & \hs{\small $q_{1,B}$} &  & \hs{\small $v_{2,5}$} &  & \hs{\small $v_{2,4}$} & \tabularnewline
\hline 
\end{tabular}{\small }\\
{\small \par}

{\small }%
\begin{tabular}{|c|c||ccccccccc||l|}
\hline 
\multirow{2}{*}{\noun{\small Case}{\small{} IIa(ii)({*}{*})}} & {\small $A$} & {\small $C_{1,A}$} &  & {\small \hs$v_{1,5}$} &  & \hs{\small $v_{1,4}$} &  & \hs{\small $q_{1,A}$} &  & \hs{\small $w_{1}$} & {\small $c(A)=\eps(2n+6)$}\tabularnewline
\cline{2-12} 
 & {\small $B$} &  & \hs{\small $q_{1,B}$} &  & \hs{\small $z_{1}$} &  & \hs{\small $w_{1}$} &  & \hs{\small $z_{1}$} &  & {\small $c(B)=2+\eps(2n+3)$}\tabularnewline
\hline 
\end{tabular}{\small \par}

\caption{\label{tab:Case-II}Moves and resulting outcomes in subcase IIa of
Case~1 in instance $\mathcal{M}$. }
\end{table}

\uline{\noun{Case II}}\uline{a}\uline{\noun{:}}\emph{ } $A$
moves along the edge $(d_{A},C_{1,A})$. In the next step, $B$ has
to move along $(C_{1,B},q_{1,B})$. $A$ may decide to (see also Table~\ref{tab:Case-II})

\begin{itemize}\item[(i)]move along $(C_{1,A},q_{1,A})$. Then,
$B$ moves along $(q_{1,B},v_{2,5})$, because otherwise the game
ends with $c(B)>1$. In the next step, $A$ has to move along $(q_{1,A},w_{1})$.
Since $C_{1}$ is not satisfied by $\bar{x}_{2}$, i.e., $\bar{x}_{2}$
has not been used yet, $B$ can move along $(v_{2,5},v_{2,4})$ in
the next step. The following moves must be $A$ moving along $(w_{1},z_{1})$
and $B$ along $(v_{2,4},v_{2,3})$. In the final step, $A$ has to
end the game using the edge $(z_{1},q_{1,B})$, because otherwise
a meeting point is no longer possible. Thus, the game ends with $c(A)>3.1$
and $c(B)=\eps(2n+5)$.

\item[(ii)]move along $(C_{1,A},v_{1,5})$. $B$ has the choice between
moving along ({*}) $(q_{1,B},v_{2,5})$ or ({*}{*}) $(q_{1,B},z_{1})$.

\begin{itemize}\item[(*)]Assume $B$ chooses the former. In the
following steps, $A$ has to move along $(v_{1,5},v_{1,4})$ and $B$
along $(v_{2,5},v_{2,4})$. Now, if $A$ moves along $(v_{1,4},v_{1,3})$,
$B$ has to move along $(v_{2,4},q_{2,3})$, and $A$ cannot perform
another move, i.e., a meeting point is no longer possible. If $A$
chooses $(v_{1,4},q_{1,A})$ instead of $(v_{1,4},v_{1,3})$, again
$B$ has to move along $(v_{2,4},q_{2,3})$, leaving $A$ with the
only possible move along $(q_{1,A},w_{1})$. Now, $B$ is unable to
move, and a meeting point is no longer possible. In either case, we
hence get a contradiction with (R1). 

\item[(**)]Hence, $B$ moves along $(q_{1,B},z_{1})$, followed by
$A$ moving along $(v_{1,5},v_{1,4})$. If $B$ moves along $(z_{1},q_{1,B})$,
the cost of $B$ increases by $c(z_{1},q_{1,B})=2.1$; i.e., in the
end, $B$ will have to carry a cost of $c(B)\geq2.1+\eps(4\frac{n}{2}+2)$.
If $B$ moves along $(z_{1},w_{1})$, $A$ has to move along $(v_{1,4},q_{1,A})$
to ensure a meeting point is still possible. The final two moves necessarily
are that $B$ moves along $(w_{1},z_{1})$, followed by $A$ moving
along $(q_{1,A},w_{1})$. The corresponding costs are $c(A)=\eps(2n+6)$
and $c(B)=\eps(2n+3)+2$.

\end{itemize}

\end{itemize}

As a consequence, in Case~\noun{II}a, player $A$ chooses to move
along $(C_{1,A},v_{1,5})$, which finally leads to an outcome with
$c(A)=\eps(2n+6)$ and $c(B)=\eps(2n+3)+2$. 

\uline{\noun{Case II}}\uline{b}\uline{\noun{:}} $A$ moves
along the edge $(d_{A},C_{j,A})$ for some $j>1$. By construction,
it follows that this admits a meeting point if and only if $C_{j,A}$
is made up of one the existential literals of $C_{1}$ that have already
been selected by $A$, i.e., set to true under $\tau$. Again, $B$
has no choice but to move along $(C_{1,B},q_{1,B})$. Now, it is easy
to see that $A$ moving along $(C_{j,A},q_{j,A})$ would not admit
a meeting point since from $q_{j,A}$ ($j>1$) none of the vertices
$w_{1},z_{1},q_{1,B}$ can be reached. The same is true if $A$ moves
towards the endpoint of an edge, i.e., literal, whose negation is
not contained in $C_{1}$. Thus, $A$ needs to move towards the endpoint
of a literal, whose negation is contained in $C_{1}$. Analogously
to \noun{Case II}a above, it follows that the game ends with $c(A)=\eps(2n+6)$
and $c(B)=\eps(2n+3)+2$ if $A$ moves along $(C_{j,A},v_{1,5})$. 

Summing up Case~I and Case~II it follows that, if $C_{1}$ is satisfied,
then \pathdec ends with $c(A)\leq\max\{\eps(2n+4),\eps(2n+6)\}<1$
and $c(B)\geq\min\{1+\eps(2n+3),2+\eps(2n+3)\}>1$.

\emph{Case~2: $\mathcal{I}$ is a ``no''-instance.}\noun{ }I.e.,
assume that $C_{1}$ is not satisfied by $\tau$. \\
Assume $A$ does not move along the edge $(d_{A},C_{1,A})$. Hence,
$A$ must move along an edge $(d_{A},C_{j,A})$, for some $j>1$.
Recall that $w_{1},z_{1},q_{1,B}$ are the only possible meeting points.
However, from $C_{j,A}$ it is impossible for $A$ to reach one of
these points, because otherwise it must be reached via the endpoint
of the negation of an existential literal that make up clause $C_{1}$,
i.e., $C_{1}$ must be satisfied by $\tau$. 

Thus, $A$ moves along $(d_{A},C_{1,A})$. As it turns out, the same
moves (and hence the same outcome) as in Case~IIa(i) of Case~1 results.
Next, $B$ moves along $(C_{1,B},q_{1,B})$. Since $C_{1}$ is not
satisfied, $A$ is not able to move along $(C_{1,A},v_{1,5})$ or
$(C_{1,A},v_{3,5})$, because otherwise a meeting point is no longer
possible. Thus, $A$ needs to move along $(C_{1,A},q_{1,A})$. Now,
assume $B$ moves along $(q_{1,B},v_{2,5})$, which is possible because
$\bar{x}_{2}$, and hence the edges $(v_{2,5},v_{2,4})$ and $(v_{2,4},v_{2,3})$
have not been used yet. In the next steps, $A$ has to move along
$(q_{1,A},w_{1})$, $B$ along $(v_{2,5},v_{2,4})$, $A$ along $(w_{1},z_{1})$,
and $B$ along $(v_{2,4},v_{2,3})$. Finally, $A$ has no choice but
to move along $(z_{1},q_{1,B})$ since otherwise a meeting point is
no longer possible, because $B$ cannot select any further edge because
$B$ is stuck in vertex $v_{2,3}$. Consequently, the game ends with
$c(A)=1+2.1+\eps(2n+4)=3.1+\eps(2n+4)$ and $c(B)=\eps(2n+5)$.

To summarize, instance $\mathcal{I}$ is a ``yes''-instance of \noun{Quantified
$\leq4$-Sat} if and only if instance $\mathcal{M}$ of \pathdec\ ends
with $c(A)<1$.~\qed

\else{}
\textbf{Proof.} 
The general outline of the proof follows to some extend the construction of the 
\psp ness proof for \geo\ given in \cite{sch78}.
The details of the reduction require careful treatment of auxiliary edges
and subtle case distinctions.
They can be found in an accompanying technical report~\cite{dpsconn}.\qed
\fi{}

Note that the above \psp ness result holds also if \scg\ is restricted to simple paths,
i.e., (R3) is imposed.

\begin{theorem}
 Imposing (R3), \scg\ is \psp\ for directed bipartite graphs.
\end{theorem}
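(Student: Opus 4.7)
The plan is to lift the reduction from \noun{Quantified $\leq 4$-Sat} used in the proof of Theorem~\ref{theo:pspace} to the (R3) setting by a small local redesign of the clause gadget. The hexagons encoding the variables and the long path leading into the clause gadgets are traversed strictly in one direction by both players, so (R3) poses no threat there. The only part of the construction where vertex revisits can actually occur under optimal play is the sub-gadget involving $w_j$, $z_j$ and $q_{j,B}$, connected by the antiparallel unit-cost edges $(w_j,z_j)$ and $(z_j,w_j)$.

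First I would go through each subcase of the proof of Theorem~\ref{theo:pspace} and flag every move in which a vertex of $G$ is visited for a second time. A direct check shows that all such revisits sit inside the sub-gadget $\{w_j, z_j, q_{j,B}\}$, and only in subcase IIa(ii)(**); in every other subcase the paths played by $A$ and $B$ are already simple. The plan is then to replace the bidirectional $w_j{\leftrightarrow}z_j$ structure by an acyclic sub-gadget that preserves the same candidate meeting points, the same ``cheap versus expensive'' ($1$ versus $2.1$) trade-off for $B$, and the same parity of moves on both sides, so that the strategic alternative in subcase IIa(ii)(**) is realised by fresh vertices rather than by revisiting $z_j$ or $q_{j,B}$. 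Keeping the new sub-gadget $2$-colourable ensures that bipartiteness survives.

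Once the gadget has been redesigned, the entire case analysis of Theorem~\ref{theo:pspace} transfers essentially verbatim: the variable hexagons, the clause-selection logic at $d_A$ and $d_B$, and the thresholds $c(A)<1$ and $c(B)<1$ are all unaffected, and inside the modified gadget the alternative moves for $A$ and $B$ can be engineered to yield the same aggregate costs as before. The equivalence ``$\mathcal{I}$ is a yes-instance $\iff c(A)<1$'' therefore remains intact, yielding \psp-hardness. Membership in \psp\ follows from the depth-first argument already given in Section~\ref{sec:game}, whose bound of $2|E|$ on the game-tree depth is unaffected by the extra restriction (in fact it improves to $2|V|$).

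The step I expect to be the most delicate is the parity bookkeeping inside the new gadget: the proof of Theorem~\ref{theo:pspace} repeatedly relies on the fact that a particular player is to move at a particular vertex, so any change in the number of edges $A$ or $B$ traverses inside the sub-gadget could flip whose turn it is at the meeting point and thereby break a case. I would therefore build the replacement symmetrically and insist that, on each side, the number of edges each player uses to reach any of the three candidate meeting points coincide with the corresponding count in the original gadget, so that every turn-based argument transfers without modification.
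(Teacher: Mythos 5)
Your high-level strategy (reuse the reduction of Theorem~\ref{theo:pspace} and locally repair the clause gadget so that (R3) becomes harmless) is the same as the paper's, but your audit of where (R3) actually bites is incorrect, and the repair you specify would not preserve the reduction. The revisits are not confined to $\{w_j,z_j,q_{j,B}\}$ and to subcase IIa(ii)({*}{*}), and it is not true that the hexagons are traversed only once: in the clause-checking phase both players are sent back into the hexagons. In particular, in Case~IIa(i) and, crucially, in Case~2 (the ``no'' branch), $B$ escapes from $q_{1,B}$ via $(q_{1,B},v_{2,5})$, $(v_{2,5},v_{2,4})$, $(v_{2,4},v_{2,3})$, and the last move re-enters $v_{2,3}$, a vertex $B$ necessarily visited when leaving its hexagon in the first phase (the paper's feasibility remark there only checks that the \emph{edges} are unused, i.e., (R2)). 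Under (R3) this escape is truncated: $B$ gets stuck at $v_{2,4}$, so the very line that forces $A$ to pay $1+2.1$ when the clause is unsatisfied disappears. Hence a replacement gadget that keeps the same candidate meeting points, the same $1$ versus $2.1$ trade-off and the same move parities --- your explicit design constraints --- cannot make the analysis transfer ``essentially verbatim'': with such a gadget, in the no-case $B$ can no longer threaten the long detour through its hexagon and is instead driven to a cheap closure of the game, so $c(A)<1$ would also hold for ``no'' instances and the equivalence collapses. The analogous parity problem also affects $A$'s counter-escape through its own hexagon, where the move $(v_{1,4},v_{1,3})$ is likewise a revisit.

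The paper's own modification does not try to preserve the internal structure of the sub-gadget: it removes $w_j$ and $z_j$ (and all incident edges) and instead joins $q_{j,A}$ and $q_{j,B}$ by two antiparallel edges of cost $1$ each. This deliberately changes the local costs and the number of moves needed inside the clause gadget, so that the relevant threats require at most a single step back into a hexagon and no vertex revisit; note also that the replacement is not acyclic (a $2$-cycle), which is unproblematic precisely because (R3) forbids traversing it. So the step you single out as most delicate --- keeping all parities identical to the original gadget --- is exactly the step that must be abandoned; a correct repair has to re-balance the lengths of $A$'s and $B$'s continuation paths rather than replicate them, and the whole case analysis has to be redone for the shortened gadget rather than transferred verbatim. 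Your membership argument (depth-first search on the game tree of depth at most $2|E|$) is fine and matches the paper.
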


\ifdefined\FULL
\proof The proof works analogously to the proof of Theorem~\ref{theo:pspace}, when the considered instance 
of \scg\ is slightly modified. In particular, we replace instance $\mathcal{M}$ with graph $G$ by instance $\mathcal{M}'$ with graph $G'$, where $G'$ is created from $G$ by,  
for each $1\leq j\leq m$, 
\begin{itemize}
 \item  removing the vertices $w_j , z_j$ and all edges to which $w_j$ or $z_j$ are incident
\item introducing the edges $(q_{j,A}, q_{j,B})$ and $(q_{j,B}, q_{j,A})$ of cost $1$ each.~\qed
\end{itemize}
 
\else{}
\proof The proof follows analogously to the proof of Theorem~\ref{theo:pspace}, when the considered instance of \scg\ is slightly modified. 
Details are given in~\cite{dpsconn}.\qed
\fi{}

\subsection{Directed Acyclic Bipartite Graphs}
\label{sec:acyclic}

In this section, we consider \scg\ in directed acyclic bipartite graphs. 
Note that due to the acyclicity, (R2) (and also (R3)) is obviously satisfied. 
It turns out that \pathdec\ remains intractable when restricted to
directed acyclic bipartite graphs. 
In particular, we provide an $\np$-hardness result by  
giving a reduction from \noun{Vertex Cover}.
\begin{quote}
\noun{Vertex Cover}

\textbf{Input:} Undirected graph $\gvc=(\vvc,\evc)$, integer $k\in\mathbb{N}$.

\textbf{Question:} Is there a vertex cover of size at most $k$, i.e.,
a subset $V' \subseteq \vvc$ with $|V'|\leq k$ such that for each edge
$\{i,j\}\in \evc$ at least one of $i,j$ belongs to $V'$? 
\end{quote}
\textbf{Remark.} Clearly, a vertex cover of size at most $k$ exists
if and only if there is a vertex cover of size exactly $k$.

\begin{theorem}\label{th:np-hardacyclic} \pathdec\ is strongly $\np$-hard
for directed acyclic bipartite graphs.\end{theorem}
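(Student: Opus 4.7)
My plan is to prove Theorem~\ref{th:np-hardacyclic} by a polynomial reduction from \noun{Vertex Cover}. Given an instance $(\gvc,k)$ with $\gvc=(\vvc,\evc)$, $n=|\vvc|$ and $m=|\evc|$, I would construct a directed acyclic bipartite graph $G$ with edge costs together with thresholds $C_A,C_B$, such that $\gvc$ admits a vertex cover of size~$k$ if and only if the \opath\ solution in $G$ has cost at most $C_A$ for $A$ and at most $C_B$ for $B$.

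The graph $G$ would consist of two parts. In the \emph{selection gadget}, player~$B$'s subpath from $t$ passes through $k$ successive choice blocks; at the $\ell$-th block, $B$ picks one of $n$ parallel edges, where the $i$-th choice corresponds to placing $v_i\in\vvc$ into a candidate cover $S$. In parallel, $A$'s subpath from $s$ is a filler chain in which every vertex has a unique outgoing edge, so that $A$ has no strategic choice and the turn counters stay synchronised. All edges in this part carry the same small cost $\eps$, so that the cost incurred in the selection phase is negligible compared with the constants used later.

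In the subsequent \emph{verification gadget}, $A$ is given a single strategic move: picking one of $m$ outgoing edges that correspond to challenging an edge $e_j=\{v_i,v_{i'}\}\in\evc$. The remainder of the gadget is tailored so that the game ends with costs $(\alpha,\beta)$ for $(A,B)$ whenever $S$ covers the challenged $e_j$, and with costs $(\alpha',\beta')$ satisfying $\alpha'<\alpha$ and $\beta'>\beta$ whenever $e_j$ is uncovered. The intended mechanism is to route $B$ through a ``witness'' vertex dedicated to one of the endpoints of $e_j$; this witness can be reached cheaply only when the corresponding selection edge was used in phase one, while otherwise $B$ must take an expensive detour and $A$ simultaneously gains access to a compensating cheap escape edge. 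A layered $2$-colouring of $G$ keeps the graph bipartite, and prohibitively large penalty costs $M$ on every potential deviation edge rule out off-strategy play by either player.

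Applying the backward induction that defines \opath, at the challenge step $A$ strictly prefers an uncovered $e_j$ whenever one exists (since $\alpha'<\alpha$), and a rational $B$ anticipating this will therefore commit in the selection phase to a size-$k$ vertex cover whenever one exists in $\gvc$ (since $\beta<\beta'$). With $C_A:=\alpha$ and $C_B:=\beta$ this gives the claimed equivalence, and since all weights can be chosen polynomially bounded in $n+m$, strong $\np$-hardness follows. The hardest part will be the engineering of the verification gadget: the DAG and bipartite constraints preclude the most natural ``meeting-at-a-previously-visited-vertex'' construction, because forward-only paths make the selection-phase vertices unreachable from $B$'s position after the phase; encoding ``$v_i\in S$'' via carefully placed higher-layer witnesses while simultaneously maintaining bipartiteness, blocking every off-strategy deviation by means of the penalty weights, and verifying that the backward induction really produces the intended unique \opath\ is the technically most delicate step of the reduction.
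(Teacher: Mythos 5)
Your high-level plan (a select-then-challenge reduction from \noun{Vertex Cover} with negligible $\eps$-costs in the selection phase and a constant-size cost gap decided in a verification gadget) is indeed the same strategy as the paper, which has $A$ encode a size-$k$ candidate cover by choosing which vertices $s_i$ to visit while $B$ is forced along a path of $k+1$ edges, and then lets $B$ challenge an edge $\{i,j\}$ by moving to a gadget $t_{i,j}$. But as a proof there is a genuine gap: the entire content of the theorem lives in the verification gadget and in checking that backward induction, together with the feasibility rule (R1) (no move that makes a meeting impossible, and neither player may get stuck), really yields the two intended outcomes, and you explicitly leave that gadget unconstructed (``the remainder of the gadget is tailored so that\dots''). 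Declaring target cost pairs $(\alpha,\beta)$ and $(\alpha',\beta')$ is not a construction; in this game the cost a player ends up with is determined by equilibrium play through the whole graph, and the paper needs several pages of case analysis (Observation~1, Cases~I, IIa, IIb with sub-cases) precisely to rule out off-strategy deviations — ``prohibitively large penalty costs $M$ on every potential deviation edge'' is not available as a generic tool here, because many deviations are punished not by expensive edges but by (R1)-infeasibility arguments about who gets stuck when.

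Moreover, the specific mechanism you sketch does not cohere. First, your stated obstruction — that acyclicity makes the selection-phase vertices unreachable for the challenger, so the ``meet at a previously visited vertex'' idea cannot work — is false, and abandoning it is what leaves you without a working gadget: in the paper's DAG the challenger $B$ moves \emph{forward} along $t_{i,j}\to b_{i,j}\to h_{i,j}/\ell_{i,j}\to y_{i,j}/v_{i,j}\to s_j/s_i$ into exactly the vertices $s_i$ that the selector $A$ visited earlier; no cycle is created because the outgoing edges of the $s_i$ never lead back into the $t_{i,j}$-side of the gadgets. Second, with your role assignment ($B$ selects, $A$ is a choice-free filler chain, $A$ challenges, and then $B$ is routed to a ``witness''), the witness cannot end the game: the game terminates only when a player reaches a vertex already visited by the \emph{other} player, and your $A$ has visited nothing but a fixed filler path, so ``$B$ reaches a witness recording $B$'s own earlier selection'' is not a meeting point. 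You would either have to let the selector's path itself contain the witnesses and have the challenger walk into them (which is the paper's construction), or invent some other termination mechanism — and that missing mechanism is precisely the technically delicate step you defer. Finally, note that the paper obtains bipartiteness not by a layered $2$-colouring of the gadget (which its construction does not admit directly) but by subdividing every edge at the end and halving its cost, checking that this preserves the game; some such final step would also be needed in your reduction.
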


\proof For the sake of readability, we provide a detailed proof for the case of 
directed acyclic graphs; finally, however, we will describe how the proof can 
easily be extended to directed acyclic \textit{bipartite} graphs.\newline
Given an instance $\mathcal{V}$ of \noun{Vertex Cover} with
an undirected graph $\gvc=(\vvc,\evc)$ and an integer $k\in\mathbb{N}$, we
construct an instance $\mathcal{W}$ of \pathdec\ with directed graph
$\gscg$ as follows. We assume $k\geq3$. Within this proof, let $n$
denote the number of vertices and $m$ the number of edges of $\gvc$,
i.e., $n:=|\vvc|$ and $m:=|\evc|$. W.l.o.g.\ let $\vvc=\{1,\ldots,n\}$
and $\evc=\{e_{1},\ldots,e_{m}\}$. W.l.o.g., we assume that $\{1,2\}\in \evc$.

\textbf{\emph{\noun{Part 1: Construction of Graph $\gscg$.}}} \\
We derive a directed graph $\gscg=(\vscg,\escg)$ as follows (see also
Fig.~\ref{fig:varIIacyclicNP}):
\begin{itemize}
\item we introduce vertices $s,t,d,d',f,g$ and $u_{1},\ldots,u_{k+1}$
\item edges $(d,d')$, $(g,d)$, $(g,f)$, $(t,u_{1})$, $(u_{k+1},d)$,
and $(u_{h},u_{h+1})$ for $1\leq h\leq k$
\item for each vertex $i\in \vvc$ introduce

\begin{itemize}
\item a vertex $s_{i}$ and the edges $(s,s_{i})$ and $(s_{i},g)$
\item edge $(s_{i},s_{j})$ for all $j>i$
\end{itemize}
\item for each edge $e=\{i,j\}\in \evc$ with $i<j$ we introduce 

\begin{itemize}
\item vertices $t_{i,j}$, $b_{i,j}$, $\ell_{i,j}$, $h_{i,j}$, $v_{i,j}$,
$w_{i,j}$, $y_{i,j}$, $z_{i,j}$
\item vertices $s_{i,j}$, $\alpha_{i,j}$, $\beta_{i,j}$, $\gamma_{i,j}$,
$\eta_{i,j}$, $\delta_{i,j}$, $\kappa_{i,j}$, $\lambda_{i,j}$,
$\mu_{i,j}$
\item edges $(u_{k+1},t_{i,j})$, $(t_{i,j},b_{i,j})$, $(b_{i,j},\ell_{i,j})$,
$(b_{i,j},h_{i,j})$
\item edges $(\ell_{i,j},v_{i,j})$, $(\ell_{i,j},w_{i,j})$, $(h_{i,j},y_{i,j})$,
$(h_{i,j},z_{i,j})$
\item edges $(v_{i,j},s_{i})$, $(y_{i,j},s_{j})$, $(f,s_{i,j})$
\item edges $(s_{i,j},\alpha_{i,j})$, $(\alpha_{i,j},\beta_{i,j})$, $(\beta_{i,j},\gamma_{i,j})$,
$(\gamma_{i,j},\eta_{i,j})$
\item edges $(s_{i,j},\delta_{i,j})$, $(\delta_{i,j},\kappa_{i,j})$, $(\kappa_{i,j},\lambda_{i,j})$,
$(\lambda_{i,j},\mu_{i,j})$
\item edges $(\alpha_{i,j},z_{i,j})$, $(\delta_{i,j},w_{i,j})$
\end{itemize}
\end{itemize}
\begin{figure}
\begin{center}
\scalebox{0.8}
{

\begin{pspicture}(0,-7.959219)(17.86,7.979219) \psdots[dotsize=0.12](8.971875,7.7607813) \psdots[dotsize=0.12](8.971875,7.1607814) \psdots[dotsize=0.12](8.971875,6.5607815) \psdots[dotsize=0.12](8.971875,4.960781) \psline[linewidth=0.04cm,linestyle=dotted,dotsep=0.16cm](8.971875,6.360781)(8.971875,5.1607814) \usefont{T1}{ppl}{m}{n} \rput(9.191875,7.7807813){\small $t$} \usefont{T1}{ppl}{m}{n} \rput(9.291875,6.540781){\small $u_2$} \usefont{T1}{ppl}{m}{n} \rput(9.301875,4.960781){\small $u_{k}$} \psline[linewidth=0.04cm,arrowsize=0.05291667cm 2.0,arrowlength=1.4,arrowinset=0.4]{->}(8.971875,7.7607813)(8.971875,7.1607814) \psline[linewidth=0.04cm,arrowsize=0.05291667cm 2.0,arrowlength=1.4,arrowinset=0.4]{->}(8.971875,7.1607814)(8.971875,6.5607815) \usefont{T1}{ppl}{m}{n} \rput(9.451875,4.500781){\small $u_{k+1}$} \psdots[dotsize=0.12](8.971875,4.360781) \usefont{T1}{ppl}{m}{n} \rput(9.251875,7.1407814){\small $u_1$} \psline[linewidth=0.04cm,arrowsize=0.05291667cm 2.0,arrowlength=1.4,arrowinset=0.4]{->}(8.971875,4.960781)(8.971875,4.360781) \psdots[dotsize=0.12](2.971875,2.3607812) \psdots[dotsize=0.12](4.971875,2.3607812) \psdots[dotsize=0.12](6.971875,2.3607812) \psdots[dotsize=0.12](8.971875,2.3607812) \psdots[dotsize=0.12](12.971875,2.3607812) \psdots[dotsize=0.12](17.171875,2.3607812) \usefont{T1}{ppl}{m}{n} \rput(17.421875,2.3807812){\small $d$} \usefont{T1}{ppl}{m}{n} \rput(2.611875,2.3607812){\small $t_{1,2}$} \usefont{T1}{ppl}{m}{n} \rput(4.591875,2.3807812){\small $t_{1,3}$} \usefont{T1}{ppl}{m}{n} \rput(6.611875,2.3607812){\small $t_{1,4}$} \usefont{T1}{ppl}{m}{n} \rput(8.611875,2.3607812){\small $t_{2,4}$} \usefont{T1}{ppl}{m}{n} \rput(13.421875,2.3407812){\small $t_{e_m}$} \psline[linewidth=0.04cm,arrowsize=0.05291667cm 2.0,arrowlength=1.4,arrowinset=0.4]{->}(8.971875,4.360781)(2.971875,2.3607812) \psline[linewidth=0.04cm,arrowsize=0.05291667cm 2.0,arrowlength=1.4,arrowinset=0.4]{->}(8.971875,4.360781)(4.971875,2.3607812) \psline[linewidth=0.04cm,arrowsize=0.05291667cm 2.0,arrowlength=1.4,arrowinset=0.4]{->}(8.971875,4.360781)(6.971875,2.3607812) \psline[linewidth=0.04cm,arrowsize=0.05291667cm 2.0,arrowlength=1.4,arrowinset=0.4]{->}(8.971875,4.360781)(8.971875,2.3607812) \psline[linewidth=0.04cm,arrowsize=0.05291667cm 2.0,arrowlength=1.4,arrowinset=0.4]{->}(8.971875,4.360781)(12.971875,2.3607812) \psline[linewidth=0.04cm,arrowsize=0.05291667cm 2.0,arrowlength=1.4,arrowinset=0.4]{->}(8.971875,4.360781)(17.171875,2.3607812) \psdots[dotsize=0.12](1.971875,-4.639219) \psdots[dotsize=0.12](3.971875,-4.639219) \psline[linewidth=0.04cm,arrowsize=0.05291667cm 2.0,arrowlength=1.4,arrowinset=0.4]{->}(1.971875,-4.639219)(3.971875,-4.639219) \psline[linewidth=0.04cm,arrowsize=0.05291667cm 2.0,arrowlength=1.4,arrowinset=0.4]{->}(3.971875,-4.639219)(5.471875,-4.639219) \psdots[dotsize=0.12](6.971875,-4.639219) \psbezier[linewidth=0.04,arrowsize=0.05291667cm 2.0,arrowlength=1.4,arrowinset=0.4]{->}(3.971875,-4.639219)(3.971875,-5.3592186)(6.311875,-5.2992187)(6.971875,-4.639219) \psline[linewidth=0.04cm,arrowsize=0.05291667cm 2.0,arrowlength=1.4,arrowinset=0.4]{->}(5.471875,-4.639219)(6.971875,-4.639219) \psline[linewidth=0.04cm,arrowsize=0.05291667cm 2.0,arrowlength=1.4,arrowinset=0.4]{->}(13.971875,-4.639219)(15.871875,-4.639219) \psline[linewidth=0.04cm,linestyle=dotted,dotsep=0.16cm](7.871875,-4.639219)(13.171875,-4.639219) \psbezier[linewidth=0.04,arrowsize=0.05291667cm 2.0,arrowlength=1.4,arrowinset=0.4]{->}(6.971875,-4.639219)(7.471875,-5.3392186)(12.971875,-5.039219)(13.971875,-4.639219) \psbezier[linewidth=0.04,arrowsize=0.05291667cm 2.0,arrowlength=1.4,arrowinset=0.4]{->}(6.971875,-4.639219)(7.721875,-6.039219)(14.171875,-5.8392186)(15.971875,-4.639219) \psbezier[linewidth=0.04,arrowsize=0.05291667cm 2.0,arrowlength=1.4,arrowinset=0.4]{->}(5.471875,-4.639219)(5.471875,-5.639219)(13.371875,-5.4392185)(13.971875,-4.639219) \psbezier[linewidth=0.04,arrowsize=0.05291667cm 2.0,arrowlength=1.4,arrowinset=0.4]{->}(5.471875,-4.639219)(5.471875,-6.3392186)(14.911269,-6.2392187)(15.971875,-4.639219) \psdots[dotsize=0.12](5.471875,-4.639219) \psbezier[linewidth=0.04,arrowsize=0.05291667cm 2.0,arrowlength=1.4,arrowinset=0.4]{->}(3.971875,-4.639219)(3.971875,-6.639219)(13.571875,-6.539219)(13.971875,-4.639219) \psbezier[linewidth=0.04,arrowsize=0.05291667cm 2.0,arrowlength=1.4,arrowinset=0.4]{->}(3.971875,-4.639219)(3.971875,-7.3392186)(15.971875,-7.4392185)(15.971875,-4.7392187) \usefont{T1}{ppl}{m}{n} \rput(1.601875,-4.619219){\small $s$} \usefont{T1}{ppl}{m}{n} \rput(3.921875,-4.3392186){\small $s_1$} \usefont{T1}{ppl}{m}{n} \rput(5.361875,-4.3592186){\small $s_2$} \usefont{T1}{ppl}{m}{n} \rput(6.881875,-4.3592186){\small $s_3$} \usefont{T1}{ppl}{m}{n} \rput(16.241875,-4.6992188){\small $g$} \usefont{T1}{ppl}{m}{n} \rput(13.881875,-4.3992186){\small $s_n$} \psdots[dotsize=0.12](15.971875,-2.6592188) \psline[linewidth=0.04cm,arrowsize=0.05291667cm 2.0,arrowlength=1.4,arrowinset=0.4]{->}(15.971875,-4.659219)(15.971875,-2.6592188) \usefont{T1}{ppl}{m}{n} \rput(17.351875,-1.5592188){\small $3$} \usefont{T1}{ppl}{m}{n} \rput(15.681875,-2.6392188){\small $f$} \psdots[dotsize=0.12](7.971875,-0.63921875) \psdots[dotsize=0.12](11.971875,-0.63921875) \psdots[dotsize=0.12](15.671875,-0.6592187) \usefont{T1}{ppl}{m}{n} \rput(7.661875,-0.51921874){\small $s_{1,2}$} \usefont{T1}{ppl}{m}{n} \rput(10.921875,-0.87921876){\small $s_{1,3}$} \usefont{T1}{ppl}{m}{n} \rput(11.921875,-0.87921876){\small $s_{1,4}$} \usefont{T1}{ppl}{m}{n} \rput(12.901875,-0.89921874){\small $s_{2,4}$} \psline[linewidth=0.04cm,linestyle=dotted,dotsep=0.16cm](9.371875,2.3607812)(12.271875,2.3607812) \psline[linewidth=0.04cm,linestyle=dotted,dotsep=0.16cm](13.371875,-0.63921875)(15.371875,-0.63921875) \usefont{T1}{ppl}{m}{n} \rput(15.731875,-0.8592188){\small $s_{e_m}$} \psbezier[linewidth=0.04,arrowsize=0.05291667cm 2.0,arrowlength=1.4,arrowinset=0.4]{->}(15.971875,-2.6592188)(16.03345,-2.3592188)(16.071875,0.14078125)(15.838542,0.24078125)(15.605208,0.34078124)(15.671875,-0.05921875)(15.671875,-0.6592187) \psbezier[linewidth=0.04,arrowsize=0.05291667cm 2.0,arrowlength=1.4,arrowinset=0.4]{->}(15.971875,-2.7408514)(16.171875,-1.9592187)(16.271875,-0.8592188)(16.071875,0.04078125)(15.871875,0.94078124)(14.971875,0.36078125)(14.471875,0.16078125)(13.971875,-0.03921875)(13.671875,-0.23921876)(12.971875,-0.63921875) \psbezier[linewidth=0.04,arrowsize=0.05291667cm 2.0,arrowlength=1.4,arrowinset=0.4]{->}(15.971875,-2.6592188)(16.571875,-1.7592187)(16.371876,-0.25921875)(16.171875,0.54078126)(15.971875,1.3407812)(13.971875,0.26078126)(13.471875,0.06078125)(12.971875,-0.13921875)(12.571875,-0.33921874)(11.971875,-0.63921875) \psbezier[linewidth=0.04,arrowsize=0.05291667cm 2.0,arrowlength=1.4,arrowinset=0.4]{->}(15.971875,-2.6592188)(16.571875,-1.780149)(16.671875,0.36868823)(16.071875,0.95473474)(15.471875,1.5407813)(13.171875,0.46403706)(12.671875,0.26078126)(12.171875,0.057525437)(11.771875,-0.23921876)(10.971875,-0.63921875) \psbezier[linewidth=0.04,arrowsize=0.05291667cm 2.0,arrowlength=1.4,arrowinset=0.4]{->}(15.971875,-2.6392188)(16.608503,-1.6793108)(16.971874,0.57384825)(16.365133,1.2173147)(15.758391,1.8607812)(10.971875,0.7607812)(10.371875,0.56078124)(9.771875,0.36078125)(8.477493,-0.3314404)(7.971875,-0.6453773) \psdots[dotsize=0.12](6.971875,-1.6392188) \psdots[dotsize=0.12](6.971875,-2.1392188) \psdots[dotsize=0.12](8.971875,-1.6392188) \psdots[dotsize=0.12](8.971875,-2.1392188) \psdots[dotsize=0.12](2.971875,0.36078125) \psdots[dotsize=0.12](3.971875,-0.63921875) \psdots[dotsize=0.12](1.971875,-0.63921875) \psdots[dotsize=0.2](15.971875,-4.659219) \psdots[dotsize=0.12](13.971875,-4.639219) \psline[linewidth=0.04cm,arrowsize=0.05291667cm 2.0,arrowlength=1.4,arrowinset=0.4]{->}(2.971875,2.3607812)(2.971875,0.36078125) \psline[linewidth=0.04cm,arrowsize=0.05291667cm 2.0,arrowlength=1.4,arrowinset=0.4]{->}(2.971875,0.36078125)(1.971875,-0.63921875) \psline[linewidth=0.04cm,arrowsize=0.05291667cm 2.0,arrowlength=1.4,arrowinset=0.4]{->}(2.971875,0.36078125)(3.971875,-0.63921875) \psline[linewidth=0.04cm,arrowsize=0.05291667cm 2.0,arrowlength=1.4,arrowinset=0.4]{->}(7.971875,-0.63921875)(6.971875,-1.6392188) \psline[linewidth=0.04cm,arrowsize=0.05291667cm 2.0,arrowlength=1.4,arrowinset=0.4]{->}(7.971875,-0.63921875)(8.971875,-1.6392188) \psdots[dotsize=0.12](7.971875,-3.5392187) \psdots[dotsize=0.12](8.971875,-2.6392188) \psdots[dotsize=0.12](6.971875,-2.6392188) \psline[linewidth=0.04cm,arrowsize=0.05291667cm 2.0,arrowlength=1.4,arrowinset=0.4]{->}(6.971875,-1.6392188)(6.971875,-2.1392188) \psline[linewidth=0.04cm,arrowsize=0.05291667cm 2.0,arrowlength=1.4,arrowinset=0.4]{->}(6.971875,-2.1392188)(6.971875,-2.6392188) \psline[linewidth=0.04cm,arrowsize=0.05291667cm 2.0,arrowlength=1.4,arrowinset=0.4]{->}(8.971875,-1.6392188)(8.971875,-2.1392188) \psline[linewidth=0.04cm,arrowsize=0.05291667cm 2.0,arrowlength=1.4,arrowinset=0.4]{->}(8.971875,-2.1392188)(8.971875,-2.6392188) \psdots[dotsize=0.12](2.471875,-2.6392188) \psline[linewidth=0.04cm,arrowsize=0.05291667cm 2.0,arrowlength=1.4,arrowinset=0.4]{->}(1.971875,-0.63921875)(2.471875,-2.6392188) \psdots[dotsize=0.12](1.271875,-2.6392188) \psline[linewidth=0.04cm,arrowsize=0.05291667cm 2.0,arrowlength=1.4,arrowinset=0.4]{->}(1.971875,-0.63921875)(1.271875,-2.6392188) \usefont{T1}{ppl}{m}{n} \rput(4.131875,-1.7392187){\small $1$} \psline[linewidth=0.04cm,arrowsize=0.05291667cm 2.0,arrowlength=1.4,arrowinset=0.4]{->}(1.271875,-2.6392188)(3.971875,-4.639219) \usefont{T1}{ppl}{m}{n} \rput(7.671875,-2.5992188){\small $2$} \usefont{T1}{ppl}{m}{n} \rput(3.151875,-1.5392188){\small $2$} \psbezier[linewidth=0.04,arrowsize=0.05291667cm 2.0,arrowlength=1.4,arrowinset=0.4]{->}(8.971875,-1.6193919)(8.060186,-1.5175894)(5.171875,-1.0392188)(4.048758,-1.1103797)(2.9256413,-1.1815405)(3.7440827,-2.0066016)(2.471875,-2.6392188) \usefont{T1}{ppl}{m}{n} \rput(1.351875,-1.6392188){\small $1$} \usefont{T1}{ppl}{m}{n} \rput(2.541875,0.40078124){\small $b_{1,2}$} \usefont{T1}{ppl}{m}{n} \rput(1.541875,-0.63921875){\small $\ell_{1,2}$} \usefont{T1}{ppl}{m}{n} \rput(4.351875,-0.5392187){\small $h_{1,2}$} \usefont{T1}{ppl}{m}{n} \rput(0.841875,-2.6592188){\small $v_{1,2}$} \usefont{T1}{ppl}{m}{n} \rput(2.731875,-2.8392189){\small $w_{1,2}$} \usefont{T1}{ppl}{m}{n} \rput(4.401875,-2.6392188){\small $y_{1,2}$} \usefont{T1}{ppl}{m}{n} \rput(8.321875,-3.5792189){\small $z_{1,2}$} \usefont{T1}{ppl}{m}{n} \rput(6.541875,-1.6592188){\small ${\alpha}_{1,2}$} \usefont{T1}{ppl}{m}{n} \rput(6.491875,-2.0992188){\small ${\beta}_{1,2}$} \usefont{T1}{ppl}{m}{n} \rput(6.521875,-2.6392188){\small ${\gamma}_{1,2}$} \usefont{T1}{ppl}{m}{n} \rput(9.361875,-1.6192187){\small ${\delta}_{1,2}$} \usefont{T1}{ppl}{m}{n} \rput(9.381875,-2.0992188){\small ${\kappa}_{1,2}$} \usefont{T1}{ppl}{m}{n} \rput(9.361875,-2.6392188){\small ${\lambda}_{1,2}$} \psbezier[linewidth=0.04,arrowsize=0.05291667cm 2.0,arrowlength=1.4,arrowinset=0.4]{->}(15.971875,-4.639219)(16.671875,-4.2592187)(16.971874,-3.4392188)(17.071875,-2.6392188)(17.171875,-1.8392187)(17.171875,0.86078125)(17.171875,2.3607812) \psbezier[linewidth=0.04,arrowsize=0.05291667cm 2.0,arrowlength=1.4,arrowinset=0.4]{->}(3.971875,-0.63921875)(5.471875,-0.94993305)(5.1200542,-3.5106473)(6.071875,-3.524933)(7.023696,-3.5392187)(6.871875,-3.5392187)(7.971875,-3.5392187) \psdots[dotsize=0.12](3.971875,-2.6392188) \psline[linewidth=0.04cm,arrowsize=0.05291667cm 2.0,arrowlength=1.4,arrowinset=0.4]{->}(3.971875,-0.63921875)(3.971875,-2.6392188) \psbezier[linewidth=0.04,arrowsize=0.05291667cm 2.0,arrowlength=1.4,arrowinset=0.4]{->}(3.971875,-2.6392188)(3.971875,-3.4392188)(4.771875,-4.3392186)(5.471875,-4.639219) \psdots[dotsize=0.12](10.971875,-0.63921875) \psdots[dotsize=0.12](12.971875,-0.63921875) \psbezier[linewidth=0.04,arrowsize=0.05291667cm 2.0,arrowlength=1.4,arrowinset=0.4]{->}(1.971875,-4.639219)(2.371875,-5.166048)(4.751875,-5.3592186)(5.471875,-4.639219) \psbezier[linewidth=0.04,arrowsize=0.05291667cm 2.0,arrowlength=1.4,arrowinset=0.4]{->}(1.971875,-4.639219)(2.671875,-5.8392186)(7.171875,-6.139219)(6.971875,-4.639219) \psbezier[linewidth=0.04,arrowsize=0.05291667cm 2.0,arrowlength=1.4,arrowinset=0.4]{->}(1.971875,-4.639219)(1.971875,-7.9392185)(16.071875,-7.639219)(13.971875,-4.639219) \psline[linewidth=0.04cm,arrowsize=0.05291667cm 2.0,arrowlength=1.4,arrowinset=0.4]{->}(6.971875,-1.6392188)(7.971875,-3.5392187) \psdots[dotsize=0.12](17.171875,3.3607812) \psline[linewidth=0.04cm,arrowsize=0.05291667cm 2.0,arrowlength=1.4,arrowinset=0.4]{->}(17.171875,2.3607812)(17.171875,3.3607812) \usefont{T1}{ppl}{m}{n} \rput(17.451876,3.3607812){\small $d'$} \psline[linewidth=0.04cm,arrowsize=0.05291667cm 2.0,arrowlength=1.4,arrowinset=0.4]{->}(6.971875,-2.6392188)(6.971875,-3.1392188) \psdots[dotsize=0.12](6.971875,-3.1392188) \usefont{T1}{ppl}{m}{n} \rput(7.151875,-2.8992188){\small $3$} \usefont{T1}{ppl}{m}{n} \rput(6.551875,-3.1792188){\small ${\eta}_{1,2}$} \psline[linewidth=0.04cm,arrowsize=0.05291667cm 2.0,arrowlength=1.4,arrowinset=0.4]{->}(8.971875,-2.6392188)(8.971875,-3.1392188) \psdots[dotsize=0.12](8.971875,-3.1392188) \usefont{T1}{ppl}{m}{n} \rput(9.381875,-3.1392188){\small ${\mu}_{1,2}$} \usefont{T1}{ppl}{m}{n} \rput(8.791875,-2.8992188){\small $3$} \end{pspicture}  } 
\end{center}

\caption{\label{fig:varIIacyclicNP}Directed graph $\gscg=(\vscg,\escg)$ in instance
$\mathcal{W}$}

\end{figure}
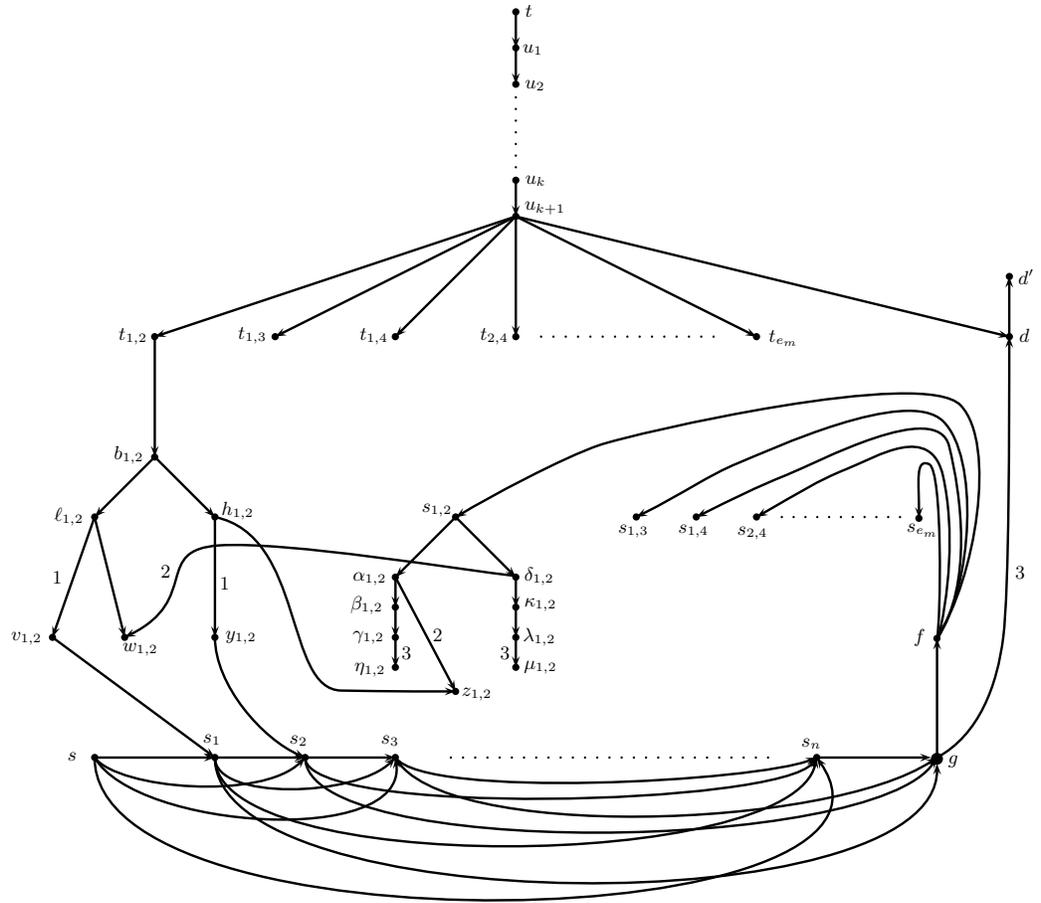

For the sake of readability, Fig.~\ref{fig:varIIacyclicNP} displays
the detailed construction of the graph for the edge $e=\{1,2\}\in \evc$.
Note that $\gscg$ is an acyclic directed graph. \\
The costs of the edges in graph $\gscg=(\vscg,\escg)$ are defined as
follows. Let $c(g,d)=3$. For each $e=\{i,j\}\in \evc$ let $c(\ell_{i,j},v_{i,j})=c(h_{i,j},y_{i,j})=1$,
$c(\delta_{i,j},w_{i,j})=c(\alpha_{i,j},z_{i,j})=2$, and $c(\gamma_{i,j},\eta_{i,j})=c(\lambda_{i,j},\mu_{i,j})=2$.
For the remaining edges $e'\in \escg$ let $c(e')=\eps$ with $\eps=\frac{1}{|\escg|}$.
Note that in any instance of \noun{Vertex Cover} obviously $k\leq n$
must hold. Hence, the size of $\gscg$ is polynomial in the size of $\gvc$,
since 
\[
|\vscg|=6+(k+1)+n+17m\,\in\mathcal{O}(n+m)
\]
 and 
\[
|\escg|=5+k+2n+\frac{(n-1)n}{2}+21m\,\in\mathcal{O}(n^{2}+m)
\]
In instance $\mathcal{W}$ of \scg, we are given the directed
graph $\gscg$ and the two players $A,B$. $A$ starts in vertex $s$,
while $B$ starts in vertex $t$. $A$ is the first player to move
along an edge.

\textbf{Part 2: The Reduction.} We show that the following claim holds:\\
Instance $\mathcal{W}$ of \pathdec\ ends with $c(A)<1$ if and only
if in instance $\mathcal{V}$ of \noun{Vertex Cover,} graph $\gvc=(\vvc,\evc)$
admits a vertex cover of size at most $k$.

By construction of $\gscg$, the first $(k+1)$ edges player $B$ selects
obviously form the path $(t,u_{1},u_{2},\ldots,u_{k+1})$. 
\begin{quote}
\textbf{Observation~1.} If the head of the $(k+1)$-th edge selected
by player $A$ is different from $g$, then the game ends with $c(A)>3$.

\textbf{Proof of Observation~1.} \\
\uline{\noun{Case }}\uline{a}\uline{\noun{:}} Assume that
$A$ has already visited $g$ before selecting the $(k+1)$-th edge.
I.e., after selecting the $(k+1)$-the edge, player $A$ is either
in one of the vertices $d,d'$ or in a vertex of the set $E'\cup\{f\}$,
where $E':=\{s_{i,j},\alpha_{i,j},\beta_{i,j},\gamma_{i,j},\delta_{i,j},\kappa_{i,j},\lambda_{i,j},z_{i,j},w_{i,j}|\{i,j\}\in \evc\}$.\\
If $A$ is in $d$ or $d'$, $A$ must have moved along the edge $(g,d)$
of cost $3$ because this is the only way $A$ can reach the vertices
$d,d'$. Thus, observation~1 is obviously true.\\
Assume that $A$ is in $E'\cup\{f\}$. 
\begin{itemize}
\item \vspace{-0.1cm}First, suppose $A$ is in one of the vertices in $E'$.
We will show that this in fact is not feasible, because otherwise
a meeting point is not possible. W.l.o.g., assume $A$ to be in one
of the vertices of $E'$ with index $\{i,j\}=\{1,2\}$. Then,
it is not hard to verify that the possible meeting points are either
(i) $w_{1,2}$ or $z_{1,2}$, or (ii) the vertices $s,s_{1},\ldots,s_{n},g$.
\\
To reach one of the vertices $w_{1,2}$ or $z_{1,2}$ (if reachable
for $A$), $A$ has at most $2$ more edges to select before $A$
gets stuck in one of these vertices; however, $B$, who is currently
in $u_{k+1}$, still needs to move along at least four edges to reach
one of $w_{1,2}$, $z_{1,2}$. Hence, meeting in one of these vertices
is impossible, because the players need to take turns at selecting
edges. For the vertices listed in (ii), $B$ needs to move along at
least $5$ edges to reach one of them; however, $A$ has at most four
edges to select before $A$ is stuck in a vertex. Hence, meeting in
one of these vertices is impossible as well. 
\item Second, suppose that $A$ is in $\{f\}$, i.e., has chosen edge $(g,f)$
as her $(k+1)$-th edge. Recall that $B$ is in $u_{k+1}$ after selecting
her $(k+1)$-th edge. Now, $A$ has to move along an edge $(f,s_{i,j})$.
W.l.o.g., assume that $A$ moves along $(f,s_{1,2})$. Clearly, the
best strategy for $B$ is to connect with $A$ as soon as possible,
only moving along edges of cost $\eps$. Moving to $d$ is infeasible,
since $A$ cannot reach $d$ anymore. Hence, $B$ moves along $(u_{k+1},t_{1,2})$.
Now $A$ has to choose between moving to $\alpha_{1,2}$ and $\delta_{1,2}$.
Assume that $A$ moves to $\alpha_{1,2}$ (the case $A$ moves to
$\delta_{1,2}$ follows in analogous manner). Next, $B$ necessarily
selects edge $(t_{1,2},b_{1,2})$. 

\begin{itemize}
\item If $A$ moves along the edge $(\alpha_{1,2},z_{1,2})$, $A$ is stuck
in $z_{1,2}$. However, $B$ cannot end the game with the selection
of just one additional edge, hence a meeting point is not possible. 
\item Hence, $A$ needs to move along the edge $(\alpha_{1,2},\beta_{1,2})$
instead. After $B$'s next move (either to $h_{1,2}$ or $\ell_{1,2}$),
$A$ necessarily moves along $(\beta_{1,2},\gamma_{1,2})$. Clearly,
irrespective of whether $B$ is in $h_{1,2}$ or $\ell_{1,2}$, the
game does not end after $B$'s next move. Hence, $A$ needs to select
an additional edge, which necessarily is the edge $(\gamma_{1,2},\eta_{1,2})$
of cost $3$. 
\end{itemize}
\end{itemize}
\uline{\noun{Case }}\uline{b}\uline{\noun{:}} Assume that
$A$ has not visited $g$ yet. Since $g$ is not the head of the $(k+1)$-th
edge selected by $A$, by construction of $\gscg$ that means that after
both players have selected $(k+1)$ edges, $A$ is in some vertex
$s_{i}$, $i\in \vvc$, while $B$ is located in $u_{k+1}$. \\
Irrespective of $A$'s next move (after which $A$ is in some $s_{i}$,
$i\in \vvc$ or in $g$), it is $B$'s best choice to move to $d$: As
a consequence, $A$ is forced to move along $(g,d)$ of cost $3$
either immediately or after each of the players have moved along another
edge of cost $\eps$.\kommentar{If $A$ moves to $g$ now, then $B$'s
best choice is to move to $d$, forcing $A$ to end the game by moving
along $(g,d)$ of cost $3$, because otherwise a meeting point is
no longer possible. \\
If $A$ does not move to $g$, i.e., moves along an edge $(s_{i},s_{j})$,
$1\leq i<j\leq n$, still it is $B$'s best choice to move to $d$:
it is easy to verify that otherwise $B$ necessarily has to move along
at least three edges of cost $\eps$, resulting in an outcome $c(B)\geq((k+1)+3)\eps$.
Hence, $B$ moves along $(u_{k+1},d)$. This forces $A$ to move along
$(s_{j},g)$, because otherwise a meeting point would not be possible
any more, since the players need to alternately select edges. Next,
$B$ moves along $(d,d')$ and $A$ needs to move along the costly
edge $(g,d)$. The respective costs are $c(A)>3$ and $c(B)=((k+1)+2)\eps$.}

Summing up, we have shown that instance $\mathcal{W}$ of \pathdec\ ends
with $c(A)>3$ if $g$ is not the head of the $(k+1)$-th edge selected
by player $A$.\hfill$\diamondsuit$
\end{quote}
\emph{Assume that the $(k+1)$-th edge selected by $A$ ends in $g$,
i.e., after both players have selected $(k+1)$ edges, $A$ is in
$g$ and $B$ in $u_{k+1}$.} Let $V':=\{i\in \vvc|A\mbox{ has visited }s_{i}\}$.
Note that $|V'|=k$.

\uline{\noun{Case I:}} $A$ moves along $(g,d$). Clearly, then
$B$ moves along $(u_{k+1},d)$; the game hence ends with $c(A)>3$
and $c(B)=(k+2)\eps$.

\uline{\noun{Case II:}} $A$ moves along $(g,f$). Now, it is $B$'s
turn. In what follows, we argue that $c(B)>1$ -- or, equivalently,
$c(A)<1$ -- if and only if $V'$ is a vertex cover of graph $\gvc$
in instance $\mathcal{V}$. In other words, we show that $c(B)<1$
if and only if $B$ is able to move along an edge $(u_{k+1},t_{i,j})$
such that $\{i,j\}\cap V'=\emptyset$. We illustrate this by assuming
that, w.l.o.g., $B$ moves along $(u_{k+1},t_{1,2})$. 

\uline{\noun{Case II}}\uline{a}\uline{\noun{:}} $\{1,2\}\cap V'=\emptyset$.

\begin{table}
{\small }%
\begin{tabular}{|c|c||cccccccccccc||l|}
\hline 
\multirow{2}{*}{\noun{\small Case}{\small{} IIa(1)(i)({*})\hspace{0.15cm}}} & {\small $A$} & {\small $f$} &  & {\small \hs$s_{1,2}$} &  & {\small \hs$\alpha_{1,2}$} &  & {\small \hs$\beta_{1,2}$} &  & {\small \hs$\gamma_{1,2}$} &  & {\small \hs$\eta_{1,2}$} &  & {\small $c(A)=3+\eps(k+6)$}\tabularnewline
\cline{2-15} 
 & {\small $B$} &  & {\small \hs$t_{1,2}$} &  & {\small \hs$b_{1,2}$} &  & {\small \hs$h_{1,2}$} &  & {\small \hs$y_{1,2}$} &  & {\small \hs$s_{2}$} &  & {\small \hs$g$} & {\small $c(B)=1+\eps(k+6)$}\tabularnewline
\hline 
\end{tabular}{\small }\\
{\small }%
\begin{tabular}{|c|c||cccccccc||l|}
\hline 
\multirow{2}{*}{\noun{\small Case}{\small{} IIa(1)(i)({*}{*})}} & {\small $A$} & {\small $f$} &  & {\small \hs$s_{1,2}$} &  & {\small \hs$\alpha_{1,2}$} &  & {\small \hs$z_{1,2}$} &  & {\small $c(A)=2+\eps(k+4)$}\tabularnewline
\cline{2-11} 
 & {\small $B$} &  & {\small \hs$t_{1,2}$} &  & {\small \hs$b_{1,2}$} &  & {\small \hs$h_{1,2}$} &  & {\small \hs$z_{1,2}$} & {\small $c(B)=\eps(k+5)$}\tabularnewline
\hline 
\end{tabular}{\small \par}

\caption{\label{tab:dir-acyclic-Case-IIa}Moves and outcomes in case IIa(1)
in instance $\mathcal{W}$. }
\end{table}

\begin{itemize}\item[(1)]$A$ moves along $(f,s_{1,2})$. $B$'s
next move is along $(t_{1,2},b_{1,2})$. Now, $A$ needs to decide
between moving to $\alpha_{1,2}$ or $\delta_{1,2}$. Assume $A$
chooses the former. We show that for $B$, the optimal behaviour is
to move to $h_{1,2}$, leading to an outcome%
\footnote{It is not hard to see that by analogous arguments we can conclude
the game also ends with $c(A)=2+\eps((k+1)+3)$ and $c(B)=((k+1)+4)\eps$
in the case that $A$ moves to $\delta_{1,2}$ instead of $\alpha_{1,2}$. %
} of $c(A)=2+\eps((k+1)+3)$ and $c(B)=((k+1)+4)\eps$ (see also Table~\ref{tab:dir-acyclic-Case-IIa}).

\begin{itemize}\item[(i)]Assume $B$ moves to $h_{1,2}$. 

\begin{itemize}\item[(*)]If $A$ moves along $(\alpha_{1,2},\beta_{1,2})$,
$B$ must not move along $(h_{1,2},z_{1,2})$, since otherwise a meeting
point is no longer possible. Hence, $B$ has to move along $(h_{1,2},y_{1,2})$,
imposing additional costs of $1$ on player $B$. Next, $A$ moves
along the only available edge $(\beta_{1,2},\gamma_{1,2})$. Now,
$B$ has no other choice but to move along $(y_{1,2},s_{2})$. Recall
that, by assumption, $s_{2}$ has not been visited yet. I.e., the
game is not over yet. Hence, $A$ needs to move along one more edge,
i.e., the edge $(\gamma_{1,2},\eta_{1,2})$ of cost $3$. In the next
move, $B$ ends the game by moving, e.g., along edge $(s_{2},g)$.
Summing up, $A$'s decision to move along $(\alpha_{1,2},\beta_{1,2})$
results in an outcome of $c(A)>3$. 

\item[(**)]If $A$ moves along $(\alpha_{1,2},z_{1,2})$ of cost
$2$ instead, in the next step the game ends with $B$'s move along
$(h_{1,2},z_{1,2})$. The respective costs are $c(A)=2+\eps((k+1)+3)$
and $c(B)=((k+1)+4)\eps$. 

\end{itemize}\emph{Hence, $A$ will move along $(\alpha_{1,2},z_{1,2})$
instead of $(\alpha_{1,2},\beta_{1,2})$ after $B$'s move to $h_{1,2}$.
The game ends with }$c(A)=2+\eps(k+4)$ and $c(B)=\eps(k+5)$.

\item[(ii)]Assume $B$ moves to $\ell_{1,2}$. Irrespective of $A$'s
next move, $B$ must not move along $(\ell_{1,2},w_{1,2})$, because
there is no edge emanating from $w_{1,2}$ and $A$ is unable to reach
$w_{1,2}$ from $\alpha_{1,2}$. Hence, $B$ would have to move along
$(\ell_{1,2},v_{1,2})$ of cost $1$, which exceeds the total cost
$B$ would have to carry if $B$ moved along $h_{1,2}$ instead of
$\ell_{1,2}$. \end{itemize}

\item[(2)]$A$ moves along $(f,s_{i,j})$ for some $\{i,j\}\in \evc\setminus\{1,2\}$.
$B$ moves along $(t_{1,2},b_{1,2})$. Clearly, $A$ and $B$ cannot
meet in one of the vertices $w_{i,j},z_{i,j}$ by construction of
$\gscg$. Hence, $B$ must traverse one of the vertices $s_{1},s_{2}$
in order to enable a meeting point. Now, $A$ has exactly four more
edges to select before $A$ gets stuck in a vertex (i.e, in $\eta_{i,j}$
or $\mu_{i,j}$), while $B$ needs to select exactly three more edges
to reach one of the vertices $s_{1},s_{2}$. However, since none of
the vertices have been visited yet, that means that $B$ would have
to select a further edge in order to end the game. Hence, $A$ needs
to move along either $(\gamma_{i,j},\eta_{i,j})$ or $(\lambda_{i,j},\mu_{i,j})$,
i.e., along an edge of cost $3$. \end{itemize}

Consequently, $A$'s optimal strategy is to move along $(f,s_{1,2})$.
\\
\emph{Summing up, in Case IIa the game ends with $c(A)=2+\eps(k+4)$
and $c(B)=\eps(k+5)$.}\\

\uline{\noun{Case II}}\uline{b:} $\{1,2\}\cap V'\not=\emptyset$.
Assume that $1\in V'$ (the case $2\in V'$ follows by analogous arguments). 

\begin{itemize}\item[(1)]$A$ moves along $(f,s_{1,2})$. $B$ moves
along $(t_{1,2},b_{1,2})$. Assume that $A$ moves along $(s_{1,2},\delta_{1,2})$
(the case that $A$ moves along $(s_{1,2},\alpha_{1,2})$ follows
analogously).

\begin{table}
{\small }%
\begin{tabular}{|c|c||cccccccccc|}
\hline 
\multirow{2}{*}{\noun{\small Case}{\small{} IIb(1)(i)\hspace{0.15cm}}} & {\small $A$} & {\small $f$} &  & {\small \hs$s_{1,2}$} &  & {\small \hs$\delta_{1,2}$} &  & {\small \hs$\kappa_{1,2}$} &  & {\small \hs$\lambda_{1,2}$} & \tabularnewline
\cline{2-12} 
 & {\small $B$} &  & {\small \hs$t_{1,2}$} &  & {\small \hs$b_{1,2}$} &  & {\small \hs$h_{1,2}$} &  & {\small \hs$y_{1,2}$} &  & {\small \hs$s_{2}$}\tabularnewline
\hline 
\end{tabular}{\small \par}

\caption{\label{tab:dir-acyclic-Case-IIb}Situation in Case~IIb(1)(i) in instance
$\mathcal{W}$. }
\end{table}

\begin{itemize}\item[(i)]$B$ moves along $(b_{1,2},h_{1,2})$. Now,
if $A$ moves along $(\delta_{1,2},w_{1,2})$, a meeting point is
no longer possible. Hence, $A$ needs to move along $(\delta_{1,2},\kappa_{1,2})$.
Clearly, $B$ has to move along $(h_{1,2},y_{1,2})$ with cost $1$,
otherwise a meeting point is not possible. Now, $A$ needs to move
along $(\kappa_{1,2},\lambda_{1,2})$, followed by $B$ moving along
the edge $(y_{1,2},s_{2})$. Table~\ref{tab:dir-acyclic-Case-IIb}
summarizes the moves performed.

\begin{itemize}\item[(*)]If $s_{2}$ has been visited already by
$A$, i.e., if $2\in V'$, then the game ends, the total costs being
$c(A)=\eps((k+1)+5)$ and $c(B)=1+\eps(k+5)$. 

\item[(**)]Otherwise, $A$ needs to move along one more edge, i.e.,
the edge $(\lambda_{1,2},\mu_{1,2})$ of cost $3$, before the game
ends with $B$'s last move, e.g., along the edge $(s_{2},g)$. In
this case the respective costs are $c(A)=3+\eps((k+1)+5)$ and $c(B)=1+\eps(k+6)$.
\end{itemize}

\item[(ii)]$B$ moves along $(b_{1,2},\ell_{1,2})$. Now, $A$ has
the choice between $(\delta_{1,2},w_{1,2})$ and $(\delta_{1,2},\kappa_{1,2})$
(see also Table~\ref{tab:dir-acyclic-Case-IIb-ii}).

\begin{table}
{\small }%
\begin{tabular}{|c|c||cccccccc||l|}
\hline 
\multirow{2}{*}{\noun{\small Case}{\small{} IIb(1)(ii)({*})\hspace{0.15cm}}} & {\small $A$} & {\small $f$} &  & {\small \hs$s_{1,2}$} &  & {\small \hs$\delta_{1,2}$} &  & {\small \hs$w_{1,2}$} &  & $c(A)=2+\eps(k+4)$\tabularnewline
\cline{2-11} 
 & {\small $B$} &  & {\small \hs$t_{1,2}$} &  & {\small \hs$b_{1,2}$} &  & {\small \hs$\ell_{1,2}$} &  & {\small \hs$w_{1,2}$} & {\small $c(B)=\eps(k+5)$}\tabularnewline
\hline 
\end{tabular}{\small }\\
{\small }%
\begin{tabular}{|c|c||cccccccccc||l|}
\hline 
\multirow{2}{*}{\noun{\small Case}{\small{} IIb(1)(ii)({*}{*})}} & {\small $A$} & {\small $f$} &  & {\small \hs$s_{1,2}$} &  & {\small \hs$\delta_{1,2}$} &  & {\small \hs$\kappa_{1,2}$} &  & \hs$\lambda_{1,2}$ &  & $c(A)=\eps(k+6)$\tabularnewline
\cline{2-13} 
 & {\small $B$} &  & {\small \hs$t_{1,2}$} &  & {\small \hs$b_{1,2}$} &  & {\small \hs$\ell_{1,2}$} &  & {\small \hs$v_{1,2}$} &  & \hs$s_{1}$ & {\small $c(B)=1+\eps(k+5)$}\tabularnewline
\hline 
\end{tabular}{\small \par}

\caption{\label{tab:dir-acyclic-Case-IIb-ii}Moves and outcomes in Case~IIb(1)(ii)
in instance $\mathcal{W}$. }
\end{table}

\begin{itemize}\item[(*)]If $A$ chooses $(\delta_{1,2},w_{1,2})$
with cost $2$, then $B$ obviously moves along $(\ell_{1,2},w_{1,2})$
and the game ends. In this case, the costs of $A$ exceed $2$, i.e.,
$c(A)>2$ holds.

\item[(**)]If $A$ chooses the cheaper edge $(\delta_{1,2},\kappa_{1,2})$
with cost $\eps$, then $B$ needs to move along $(\ell_{1,2},v_{1,2})$
of cost $1$ to enable a meeting point. In the next steps, $A$ moves
along $(\kappa_{1,2},\lambda_{1,2})$, followed by $B$ moving along
$(v_{1,2},s_{1})$. This last move ends the game, because by assumption
$s_{1}$ has been visited by $A$, i.e., $1\in V'$ holds. The respective
costs are $c(A)=((k+1)+5)\eps$ and $c(B)=1+\eps(k+5)$. \end{itemize}\end{itemize}

Hence, if $B$ moves along $(b_{1,2},\ell_{1,2})$, then $A$ moves
along $(\delta_{1,2},\kappa_{1,2})$, which finally leads to an outcome
of $c(A)=((k+1)+5)\eps$ and $c(B)=1+\eps(k+5)$. The same outcome
is achieved if $B$ moves along $(b_{1,2},h_{1,2})$ instead and $2\in V'$
holds; if the latter is not the case, $B$ does not move along $(b_{1,2},h_{1,2})$
because this would lead to a worse outcome for $B$. \\
\emph{To sum up, if $A$ moves along $(f,s_{1,2})$, then the game
ends with $c(A)=((k+1)+5)\eps$ and $c(B)=1+\eps(k+5)$.}

\item[(2)]$A$ moves along $(f,s_{i,j})$ for some $\{i,j\}\in \evc\setminus\{1,2\}$.
After $B$'s move along $(t_{1,2},b_{1,2})$, assume that $A$ moves
along $(s_{i,j},\delta_{i,j})$ (the case $A$ moves along $(s_{i,j},\alpha_{i,j})$
follows by analogous arguments). As above, $B$ has the choice between
$(b_{1,2},h_{1,2})$ and $(b_{1,2},\ell_{1,2})$. 

\begin{itemize}\item[(i)]$B$ moves along $(b_{1,2},h_{1,2})$. With
exactly the same argumentation as above, it is not hard to see that
the game ends with (i) $c(A)=((k+1)+5)\eps$ and $c(B)=1+\eps(k+5)$
if $2\in V'$ or (ii) $c(A)=3+\eps((k+1)+5)$ and $c(B)=1+\eps(k+6)$
if $2\notin V'$.

\item[(ii)]$B$ moves along $(b_{1,2},\ell_{1,2})$. Now, $A$ has
no choice but to move along $(\delta_{i,j},\kappa_{i,j})$, since
moving to $w_{i,j}$ makes $A$ getting stuck in $w_{i,j}$, and $B$
is unable to reach a meeting point by moving along a single edge emanating
from $\ell_{1,2}$. Exactly as above, it follows that the game ends
with $c(A)=((k+1)+5)\eps$ and $c(B)=1+\eps(k+5)$.\end{itemize}

\emph{Hence, analogously to above it follows that the game ends with
$c(A)=((k+1)+5)\eps$ and $c(B)=c(B)=1+\eps(k+5)$ if $A$ moves along
$(f,s_{i,j})$ for some $\{i,j\}\in \evc\setminus\{1,2\}$. }\end{itemize}

\emph{As a consequence, Case IIb ends with $c(A)=((k+1)+5)\eps$ and
$c(B)=1+\eps(k+5)$.}\\

\emph{\uline{\noun{Conclusion:}}}\uline{ }If (i) the head of
the $(k+1)$-th edge selected by $A$ is not $g$, or (ii) $A$ moves
along $(g,d)$, then the game ends with $c(A)>3$ (Observation~1
resp.\ Case~I). On the other hand, in Case II $A$ is guaranteed
an outcome of less than $3$. \\
As a consequence, in our considered instance $\mathcal{W}$ player
$A$ is in vertex $g$ after selecting her $(k+1)$-th edge and moves
along $(g,f)$ in the next step. Thus, Case IIa ($V'$ is not a vertex
cover of size $k$ in the graph $\gvc$) or Case IIb ($V'$ is a vertex
cover of size $k$ in the graph $\gvc$) applies. In particular, it follows
that instance $\mathcal{W}$ of \pathdec\ ends with $c(A)=((k+1)+5)\eps<1$
(or equivalently, $c(B)>1$) if and only if there is a vertex cover
of size $k$ in instance $\mathcal{V}$ of \textsc{Vertex Cover}.
Therewith, \scg\ is $\np$-hard for directed acyclic graphs. \newline
Finally, we extend this $\np$-hardness result to directed acyclic graphs that are bipartite. 
In order to do so, we modify instance $\mathcal{W}$ with graph $\gscg=(\vscg,\escg)$ into 
instance $\mathcal{W}'$ with graph $G'$ by ``splitting'' each edge of $\escg$; 
i.e., $G'$ is created from $\gscg$ by, for each $e=(u,v)\in \escg$ of cost $c$, introducing
vertex $m_e$  and replacing $e$ with the two edges $(u,m_e)$ and $(m_e ,v)$ of cost $\frac{c}{2}$ each.
It is easy to see that the resulting graph is bipartite. 
It is also not hard to verify that the nature of the game is preserved and arguing analogously to above yields the desired result.~\qed

\section{\scg\ on Cactus Graphs}
\label{sec:cactus}

After the negative complexity results of Section~\ref{sec:hard} we try to identify
polynomially solvable special cases and give some indications on the boundary
between $\np$-hard and polynomially solvable cases.
As an easy first step we state a trivial positive result for trees.
Clearly (R3) is satisfied in this case.
 
\begin{prop}\label{th:tree}
If $G$ is a tree, then \opath\ of \scg\ can be determined in $O(n)$ time.
\end{prop}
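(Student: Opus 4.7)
The plan is to exploit the uniqueness of paths in a tree: the trajectories of both players turn out to be essentially forced along the unique simple undirected $s$-$t$ path, so \opath\ can be read off directly in linear time.

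I would begin by computing $P(s,t) = (s = w_0, w_1, \ldots, w_k = t)$, the unique simple path in the underlying undirected tree $T$, via a single BFS or DFS in $O(n)$ time.

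Next I would establish two structural facts. First, any meeting point $m$ must lie on $P(s,t)$: otherwise, the unique $s$-$m$ and $t$-$m$ tree-paths branch off $P(s,t)$ at a common vertex $\ell \in P(s,t)$ and merge only at $m$, so both players visit $\ell$ strictly before $m$ and the game in fact ends at $\ell$, contradicting the choice of $m$. Second, each player's moves are forced along $P(s,t)$. Indeed, if $A$ sits at $w_i$ and chooses an edge into a side-subtree $T'$ hanging off $w_i$, then, since $T$ is a tree, $T'$ has no directed path back to any other vertex of $P(s,t)$; hence $A$'s continuation inside $T'$ either terminates at a dead-end vertex (violating (R1)) or allows $B$ to reach $w_i$ first, in which case the game ends at $w_i$ (which $A$ visited before entering $T'$) with $A$ paying strictly more than necessary. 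By symmetry the same holds for $B$, and the feasibility assumption ensures that the on-path moves $w_i \to w_{i+1}$ for $A$ and $w_{k-i} \to w_{k-i-1}$ for $B$ always exist with the required orientation.

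With the trajectories forced and $A$ moving first, after $A$'s $i$-th move the players sit at $w_i$ and $w_{k-i+1}$, and after $B$'s $i$-th move at $w_i$ and $w_{k-i}$; a direct index computation shows that the first coincidence of visited sets occurs at meeting point $m = w_{\lceil k/2 \rceil}$. The two costs $c(A) = \sum_{j=0}^{\lceil k/2 \rceil - 1} c(w_j, w_{j+1})$ and $c(B) = \sum_{j=\lceil k/2 \rceil}^{k-1} c(w_{j+1}, w_j)$ are each computable in $O(n)$, yielding the claimed bound. The only delicate step is the second structural fact; it rests entirely on the cycle-free nature of $T$, which rules out any escape from a side-subtree back onto $P(s,t)$ and thereby kills all potential non-trivial strategic behavior.
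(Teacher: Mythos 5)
Your starting point is reasonable and your first structural fact is fine: the median argument correctly shows that any meeting point must lie on the unique undirected $s$-$t$ path $P(s,t)$. The proof breaks down in the two claims you build on top of it. The assertion that the feasibility assumption ``ensures that the on-path moves $w_i\to w_{i+1}$ for $A$ and $w_{k-i}\to w_{k-i-1}$ for $B$ always exist with the required orientation'' is false. The tree is the \emph{underlying undirected} graph of a directed instance, and by the paper's convention antiparallel arcs would create a cycle, so every tree edge carries exactly one orientation; no path edge can be traversed by both players. Feasibility only guarantees that \emph{some} play exists, and one checks easily that it forces the orientations along $P(s,t)$ to switch exactly once: the edges usable by $A$ form a prefix $w_0\to\cdots\to w_p$ and those usable by $B$ the complementary suffix $w_k\to\cdots\to w_p$. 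Hence the unique possible meeting point is this switch vertex $w_p$, which in general is not $w_{\lceil k/2\rceil}$ (for instance, all path edges may point from $s$ to $t$, with $B$ surviving on a cheap out-chain hanging off $t$; the game is feasible, the meeting point is $t$, and $B$ never traverses a single path edge).

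Consequently your second fact and the cost formula are also wrong. Leaving $P(s,t)$ strictly before $w_p$ is indeed excluded, but the correct reason is that it makes any meeting unreachable and thus violates (R1) — not that it is merely ``more expensive'' — since a player who enters a side subtree can never return and the opponent can never cross the switch vertex to reach it. More importantly, because the players alternate and cannot pass, the player whose forced segment to $w_p$ is shorter arrives early and \emph{must} keep moving in side subtrees hanging off $w_p$ until the other player arrives; these waiting edges are paid for, and the early player selects a cheapest directed walk of the prescribed length there. Your formula $c(A)=\sum_{j<\lceil k/2\rceil}c(w_j,w_{j+1})$, $c(B)=\sum_{j\geq\lceil k/2\rceil}c(w_{j+1},w_j)$ therefore identifies the wrong meeting vertex and omits the waiting cost entirely (in the example above it would assign $B$ cost $0$). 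The proposition itself still holds: compute $P(s,t)$, locate $w_p$, charge $A$ the prefix sum and $B$ the suffix sum, determine from $p$, $k$ and the fact that $A$ moves first how many waiting moves the early player owes, and add the cost of a cheapest directed path of that length in the subtrees rooted at $w_p$; all of this is a couple of depth-first searches and runs in $O(n)$. But as written, your argument does not establish \opath\ correctly.
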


As a graph class which is only slightly more complex than trees, 
many authors consider cactus graphs for optimization problems.
A {\em cactus graph} 
is a graph where each edge is contained in at most one simple cycle.
Equivalently, any two simple cycles have at most one vertex in common.
This means that one could contract each cycle into a vertex in a unique way
and obtain a tree.
Note that cactus graphs are a subclass of series-parallel graphs and
thus have treewidth at most $2$.

Rather surprisingly, we will show in the section below that \scg\
is already $\np$-hard on cactus graphs without restriction (R3).
This is remarkable since it contrasts related games such as 
\geo\ or \spg (cf.~\citet{bo93} and \citet{dps15a}). 
On the other hand, imposing (R3) makes the problem polynomially solvable on cacti.

\subsection{Hardness  for \scg\ on Cactus Graphs}
\label{sec:cactushard}

To show the $\np$-hardness of \scg\ on cactus graphs
in its general form without condition (R3)
we use a reduction from the well known strongly $\np$-complete problem \textsc{$3$-Partition}:
\begin{quote}
\noun{$3$-Partition}

\textbf{Input:} A multiset $C=\{\tilde{c}_{1},\tilde{c}_{2},...,\tilde{c}_{3n-1},\tilde{c}_{3n}\}$ of $3n$ integers
such that $\sum_{i=1}^{3n} = n \tilde{K}$.

\textbf{Question:} Is there a partition of $C$ into $n$ triples such that the elements of each triple sum up to exactly $\tilde{K}$. 
\end{quote}

\begin{theorem}\label{th:hardcactus}
\scg\ is strongly \np -hard for directed cactus graphs.
\end{theorem}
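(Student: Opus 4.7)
The plan is a reduction from the strongly NP-complete problem \noun{$3$-Partition}. Given an instance with multiset $C=\{\tilde{c}_1,\ldots,\tilde{c}_{3n}\}$ summing to $n\tilde{K}$, I will build a directed cactus $G$, designated vertices $s,t$, and thresholds $C_A,C_B$, so that the \opath\ of \scg\ on $G$ meets the thresholds if and only if $C$ admits a $3$-partition. The driving idea is that, without (R3), a simple directed cycle may be traversed fully, incurring a cost equal to the sum of its edge weights. Hence cycles can serve as \emph{bins} that a player is forced to ``pay'' for, while item weights $\tilde{c}_i$ can be encoded as edge costs on small, edge-disjoint detour cycles so that the cactus property is preserved.

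The concrete construction I would pursue is the following. First, create $n$ disjoint \emph{bin cycles} $B_1,\ldots,B_n$, joined in a tree-like fashion at single articulation vertices so that each edge belongs to at most one cycle. Each $B_r$ contains designated \emph{slots} that can be ``filled'' by routing through edge-disjoint item detours. Second, for each item $i$ introduce an \emph{item gadget}: a tiny two-edge cycle whose traversal cost is $\tilde{c}_i$ (plus a common base constant), which can be entered only from one specific bin and must be exited back into that same bin (using the reverse arc, permitted by (R2) since the two directed edges form a two-cycle). Third, connect $s$ and $t$ through a spine that funnels one player, say $A$, into a sequence of $n$ ``bin-choice'' vertices, at each of which $A$ picks which bin to enter, while player $B$ is routed along a parallel structure that forces $B$ to traverse each item gadget exactly once across the whole game. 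Finally, calibrate the base edge costs so that $B$'s total cost equals an affine function of $\max_r |\sum_{i \in I_r}\tilde{c}_i - \tilde K|$, where $I_r$ is the multiset of items routed through $B_r$.

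The correspondence then runs as follows. A would-be valid partition corresponds to an assignment of exactly three items to each bin whose weights sum to $\tilde K$, which is precisely the situation in which no bin overflows or underflows and $B$'s total cost attains a value bounded by $C_B$; any deviation forces some bin to incur extra penalty cost (the item weights there exceed $\tilde K$, forcing another, under-filled bin to require a padding traversal that $A$ can exploit against $B$, or vice versa). The threshold $C_A$ is set so that $A$'s best response is precisely to make the partition-encoding choices, since choosing a bin that would be overfilled forces $A$ itself onto an expensive detour. Care must be taken to enforce that exactly three items land in each bin; this can be arranged by making each bin cycle ``accept'' exactly three item-entries before it closes (e.g.\ by having three entry arcs per bin and ensuring that a fourth entry is ruled out by (R2), since the remaining arcs would make a meeting point unreachable, violating (R1)). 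Strong NP-hardness follows because all costs are polynomial in $\tilde K$ and $n$, and the graph is polynomially sized.

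The main obstacle is respecting the cactus restriction while still allowing enough strategic ``cross-talk'' between $A$'s bin choices and $B$'s routing: in a cactus, two distinct cycles can share at most a single vertex, so the coupling between items and bins cannot be implemented by shared edges. I plan to overcome this by routing the shared information through long paths of articulation vertices and by using the game's turn-based structure (R1 forbids moves that prevent a meeting) as a combinatorial lever, so that $A$'s past choices force $B$'s future ones through feasibility alone rather than through shared edges. The bulk of the proof will then consist of a careful case analysis showing that (i) any strategy deviating from the canonical partition-encoding one is strictly worse for the deviating player, and (ii) the numerical gap between the ``yes'' and ``no'' outcomes is strictly positive so that the thresholds $C_A, C_B$ cleanly separate the two cases.
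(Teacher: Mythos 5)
Your proposal shares only the starting point with the paper (a reduction from \noun{$3$-Partition}); the intended encoding is genuinely different, and as it stands it has gaps that go beyond missing routine details. First, the item-to-bin assignment mechanism -- the heart of any \noun{$3$-Partition} reduction -- is not actually specified, and the one concrete sentence you give is self-defeating: if each item gadget ``can be entered only from one specific bin,'' then the partition is hard-wired into the construction and there is nothing left for the players' strategies to decide. Making an item reachable from all $n$ bins while keeping every edge in at most one simple cycle is exactly the coupling obstacle you yourself flag, and deferring it to ``long paths of articulation vertices and the turn-based structure'' is not a proof step; it is the open problem. Second, the claimed cost identity (``$B$'s total cost equals an affine function of $\max_r |\sum_{i\in I_r}\tilde{c}_i-\tilde{K}|$'') is asserted without any mechanism; player costs are additive over traversed edges, and extracting a max-type penalty from an equilibrium of this game needs a concrete gadget and a backward-induction argument you have not sketched. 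Third, there is a conceptual slip: in \pathdec\ the thresholds $C_A,C_B$ only test the outcome of \opath; they do not and cannot shape $A$'s best responses, so ``the threshold $C_A$ is set so that $A$'s best response is precisely to make the partition-encoding choices'' confuses the decision wrapper with the game itself. The equilibrium must be pinned down by the edge costs and feasibility (R1)/(R2) alone. Finally, the ``exactly three items per bin'' enforcement via three entry arcs and (R2) is not argued; nothing in your sketch prevents a player from skipping an entry or taking a fourth one, and (R2) by itself limits edge reuse, not the number of distinct entries.

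For contrast, the paper resolves the coupling problem without any weight-as-cost encoding: almost all edges have cost $0$, the items appear as \emph{cycle lengths} $c_i=M+2\tilde{c}_i$ in a bouquet of cycles at a single hub vertex of $A$, the target appears as $n$ cycles of length $K=3M+2\tilde{K}$ at a hub of $B$, and the only interaction between the two bouquets is through timing: whoever returns to their hub when it is their turn must commit to the next cycle, and (R1) forces the players to stay ``synchronized.'' In a yes-instance $A$ can answer each $K$-cycle of $B$ with three item cycles summing exactly to $K$, eventually forcing $B$ onto the unique cost-$2$ cycle, so $B$ prefers a direct cost-$1$ path and the outcome is $(0,1)$; in a no-instance synchronization must break, $B$ escapes via small parity-adjusting cycles, and the outcome is $(1,0)$. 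If you want to salvage your plan, you would need to either invent a comparable timing/parity mechanism or give an explicit cactus gadget that lets one player's routing choices assign items to bins while the other player's cost certifies the sums -- neither of which is present in the current proposal.
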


\proof
Given a \noun{$3$-Partition} instance $I$, 
we consider an equivalent \noun{$3$-Partition} instance $J$ 
which is defined as follows. Let $M  = 2 n \tilde{K}$, $c_i = M + 2 \tilde{c}_i$ and $K = 3 M + 2 \tilde{K}$. 
Then $J$ asks for a partition of the $c_i$ into triples such that each triple 
sums to exactly $K$. Clearly $J$ is polynomially bounded in the size of $I$. 

Given $J$, we will construct an instance of \scg\ for players $A$ and $B$ 
by means of a cactus graph $G_{3P}$ 
as illustrated by Figure \ref{fig:cac}.
All edge costs are $0$ with the exception of two edges emanating in $a_4$ resp.\ $b_4$
with cost $1$ resp.\ $2$ (marked in red).
The black integers describe the number of edges of the corresponding cycles. 
Player $A$ and $B$ start at vertices $a_1$ and $b_1$ respectively; $A$ is first to choose an edge.  

The high-level idea of the construction can be described as follows:
Whenever $J$ is a "yes" instance, players $A$ and $B$ 
reach a \opath\ with cost $(0,1)$, whereas in a "no" instance 
\opath\ has cost $(1,0)$. 
In case of a "yes" instance player $B$ can dictate a path starting at $b_1$ going to $b_2$, $b_3$ and $b_4$. At $b_4$ player $B$ moves 
through the cycles of length $K$ and returns back to $b_4$ while avoiding the cycle 
of length $3M-2$ with cost $2$. 
From there the path continues via $b_5$, $b_3$ and $b_2$ to $t$, the unique meeting point.
Note that at $b_5$ some of the red edges might be used. 
In case of a "no" instance $A$ can force $B$ to use the cycle of length $3M-2$ with cost $2$
at $b_4$ whenever $B$ starts by moving to $b_2$ (and onwards to $b_4$).
Therefore, $B$ chooses the direct path of length $3$ to $t$ with cost $1$.

\medskip
Now we proceed with the details of the construction.

\textbf{Let $J$ be a "yes" instance of} \noun{$3$-Partition}.
We will show that \opath\ of the corresponding \scg\ yields costs $(0,1)$, i.e., $0$ cost to $A$ a cost of $1$ to $B$. 
$A$ starts in $a_1$ and $B$ in $b_1$. $A$ has no choice but to move to $a_2$. 
Now, going to $t$ directly by the unique path consisting of three edges only has cost $1$ for $B$ and cost $0$ for $A$; in the following, we will argue that indeed this is the path choosen by $B$ in \opath.

Assume that $B$'s first move is to go to vertex $b_2$. 
At this point of the game, player $A$ has no choice but to move towards $a_3$ since otherwise a feasible solution is not possible because $B$ would get stuck. 
Analogously, player $B$ now cannot move towards $t$ from $b_2$ (via the path of length four), since this does not allow for a feasible solution. 
Hence $B$ has to go towards $b_3$. 
When the players are in $a_3$ and $b_3$ respectively, 
$A$ cannot go towards $t$ by going ``back'' towards $a_2$ since again $B$ would get stuck.
Hence, $A$ moves towards $a_4$. It is not difficult to see that this means that $B$ has to move towards $b_4$ 
because otherwise a feasible outcome is not possible anymore.
As a consequence, $A$ and $B$ end up in  $a_4$ and $b_4$ respectively, with $B$ having to choose the next edge.

Going directly to $t$ (via $b_5$ and $b_3$) is not feasible for $B$ since this requires $B$ to traverse at least $8$ edges; $A$ however needs only $6$ moves directly or at least $6 + \frac{3M}{4}$ moves via the shortest cycle at $a_4$.
Also, going to $t$ via $b_5$ and some of the red cycles at $b_5$ does not allow for a feasible solution because in that way,  $B$ can traverse at most  $8 + \frac{M}{2} + 5 \max_{i}\{\tilde{c}_i\}$ edges.

Now, $B$ tries to avoid the cycle of length $3M-2$ since it contains the most expensive edge in the graph of cost $2$. 
So $B$ can choose either a cycle of length $K$ or the cycle of length $\frac{3M}{4}$ 
(and optionally the second one of the same length). 

In the following we say that $B$ and $A$ are {\em synchronized} whenever $B$ is the one to choose the next move starting from $b_4$ and $A$ at the same time is located at $a_4$. 
Assume that $B$ chooses a cycle of length $K$: 
since $J$ is a "yes" instance $A$ can choose three cycles corresponding to three of the $c_i$ summing up to exactly $K$, so that $B$ and $A$ stay synchronized. 
When $B$ chooses to traverse a cycle of length $\frac{3M}{4}$ (or both of them), $A$ can choose to traverse a cycle of the same length (or both of them). 
By the same argument as above as long as $B$ and $A$ are synchronized $B$ cannot go to $t$. 
Hence, no matter which cycle $B$ chooses $A$ can choose cycles so that they stay synchronized until all cycles of length $K$ are consumed by $B$ and $A$ has only three of the cycles corresponding to three of the $c_i$ left. 
So $B$ has to take the cycle of length $3M-2$ which results in a suboptimal cost of $2$ for $B$ (see remarks at the beginning of this proof). 
Note that this move is feasible since $A$ can use the remaining cycles corresponding to the $c_i$. 
By the length of these cycles, $A$ is still at a vertex of its third chosen cycle when $B$ reaches $b_4$. 
Due to the choice of the $c_i$, $A$ needs an even number of edges until $a_4$ is reached. 
By construction, now $B$ can go to $t$ by using as many 
red edges at $b_5$ as needed in order to guarantee a feasible path for $A$ 
(in particular, note that the number of red edges used by $B$ is even too). 
Realizing this outcome with cost $2$ resulting from the move to $b_2$,
$B$ will instead choose the direct path from $b_1$ to $t$ of length three with cost $1$, 
and $A$ can choose the path via $a_2$ with zero cost. 

\medskip
\textbf{Let $J$ be a "no" instance of} \noun{$3$-Partition}. 
We will argue that in this case $B$ can always choose a path with cost $0$. 
Now $B$ avoids the direct path from $b_1$ to $t$ which would have cost $1$. 
By the same arguments as in the "yes" case $B$ goes to $b_4$ and $A$ to $a_4$ and they are synchronized. 
$B$ can choose cycles of length $K$ and $A$ is not able to stay synchronized with $B$ by only choosing cycles corresponding to $c_i$ until all cycles of length $K$ are consumed by $B$. 
This follows by the fact that the corresponding \noun{$3$-Partition} instance is a "no" instance. 
In the following we will analyze all possible moves by $A$ which lead to a loss of synchronization. 
Assume that at stage $r_0$ (i.e., after player $B$ has traversed exactly $r_0$ edges and $A$ has traversed $r_0+1$ edges, i.e., it is player $B$'s turn to choose an edge)  the players are synchronized.
Assume $B$ chooses a cycle of length $K$ and $A$ chooses cycles such that at stage $r_1 = r_0 + 2 K$ they are not synchronized any more. Note that  $B$ is the one to choose the next edge at $r_1$. 
At stage $r_1$, $B$ has at least one cycle of length $K$ and 
$A$ at least six cycles corresponding to six $c_i$ left to choose because $J$ is a "no" instance. 

\smallskip

\textit{Case 1:} $A$ answers with three cycles corresponding to three $c_i$ such that at $r_1$ they are not synchronized any more. 
Recall that $K= 3M + 2 \tilde{K}$ and $c_i = M + 2\tilde{c}_i$, hence there are two possibilities for being not synchronized at $r_1$.

\textit{Case 1.1:} $A$ did not yet finish its third cycle corresponding to some $c_i$ at $r_1$.\\
In this case, $A$ has still less than 
$$ 3  (M + 2\max_{i}\{\tilde{c}_i\}) - (3M + 2 \tilde{K}) \leq 4 \max_{i}\{\tilde{c}_i\} $$
edges to go in order to reach $a_4$ for the next time. 
Here,  $ 3  (M + 2\max_{i}\{\tilde{c}_i\})$ is the total sum of edges  that $A$ needs for three large cycles corresponding to $c_i$;   $3M + 2 \tilde{K} = K$ is the number of edges that $B$ needs in order to reach $b_4$ at $r_1$. 
Note that $\tilde{K} > \tilde{c}_i$ for all $i$.  
But now $B$ can feasibly go to $t$ by using as many of the red cycles at $b_5$ as necessary in order to feasibly force $A$ to $t$ via the edge of cost $1$ at $a_4$ (see Figure~\ref{fig:1.1}). The game thus ends with a total cost of $1$ for $A$ and $0$ for $B$.

\begin{figure}
\begin{center}

\scalebox{0.8} 
{
\begin{pspicture}(0,-2.2192187)(14.782,2.2192187)
\definecolor{color1294}{rgb}{1.0,0.0,0.023529411764705882}
\definecolor{color1296}{rgb}{0.0,0.050980392156862744,1.0}
\definecolor{color1299}{rgb}{0.00392156862745098,0.00392156862745098,0.00392156862745098}
\psline[linewidth=0.024cm,tbarsize=0.07055555cm 5.0]{|*-|*}(0.77,0.30578125)(7.77,0.30578125)
\usefont{T1}{ptm}{m}{n}
\rput(0.7,0.0){$r_0$}
\usefont{T1}{ptm}{m}{n}
\rput(7.8,0.0){$r_1$}
\usefont{T1}{ptm}{m}{n}
\rput(0.34453124,1.2157812){$A:$}
\usefont{T1}{ptm}{m}{n}
\rput(0.34453124,-0.38421875){$B:$}
\psline[linewidth=0.024cm,tbarsize=0.07055555cm 5.0]{|*-|*}(7.77,0.30578125)(14.77,0.30578125)
\psline[linewidth=0.024cm,linecolor=color1294,tbarsize=0.07055555cm 5.0]{|*-|*}(0.77,-0.49421874)(7.77,-0.49421874)
\usefont{T1}{ptm}{m}{n}
\rput(4.3945312,-0.9842188){$K = 3M +2 \tilde{K}$}
\psline[linewidth=0.024cm,linecolor=color1296,tbarsize=0.07055555cm 5.0]{|*-|*}(0.77,1.1057812)(3.17,1.1057812)
\psline[linewidth=0.024cm,linecolor=color1296,tbarsize=0.07055555cm 5.0]{|*-|*}(3.17,1.1057812)(5.77,1.1057812)
\psline[linewidth=0.024cm,linecolor=color1296,tbarsize=0.07055555cm 5.0]{|*-|*}(5.77,1.1057812)(8.37,1.1057812)
\psline[linewidth=0.024cm,linecolor=color1299,linestyle=dashed,dash=0.16cm 0.16cm,tbarsize=0.07055555cm 5.0]{|-|}(7.77,1.5057813)(8.37,1.5057813)
\usefont{T1}{ptm}{m}{n}
\rput(8.884531,2.0157812){$ \leq 6 \max_{i}\{\tilde{c}_i\} $}
\psline[linewidth=0.024cm,linecolor=color1294,tbarsize=0.07055555cm 5.0]{|*-|*}(7.77,-0.49421874)(7.97,-0.49421874)
\psline[linewidth=0.024cm,linecolor=color1294,tbarsize=0.07055555cm 5.0]{|*-|*}(7.97,-0.49421874)(8.57,-0.49421874)
\psline[linewidth=0.024cm,linecolor=color1294,tbarsize=0.07055555cm 5.0]{|*-|*}(8.77,-0.49421874)(8.57,-0.49421874)
\psline[linewidth=0.024cm,linecolor=color1296,tbarsize=0.07055555cm 5.0]{|*-|*}(8.37,1.1057812)(8.77,1.1057812)
\usefont{T1}{ptm}{m}{n}
\rput(8.464531,0.81578124){$a_4$}
\usefont{T1}{ptm}{m}{n}
\rput(9.034532,1.0157813){$t$}
\usefont{T1}{ptm}{m}{n}
\rput(9.034532,-0.58421874){$t$}
\usefont{T1}{ptm}{m}{n}
\rput(8.074532,-0.7842187){$b_5$}
\psline[linewidth=0.024cm,linecolor=color1299,linestyle=dotted,dotsep=0.16cm](8.35,-0.6742188)(9.35,-1.6742188)
\usefont{T1}{ptm}{m}{n}
\rput(10.0295315,-1.9842187){extra cycles at $b_5$}
\end{pspicture} 
}
\caption{Case 1.1} \label{fig:1.1}
\end{center}
\end{figure}

\medskip

\textit{Case 1.2:} $A$ has finished its third cycle (and reached $a_4$ again) before stage $r_1$ is reached.
Following $B$'s next move after $A$ has reached $a_4$, $A$ has the following three options:

\textit{Case 1.2.1:} $A$ chooses to traverse both cycles of length $\frac{3M}{4}$. 
But now $B$ can simply choose a cycle of length $K$ and go to $t$ via enough red cycles at $b_5$. 
This is feasible since after the cycles of length $\frac{3M}{4}$ $A$ can still choose two cycles corresponding to two $c_i$ and then has a ``delay'' 
(with respect to the moment when $B$ reaches $b_4$ after the cycle of length $K$) which lies in the following interval:
\begin{equation} \label{eq:ul}
\left[\frac{M}{2} - 4 \tilde{K}, \frac{M}{2} +  4 \max_{i}\{\tilde{c}_i\}\right] 
\end{equation}
The lower bound comes from the following estimate. 
At $r_0$ $A$ and $B$ are synchronized, hence at $r_1$ $A$ has used  
at most
 $$(3M + 2 \tilde{K}) - 3(M + 2\min_{i}\{\tilde{c}_i\}) \leq  2 \tilde{K}$$
edges of its first cycle of length $\frac{3M}{4}$. 
Hence, at the moment when $B$ reaches $b_4$ after $r_1$ for the next time, $A$ has at least
$$
\left(  \frac{3M}{2} + 2(M + 2 \min_{i}\{\tilde{c}_i\}) -  2 \tilde{K} \right)   - \left( 3M + 2 \tilde{K} \right)\geq \frac{M}{2} - 4 \tilde{K} 
$$
edges to traverse on the cycle corresponding to the second $c_i$ (cf.~Figure~\ref{fig:1.2}). 

Hence, the game ends with a total cost of $1$ for $A$ and $0$ for $B$.

\begin{figure}
\begin{center}
\scalebox{0.7} 
{
\begin{pspicture}(0,-2.2075)(19.9,2.1795)
\definecolor{color184}{rgb}{1.0,0.0,0.023529411764705882}
\definecolor{color193}{rgb}{0.0,0.050980392156862744,1.0}
\definecolor{color194}{rgb}{0.00392156862745098,0.00392156862745098,0.00392156862745098}
\psline[linewidth=0.024cm,tbarsize=0.07055555cm 5.0]{|*-|*}(0.81,0.7675)(8.81,0.7675)
\usefont{T1}{ppl}{m}{n}
\rput(0.8	,0.4775){$r_0$}
\usefont{T1}{ppl}{m}{n}
\rput(8.8,0.4775){$r_1$}
\usefont{T1}{ppl}{m}{n}
\rput(0.38453126,1.5975){$A:$}
\usefont{T1}{ppl}{m}{n}
\rput(0.34453124,0.0175){$B:$}
\psline[linewidth=0.024cm,linecolor=color184,tbarsize=0.07055555cm 5.0]{|*-|*}(0.81,-0.0325)(8.81,-0.0325)
\usefont{T1}{ppl}{m}{n}
\rput(4.434531,-0.3225){$K = 3M +2 \tilde{K}$}
\usefont{T1}{ppl}{m}{n}
\rput(8.524531,1.3375){$a_4$}
\usefont{T1}{ppl}{m}{n}
\rput(9.574532,1.8975){$\frac{3M}{4}$}
\usefont{T1}{ppl}{m}{n}
\rput(10.874531,1.8975){$\frac{3M}{4}$}
\psline[linewidth=0.024cm,linecolor=color193,tbarsize=0.07055555cm 5.0]{|*-|*}(0.81,1.5675)(3.41,1.5675)
\psline[linewidth=0.024cm,linecolor=color194,tbarsize=0.07055555cm 5.0]{|*-|*}(8.61,1.5675)(10.21,1.5675)
\psline[linewidth=0.024cm,linecolor=color184,tbarsize=0.07055555cm 5.0]{|*-|*}(8.81,-0.0325)(16.81,-0.0325)
\psline[linewidth=0.024cm,tbarsize=0.07055555cm 5.0]{|*-|*}(8.81,0.7675)(16.81,0.7675)
\psline[linewidth=0.024cm,linecolor=color193,tbarsize=0.07055555cm 5.0]{|*-|*}(3.41,1.5675)(6.01,1.5675)
\psline[linewidth=0.024cm,linecolor=color193,tbarsize=0.07055555cm 5.0]{|*-|*}(6.01,1.5675)(8.61,1.5675)
\psline[linewidth=0.024cm,linecolor=color194,tbarsize=0.07055555cm 5.0]{|*-|*}(10.21,1.5675)(11.81,1.5675)
\psline[linewidth=0.024cm,linecolor=color193,tbarsize=0.07055555cm 5.0]{|*-|*}(11.81,1.5675)(14.41,1.5675)
\psline[linewidth=0.024cm,linecolor=color193,tbarsize=0.07055555cm 5.0]{|*-|*}(14.41,1.5675)(17.01,1.5675)
\usefont{T1}{ppl}{m}{n}
\rput(11.084531,-0.2625){$K $}
\psline[linewidth=0.024cm,linecolor=color194,linestyle=dotted,dotsep=0.16cm](8.81,2.1675)(8.81,-0.6325)
\psline[linewidth=0.024cm,linecolor=color194,linestyle=dotted,dotsep=0.16cm](16.81,2.1675)(16.81,-0.6325)
\psline[linewidth=0.024cm,linecolor=color184,tbarsize=0.07055555cm 5.0]{|*-|*}(16.81,-0.0325)(17.81,-0.0325)
\psline[linewidth=0.024cm,linecolor=color193,tbarsize=0.07055555cm 5.0]{|*-|*}(17.01,1.5675)(17.81,1.5675)
\usefont{T1}{ppl}{m}{n}
\rput(18.03453,-0.0425){$t$}
\usefont{T1}{ppl}{m}{n}
\rput(18.054531,1.5575){$t$}
\psline[linewidth=0.024cm,linestyle=dotted,dotsep=0.16cm](17.23,-0.2525)(17.59,-1.6325)
\usefont{T1}{ppl}{m}{n}
\rput(18.018593,-1.9825){possible extra cycles}
\usefont{T1}{ppl}{m}{n}
\rput(16.514532,-0.2625){$b_4$}
\end{pspicture} 
}

\caption{Case 1.2.1} \label{fig:1.2}

\end{center}
\end{figure}
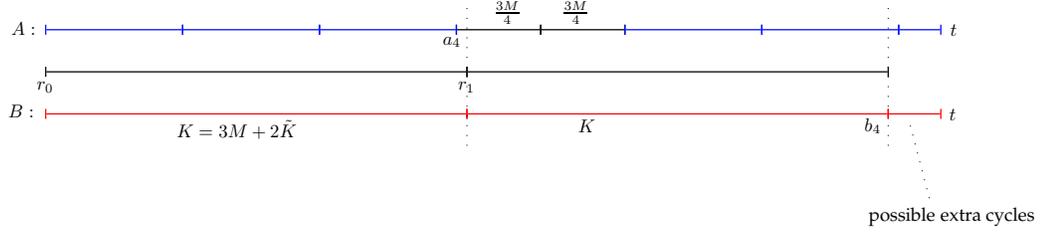

The upper bound of (\ref{eq:ul}) can be derived by the following argument. 
At $r_0$ $A$ and $B$ are synchronized, hence at $r_1$ $A$ has used at least two edges of its first cycle of length $\frac{3M}{4}$. 
Therefore, when $B$ reaches $b_4$ for the first time after stage $r_1$, $A$ has at most
$$
\left(  \frac{3M}{2} + 2(M + 2 \max_{i} \{\tilde{c}_i\}) -  2 \right)  - \left( 3M + 2 \tilde{K} \right)
\leq \frac{M}{2} + 4 \max_{i}\{\tilde{c}_i\}
$$
edges left to traverse on the cycle corresponding to the last $c_i$.

\textit{Case 1.2.2:} $A$ chooses one cycle of length $\frac{3M}{4}$. 
$B$ now has to choose the cycle of length $K$ again ($B$ cannot foresee whether $A$ chooses only one of the cycles of length $\frac{3M}{4}$). 
Now $A$ chooses $3$ cycles corresponding to three $c_i$. 
When $B$ finishes the cycle of length $K$, the number of edges which $A$ needs for finishing the last cycle lies in the interval: 
$$ \left[ \frac{3M}{4} - 4 \tilde{K}, \frac{3M}{4} + 6 \max_{i}\{\tilde{c}_i\} \right]$$
If $A$ has more than $\frac{3M}{4}$ moves left, $B$ chooses one cycle of length $\frac{3M}{4}$ and goes to $t$ via red cycles at $b_5$. 
Otherwise, $B$ chooses  both cycles of length $\frac{3M}{4}$ forcing $A$ to choose another cycle corresponding to a $c_i$ 
(note that at least three of them are still available). 
Finally, the number of edges which $A$ needs for finishing this cycle lies in the interval:  
$$\left[ \frac{M}{4} - 4 \tilde{K}, \frac{M}{4} + 2 \max_{i}\{\tilde{c}_i\} \right]$$
Now again $B$ can reach $t$ via red edges at $b_5$; the final outcome hence yields a cost of $1$ for $A$ and $0$ for $B$.

\textit{Case 1.2.3:} $A$ chooses a cycle corresponding to a $c_i$. 
But now $B$ can choose the cycle of length $\frac{3M}{4}$. 
When $B$ reaches $b_4$ after traversing this cycle, $A$ hast at most  $\frac{M}{4} + 2 \max_{i}\{\tilde{c}_i\}$ edges left on its cycle.
Hence $B$ can again proceed to $t$ via red edges at $b_5$, resulting in a total cost of $1$ for $A$ and $0$ for $B$.

\textit{Case 2:} 
$A$ answers with three cycles containing at least one cycle of length $\frac{3M}{4}$
such that at $r_1$ they are not synchronized any more. 
Recall that at $r_0$ both players were synchronized and $B$ chose a cycle of length $K$.
Now there are five possibilities for $A$ in Case 2:
(1) $A$ chooses one or both cycles of length $\frac{3M}{4}$ followed by cycles corresponding to $c_i$. 
(2) $A$ answers with one cycle corresponding to a $c_i$, one cycle of length $\frac{3M}{4}$ followed by two cycles corresponding to two $c_i$. 
(3) $A$ answers with one cycle corresponding to a $c_i$, both cycles of length $\frac{3M}{4}$ followed by one cycle corresponding to $c_i$. 
(4) $A$ answers with two cycles corresponding to two $c_i$, one cycle of length $\frac{3M}{4}$ followed by one cycle corresponding to $c_i$. 
(5) $A$ answers with two cycles corresponding to two $c_i$ and two cycles of length $\frac{3M}{4}$. 

In any of these cases arguments similar to the arguments of \textit{Case 1.2} show that $B$ can feasibly reach $t$ without using the cycle of length $3M - 2$ with cost $2$. Hence, in this case for \opath\ we have a cost of $1$ for A and $0$ for $B$.~\qed

\begin{sidewaysfigure}
\begin{center}
\scalebox{1.2} 
{
\begin{pspicture}(0,-4.718111)(17.574814,4.716)
\definecolor{color491}{rgb}{0.00392156862745098,0.00392156862745098,0.00392156862745098}
\psarc[linewidth=0.024,linecolor=red,arrowsize=0.05291667cm 2.0,arrowlength=1.4,arrowinset=0.4]{<-}(12.222813,-1.656){0.3}{328.90167}{159.44395}
\psdots[dotsize=0.08,linecolor=red](12.462812,-1.816)
\psline[linewidth=0.024cm,linecolor=red,arrowsize=0.05291667cm 2.0,arrowlength=1.4,arrowinset=0.4]{->}(12.448702,-1.824)(11.948702,-1.544)
\psarc[linewidth=0.024,linecolor=red,arrowsize=0.05291667cm 2.0,arrowlength=1.4,arrowinset=0.4]{<-}(12.762813,-1.956){0.3}{328.90167}{159.44395}
\psline[linewidth=0.024cm,linecolor=red,arrowsize=0.05291667cm 2.0,arrowlength=1.4,arrowinset=0.4]{->}(12.988702,-2.124)(12.488702,-1.844)
\psline[linewidth=0.024cm,linecolor=red,arrowsize=0.05291667cm 2.0,arrowlength=1.4,arrowinset=0.4]{->}(2.2628126,-2.576)(2.2628126,-3.376)
\psline[linewidth=0.024cm,linecolor=red,arrowsize=0.05291667cm 2.0,arrowlength=1.4,arrowinset=0.4]{->}(2.8628125,-2.976)(2.2628126,-2.576)
\psdots[dotsize=0.08,linecolor=red](2.2628126,-3.376)
\psdots[dotsize=0.08,linecolor=red](2.6628125,-3.976)
\psdots[dotsize=0.08,linecolor=red](3.6628125,-4.176)
\psdots[dotsize=0.08,linecolor=red](3.4628124,-3.576)
\psdots[dotsize=0.08,linecolor=red](2.8628125,-2.976)
\psline[linewidth=0.024cm,linecolor=red,arrowsize=0.05291667cm 2.0,arrowlength=1.4,arrowinset=0.4]{->}(2.6628125,-3.976)(3.6628125,-4.176)
\psline[linewidth=0.024cm,linecolor=red,arrowsize=0.05291667cm 2.0,arrowlength=1.4,arrowinset=0.4]{->}(3.6628125,-4.176)(3.4628124,-3.576)
\psline[linewidth=0.024cm,linecolor=red,linestyle=dashed,dash=0.16cm 0.16cm](2.2628126,-3.376)(2.6628125,-3.976)
\psline[linewidth=0.024cm,linecolor=red,linestyle=dashed,dash=0.16cm 0.16cm](3.4628124,-3.576)(2.8628125,-2.976)
\usefont{T1}{ppl}{m}{n}
\rput(2.744531,-3.546){$\frac{3M}{4}$}
\psline[linewidth=0.024cm,linecolor=red,arrowsize=0.05291667cm 2.0,arrowlength=1.4,arrowinset=0.4]{->}(12.462812,-0.176)(12.462812,-0.976)
\psline[linewidth=0.024cm,linecolor=red,arrowsize=0.05291667cm 2.0,arrowlength=1.4,arrowinset=0.4]{->}(12.862813,-0.776)(12.462812,-0.176)
\psdots[dotsize=0.08,linecolor=red](12.062813,0.624)
\psline[linewidth=0.024cm,linecolor=red,arrowsize=0.05291667cm 2.0,arrowlength=1.4,arrowinset=0.4]{->}(12.462812,-0.176)(12.062813,0.624)
\psline[linewidth=0.024cm,linecolor=red,arrowsize=0.05291667cm 2.0,arrowlength=1.4,arrowinset=0.4]{->}(11.462812,0.224)(12.462812,-0.176)
\psdots[dotsize=0.08](7.8628125,-2.576)
\psdots[dotsize=0.08](9.062813,-2.576)
\psdots[dotsize=0.08](10.262813,-2.576)
\psdots[dotsize=0.08](11.462812,-2.576)
\psdots[dotsize=0.08](7.6628127,-1.576)
\psdots[dotsize=0.08](7.4628124,-0.576)
\psdots[dotsize=0.08](7.2628126,0.424)
\psdots[dotsize=0.08](8.862813,-1.776)
\psdots[dotsize=0.08](8.462812,-0.976)
\psdots[dotsize=0.08](7.8628125,-0.176)
\psline[linewidth=0.024cm,arrowsize=0.05291667cm 2.0,arrowlength=1.4,arrowinset=0.4]{->}(7.8628125,-2.576)(9.062813,-2.576)
\psline[linewidth=0.024cm,arrowsize=0.05291667cm 2.0,arrowlength=1.4,arrowinset=0.4]{->}(9.062813,-2.576)(10.262813,-2.576)
\psline[linewidth=0.024cm,arrowsize=0.05291667cm 2.0,arrowlength=1.4,arrowinset=0.4]{->}(10.262813,-2.576)(11.462812,-2.576)
\psline[linewidth=0.024cm,arrowsize=0.05291667cm 2.0,arrowlength=1.4,arrowinset=0.4]{->}(7.8628125,-2.576)(7.6628127,-1.576)
\psline[linewidth=0.024cm,arrowsize=0.05291667cm 2.0,arrowlength=1.4,arrowinset=0.4]{->}(7.6628127,-1.576)(7.4628124,-0.576)
\psline[linewidth=0.024cm,arrowsize=0.05291667cm 2.0,arrowlength=1.4,arrowinset=0.4]{->}(7.4628124,-0.576)(7.2628126,0.424)
\psline[linewidth=0.024cm,arrowsize=0.05291667cm 2.0,arrowlength=1.4,arrowinset=0.4]{->}(9.062813,-2.576)(8.862813,-1.776)
\psline[linewidth=0.024cm,arrowsize=0.05291667cm 2.0,arrowlength=1.4,arrowinset=0.4]{->}(8.862813,-1.776)(8.462812,-0.976)
\psline[linewidth=0.024cm,arrowsize=0.05291667cm 2.0,arrowlength=1.4,arrowinset=0.4]{->}(8.462812,-0.976)(7.8628125,-0.176)
\psline[linewidth=0.024cm,arrowsize=0.05291667cm 2.0,arrowlength=1.4,arrowinset=0.4]{->}(7.8628125,-0.176)(7.2628126,0.424)
\psdots[dotsize=0.08](11.462812,-1.776)
\psdots[dotsize=0.08](12.462812,-0.176)
\psdots[dotsize=0.08](11.662812,-0.976)
\psline[linewidth=0.024cm,arrowsize=0.05291667cm 2.0,arrowlength=1.4,arrowinset=0.4]{->}(11.462812,-2.576)(11.462812,-1.776)
\psline[linewidth=0.024cm,arrowsize=0.05291667cm 2.0,arrowlength=1.4,arrowinset=0.4]{->}(11.462812,-1.776)(11.662812,-0.976)
\psline[linewidth=0.024cm,arrowsize=0.05291667cm 2.0,arrowlength=1.4,arrowinset=0.4]{->}(11.662812,-0.976)(12.462812,-0.176)
\psline[linewidth=0.024cm,arrowsize=0.05291667cm 2.0,arrowlength=1.4,arrowinset=0.4]{->}(12.482813,-0.156)(11.928701,-1.524)
\psarc[linewidth=0.024,arrowsize=0.05291667cm 2.0,arrowlength=1.4,arrowinset=0.4]{<-}(10.265758,-2.2389445){1.2429445}{195.42216}{-14.534455}
\psdots[dotsize=0.08](12.862813,0.424)
\psdots[dotsize=0.08](13.462812,1.424)
\psdots[dotsize=0.08](14.062813,3.624)
\psdots[dotsize=0.08](13.662812,4.024)
\psdots[dotsize=0.08](13.062813,3.824)
\psdots[dotsize=0.08](12.462812,1.424)
\psdots[dotsize=0.08](12.462812,0.624)
\psline[linewidth=0.024cm,arrowsize=0.05291667cm 2.0,arrowlength=1.4,arrowinset=0.4]{->}(12.462812,-0.176)(12.862813,0.424)
\psline[linewidth=0.024cm,arrowsize=0.05291667cm 2.0,arrowlength=1.4,arrowinset=0.4]{->}(12.862813,0.424)(13.462812,1.424)
\psline[linewidth=0.024cm,arrowsize=0.05291667cm 2.0,arrowlength=1.4,arrowinset=0.4]{->}(14.062813,3.624)(13.662812,4.024)
\psline[linewidth=0.024cm,arrowsize=0.05291667cm 2.0,arrowlength=1.4,arrowinset=0.4]{->}(13.662812,4.024)(13.062813,3.824)
\psline[linewidth=0.024cm,arrowsize=0.05291667cm 2.0,arrowlength=1.4,arrowinset=0.4]{->}(12.462812,1.424)(12.462812,0.624)
\psline[linewidth=0.024cm,arrowsize=0.05291667cm 2.0,arrowlength=1.4,arrowinset=0.4]{->}(12.462812,0.624)(12.462812,-0.176)
\usefont{T1}{ppl}{m}{n}
\rput(13.284533,3.514){$K$}
\pscustom[linewidth=0.024,linestyle=dashed,dash=0.16cm 0.16cm]
{
\newpath
\moveto(12.462815,-0.17600012)
\lineto(12.962815,-0.076000124)
\curveto(13.212815,-0.026000122)(13.612814,0.12399988)(13.762813,0.22399987)
\curveto(13.912811,0.32399988)(14.112811,0.6239999)(14.162812,0.8239999)
\curveto(14.212813,1.0239999)(14.312814,1.4239999)(14.362812,1.6239998)
\curveto(14.412811,1.8239999)(14.462812,2.274)(14.462815,2.524)
\curveto(14.462817,2.7740002)(14.462817,3.1740003)(14.462815,3.3240001)
\curveto(14.462812,3.474)(14.562811,3.774)(14.662812,3.924)
\curveto(14.762814,4.074)(15.012814,4.124)(15.162812,4.024)
\curveto(15.312811,3.924)(15.462812,3.5740001)(15.462815,3.3240001)
\curveto(15.462817,3.0740001)(15.412815,2.5740001)(15.362812,2.324)
\curveto(15.31281,2.074)(15.2128105,1.6239998)(15.162812,1.4239999)
\curveto(15.112815,1.2239999)(14.962815,0.8239999)(14.862812,0.6239999)
\curveto(14.76281,0.42399988)(14.51281,0.12399988)(14.362812,0.02399988)
\curveto(14.212815,-0.076000124)(13.862815,-0.22600012)(13.662812,-0.2760001)
\curveto(13.4628105,-0.32600012)(13.06281,-0.32600012)(12.862812,-0.2760001)
}
\pscustom[linewidth=0.024,linestyle=dashed,dash=0.16cm 0.16cm]
{
\newpath
\moveto(12.462815,-0.17600012)
\lineto(12.862812,-0.2760001)
\curveto(13.062811,-0.32600012)(13.462811,-0.42600012)(13.662812,-0.47600013)
\curveto(13.862814,-0.52600014)(14.312814,-0.5760001)(14.562813,-0.5760001)
\curveto(14.812811,-0.5760001)(15.212811,-0.37600014)(15.362812,-0.17600012)
\curveto(15.512814,0.02399988)(15.762814,0.37399986)(15.862812,0.52399987)
\curveto(15.962811,0.6739999)(16.112812,1.0239999)(16.162813,1.2239999)
\curveto(16.212814,1.4239999)(16.262814,1.8739998)(16.262812,2.1239998)
\curveto(16.262812,2.3739998)(16.362812,2.824)(16.462812,3.024)
\curveto(16.562813,3.224)(16.762814,3.5740001)(16.862812,3.724)
\curveto(16.96281,3.8739998)(17.21281,3.8739998)(17.362812,3.724)
\curveto(17.512814,3.5740001)(17.562813,3.1739998)(17.462812,2.9239998)
\curveto(17.362812,2.6739995)(17.16281,2.2739997)(17.06281,2.1239998)
\curveto(16.962812,1.9739999)(16.812813,1.6239998)(16.762812,1.4239999)
\curveto(16.712812,1.2239999)(16.612812,0.77399987)(16.56281,0.52399987)
\curveto(16.51281,0.27399987)(16.362812,-0.12600012)(16.262812,-0.2760001)
\curveto(16.162813,-0.42600012)(15.862814,-0.6260001)(15.662812,-0.6760001)
\curveto(15.462811,-0.72600013)(15.012812,-0.8259995)(14.762813,-0.8759995)
\curveto(14.512814,-0.9259995)(14.012814,-0.9259995)(13.762813,-0.8759995)
\curveto(13.512812,-0.8259995)(13.112812,-0.6260007)(12.962815,-0.47600073)
\curveto(12.812818,-0.32600072)(12.562818,-0.17600073)(12.462815,-0.17600012)
}
\psline[linewidth=0.024cm,linestyle=dotted,dotsep=0.16cm](15.482813,2.764)(16.282812,2.764)
\usefont{T1}{ppl}{m}{n}
\rput(14.844531,3.514){$K$}
\usefont{T1}{ppl}{m}{n}
\rput(16.94453,3.334){$K$}
\usefont{T1}{ppl}{m}{n}
\rput(15.082812,4.474){$(n-1)$ times}
\psdots[dotsize=0.08,linecolor=red](11.662812,1.424)
\psdots[dotsize=0.08,linecolor=red](11.062813,2.024)
\psdots[dotsize=0.08,linecolor=red](10.062813,2.424)
\psdots[dotsize=0.08,linecolor=red](9.862813,1.624)
\psdots[dotsize=0.08,linecolor=red](10.262813,0.824)
\psdots[dotsize=0.08,linecolor=red](11.462812,0.224)
\psline[linewidth=0.024cm,linecolor=red,arrowsize=0.05291667cm 2.0,arrowlength=1.4,arrowinset=0.4]{->}(12.062813,0.624)(11.662812,1.424)
\psline[linewidth=0.024cm,linecolor=red,arrowsize=0.05291667cm 2.0,arrowlength=1.4,arrowinset=0.4]{->}(11.062813,2.024)(10.062813,2.424)
\psline[linewidth=0.024cm,linecolor=red,arrowsize=0.05291667cm 2.0,arrowlength=1.4,arrowinset=0.4]{->}(10.062813,2.424)(9.862813,1.624)
\psline[linewidth=0.024cm,linecolor=red,arrowsize=0.05291667cm 2.0,arrowlength=1.4,arrowinset=0.4]{->}(9.862813,1.624)(10.262813,0.824)
\psline[linewidth=0.024cm,linecolor=red,linestyle=dashed,dash=0.16cm 0.16cm](11.662812,1.424)(11.062813,2.024)
\psline[linewidth=0.024cm,linecolor=red,linestyle=dashed,dash=0.16cm 0.16cm](10.262813,0.824)(11.462812,0.224)
\usefont{T1}{ppl}{m}{n}
\rput(10.474531,1.654){$3M-2$}
\usefont{T1}{ppl}{m}{n}
\rput(12.034532,0.274){\color{red}$2$}
\psdots[dotsize=0.08,linecolor=red](12.462812,-0.976)
\psdots[dotsize=0.08,linecolor=red](12.862813,-1.576)
\psdots[dotsize=0.08,linecolor=red](13.862813,-1.776)
\psdots[dotsize=0.08,linecolor=red](13.662812,-1.176)
\psdots[dotsize=0.08,linecolor=red](12.862813,-0.776)
\psline[linewidth=0.024cm,linecolor=red,arrowsize=0.05291667cm 2.0,arrowlength=1.4,arrowinset=0.4]{->}(12.862813,-1.576)(13.862813,-1.776)
\psline[linewidth=0.024cm,linecolor=red,arrowsize=0.05291667cm 2.0,arrowlength=1.4,arrowinset=0.4]{->}(13.862813,-1.776)(13.662812,-1.176)
\psline[linewidth=0.024cm,linecolor=red,linestyle=dashed,dash=0.16cm 0.16cm](12.462812,-0.976)(12.862813,-1.576)
\psline[linewidth=0.024cm,linecolor=red,linestyle=dashed,dash=0.16cm 0.16cm](13.662812,-1.176)(12.862813,-0.776)
\usefont{T1}{ppl}{m}{n}
\rput(13.004531,-1.306){$\frac{3M}{4}$}
\psdots[dotsize=0.08,linecolor=color491](12.862813,3.024)
\psdots[dotsize=0.08,linecolor=color491](13.862813,2.824)
\psline[linewidth=0.024cm,linecolor=color491,arrowsize=0.05291667cm 2.0,arrowlength=1.4,arrowinset=0.4]{->}(13.862813,2.824)(14.062813,3.624)
\psline[linewidth=0.024cm,linecolor=color491,arrowsize=0.05291667cm 2.0,arrowlength=1.4,arrowinset=0.4]{->}(13.062813,3.824)(12.862813,3.024)
\psline[linewidth=0.024cm,linecolor=color491,linestyle=dashed,dash=0.16cm 0.16cm](12.862813,3.024)(12.462812,1.424)
\psline[linewidth=0.024cm,linecolor=color491,linestyle=dashed,dash=0.16cm 0.16cm](13.862813,2.824)(13.462812,1.424)
\psline[linewidth=0.024cm,linecolor=red,linestyle=dashed,dash=0.16cm 0.16cm](13.542812,-2.416)(15.008701,-3.164)
\usefont{T1}{ppl}{m}{n}
\rput(12.834533,-3.006){$\frac{M}{4} + \frac{5}{2} \max_i \{\tilde{c}_i\}$ }
\usefont{T1}{ppl}{m}{n}
\rput(7.776875,-2.841){\large $B$ }
\psdots[dotsize=0.08,linecolor=color491](5.8628125,-2.576)
\psdots[dotsize=0.08,linecolor=color491](5.2628126,-2.576)
\psdots[dotsize=0.08,linecolor=color491](4.6628127,-2.576)
\psdots[dotsize=0.08,linecolor=color491](4.0628123,-2.576)
\psdots[dotsize=0.08,linecolor=color491](3.5628126,-2.576)
\psdots[dotsize=0.08,linecolor=color491](2.6428125,-2.576)
\psdots[dotsize=0.08,linecolor=color491](2.2628126,-2.576)
\psdots[dotsize=0.08,linecolor=color491](5.8628125,-0.976)
\psline[linewidth=0.024cm,linecolor=color491,arrowsize=0.05291667cm 2.0,arrowlength=1.4,arrowinset=0.4]{->}(5.8628125,-2.576)(5.2628126,-2.576)
\psline[linewidth=0.024cm,linecolor=color491,arrowsize=0.05291667cm 2.0,arrowlength=1.4,arrowinset=0.4]{->}(5.2628126,-2.576)(4.6628127,-2.576)
\psline[linewidth=0.024cm,linecolor=color491,arrowsize=0.05291667cm 2.0,arrowlength=1.4,arrowinset=0.4]{->}(4.6628127,-2.576)(4.0628123,-2.576)
\psline[linewidth=0.024cm,linecolor=color491,arrowsize=0.05291667cm 2.0,arrowlength=1.4,arrowinset=0.4]{->}(5.2628126,-2.576)(5.8628125,-0.976)
\psline[linewidth=0.024cm,linecolor=color491,arrowsize=0.05291667cm 2.0,arrowlength=1.4,arrowinset=0.4]{->}(5.8628125,-0.976)(7.2628126,0.424)
\psarc[linewidth=0.024,linecolor=color491,arrowsize=0.05291667cm 2.0,arrowlength=1.4,arrowinset=0.4]{->}(4.6628127,-2.596){0.6}{-180.0}{0.0}
\psarc[linewidth=0.024,linecolor=color491,arrowsize=0.05291667cm 2.0,arrowlength=1.4,arrowinset=0.4]{<-}(3.1628125,-2.876){0.96}{20.854458}{159.14554}
\psdots[dotsize=0.08,linecolor=color491](1.8628125,-2.176)
\psdots[dotsize=0.08,linecolor=color491](1.4628125,-1.776)
\psdots[dotsize=0.08,linecolor=color491](0.66281253,-1.176)
\psdots[dotsize=0.08,linecolor=color491](0.06281254,-0.976)
\psdots[dotsize=0.08,linecolor=color491](0.26281255,-1.576)
\psdots[dotsize=0.08,linecolor=color491](1.0628126,-2.176)
\psdots[dotsize=0.08,linecolor=color491](1.4628125,-2.376)
\psline[linewidth=0.024cm,linecolor=color491,arrowsize=0.05291667cm 2.0,arrowlength=1.4,arrowinset=0.4]{->}(2.2628126,-2.576)(1.8628125,-2.176)
\psline[linewidth=0.024cm,linecolor=color491,arrowsize=0.05291667cm 2.0,arrowlength=1.4,arrowinset=0.4]{->}(1.8628125,-2.176)(1.4628125,-1.776)
\psline[linewidth=0.024cm,linecolor=color491,arrowsize=0.05291667cm 2.0,arrowlength=1.4,arrowinset=0.4]{->}(0.66281253,-1.176)(0.06281254,-0.976)
\psline[linewidth=0.024cm,linecolor=color491,arrowsize=0.05291667cm 2.0,arrowlength=1.4,arrowinset=0.4]{->}(0.06281254,-0.976)(0.26281255,-1.576)
\psline[linewidth=0.024cm,linecolor=color491,arrowsize=0.05291667cm 2.0,arrowlength=1.4,arrowinset=0.4]{->}(1.0628126,-2.176)(1.4628125,-2.376)
\psline[linewidth=0.024cm,linecolor=color491,arrowsize=0.05291667cm 2.0,arrowlength=1.4,arrowinset=0.4]{->}(1.4628125,-2.376)(2.2628126,-2.576)
\psline[linewidth=0.024cm,linecolor=color491,linestyle=dashed,dash=0.16cm 0.16cm](0.66281253,-1.176)(1.4628125,-1.776)
\psline[linewidth=0.024cm,linecolor=color491,linestyle=dashed,dash=0.16cm 0.16cm](0.26281255,-1.576)(1.0628126,-2.176)
\usefont{T1}{ppl}{m}{n}
\rput(0.4945313,-1.426){$c_i$}
\pscustom[linewidth=0.024,linecolor=color491,linestyle=dashed,dash=0.16cm 0.16cm]
{
\newpath
\moveto(2.2628126,-2.5760002)
\lineto(2.3628125,-2.1759994)
\curveto(2.4128125,-1.9759995)(2.5128126,-1.5759995)(2.5628126,-1.3759996)
\curveto(2.6128125,-1.1760001)(2.6628125,-0.77600014)(2.6628125,-0.5759995)
\curveto(2.6628125,-0.3759989)(2.5128126,-0.07599951)(2.3628125,0.02399988)
\curveto(2.2128124,0.12399896)(2.0128126,0.023998963)(1.9628125,-0.17600012)
\curveto(1.9128125,-0.3759995)(1.8628125,-0.8259995)(1.8628125,-1.0760007)
\curveto(1.8628125,-1.326002)(1.9128125,-1.7760019)(1.9628125,-1.9760001)
\curveto(2.0128126,-2.1759982)(2.1628125,-2.4759989)(2.2628126,-2.5760007)
}
\usefont{T1}{ppl}{m}{n}
\rput(2.134531,-0.466){$c_1$}
\usefont{T1}{ppl}{m}{n}
\rput(0.78453124,-3.126){$c_{3n}$}
\usefont{T1}{ppl}{m}{n}
\rput(2.8345313,-1.766){\color{red}$1$}
\pscustom[linewidth=0.024,linecolor=color491,linestyle=dashed,dash=0.16cm 0.16cm]
{
\newpath
\moveto(2.2628126,-2.5760002)
\lineto(1.8628125,-2.7093332)
\curveto(1.6628125,-2.7759995)(1.2128125,-2.842667)(0.9628125,-2.842667)
\curveto(0.7128125,-2.842667)(0.4128125,-2.9760008)(0.36281243,-3.1093333)
\curveto(0.31281236,-3.2426658)(0.5128124,-3.3759995)(0.7628125,-3.3759995)
\curveto(1.0128126,-3.3759995)(1.4628127,-3.3093333)(1.6628125,-3.2426658)
\curveto(1.8628124,-3.1759982)(2.1628125,-2.8426657)(2.2628126,-2.5760002)
}
\psline[linewidth=0.024cm,linecolor=color491,linestyle=dotted,dotsep=0.16cm](0.9228125,-1.216)(1.7628125,-0.616)
\psline[linewidth=0.024cm,linecolor=color491,linestyle=dotted,dotsep=0.16cm](0.6028125,-1.896)(0.70281255,-2.796)
\psline[linewidth=0.024cm,linecolor=color491,linestyle=dotted,dotsep=0.16cm](16.282812,4.704)(16.202812,4.584)
\usefont{T1}{ppl}{m}{n}
\rput(7.6345315,-2.066){\color{red}$1$}
\usefont{T1}{ppl}{m}{n}
\rput(5.896875,-2.881){\large $A$ }
\usefont{T1}{ptm}{m}{n}
\rput(8.014532,-2.386){$b_1$}
\usefont{T1}{ptm}{m}{n}
\rput(9.214532,-2.366){$b_2$}
\usefont{T1}{ptm}{m}{n}
\rput(11.1145315,-2.286){$b_3$}
\usefont{T1}{ptm}{m}{n}
\rput(11.87453,-0.226){$b_4$}
\usefont{T1}{ptm}{m}{n}
\rput(7.0745306,0.674){$t$}
\usefont{T1}{ptm}{m}{n}
\rput(5.5045314,-2.426){$a_2$}
\usefont{T1}{ptm}{m}{n}
\rput(4.2045317,-2.426){$a_3$}
\usefont{T1}{ptm}{m}{n}
\rput(2.564531,-2.426){$a_4$}
\psline[linewidth=0.024cm,arrowsize=0.05291667cm 2.0,arrowlength=1.4,arrowinset=0.4]{->}(3.1828125,-1.916)(3.2487016,-1.926)
\psline[linewidth=0.024cm,arrowsize=0.05291667cm 2.0,arrowlength=1.4,arrowinset=0.4]{->}(4.6487017,-3.196)(4.6887016,-3.204)
\psdots[dotsize=0.08](11.922812,-1.536)
\psline[linewidth=0.024cm,arrowsize=0.05291667cm 2.0,arrowlength=1.4,arrowinset=0.4]{->}(11.928701,-1.524)(11.468701,-2.564)
\psline[linewidth=0.024cm,arrowsize=0.05291667cm 2.0,arrowlength=1.4,arrowinset=0.4]{->}(10.248701,-3.484)(10.208701,-3.484)
\psdots[dotsize=0.08,linecolor=red](13.002812,-2.116)
\psarc[linewidth=0.024,linecolor=red,arrowsize=0.05291667cm 2.0,arrowlength=1.4,arrowinset=0.4]{<-}(13.302813,-2.236){0.3}{328.90167}{159.44395}
\psdots[dotsize=0.08,linecolor=red](13.542812,-2.396)
\psline[linewidth=0.024cm,linecolor=red,arrowsize=0.05291667cm 2.0,arrowlength=1.4,arrowinset=0.4]{->}(13.528702,-2.404)(13.028702,-2.124)
\psarc[linewidth=0.024,linecolor=red,arrowsize=0.05291667cm 2.0,arrowlength=1.4,arrowinset=0.4]{<-}(15.302813,-3.296){0.3}{328.90167}{159.44395}
\psdots[dotsize=0.08,linecolor=red](15.542812,-3.456)
\psdots[dotsize=0.08,linecolor=red](15.022813,-3.176)
\psline[linewidth=0.024cm,linecolor=red,arrowsize=0.05291667cm 2.0,arrowlength=1.4,arrowinset=0.4]{->}(15.528702,-3.464)(15.028702,-3.184)
\usefont{T1}{ptm}{m}{n}
\rput(11.73453,-1.386){$b_5$}
\psline[linewidth=0.024cm,linecolor=red,arrowsize=0.05291667cm 2.0,arrowlength=1.4,arrowinset=0.4]{->}(13.856405,-1.764)(14.5764065,-2.064)
\psdots[dotsize=0.08,linecolor=red](14.5764065,-2.044)
\psline[linewidth=0.024cm,linecolor=red,linestyle=dashed,dash=0.16cm 0.16cm](14.5764065,-2.044)(15.356405,-2.244)
\psdots[dotsize=0.08,linecolor=red](15.336407,-2.244)
\psdots[dotsize=0.08,linecolor=red](15.976406,-1.764)
\psdots[dotsize=0.08,linecolor=red](15.396405,-1.144)
\psdots[dotsize=0.08,linecolor=red](14.556407,-1.044)
\psline[linewidth=0.024cm,linecolor=red,linestyle=dashed,dash=0.16cm 0.16cm](15.396405,-1.144)(14.556407,-1.044)
\psline[linewidth=0.024cm,linecolor=red,arrowsize=0.05291667cm 2.0,arrowlength=1.4,arrowinset=0.4]{->}(15.336407,-2.244)(15.996407,-1.784)
\psline[linewidth=0.024cm,linecolor=red,arrowsize=0.05291667cm 2.0,arrowlength=1.4,arrowinset=0.4]{->}(15.956407,-1.764)(15.396405,-1.144)
\psline[linewidth=0.024cm,linecolor=red,arrowsize=0.05291667cm 2.0,arrowlength=1.4,arrowinset=0.4]{->}(14.556407,-1.044)(13.876407,-1.764)
\usefont{T1}{ppl}{m}{n}
\rput(14.904531,-1.606){$\frac{3M}{4}$}
\psline[linewidth=0.024cm,linecolor=red,arrowsize=0.05291667cm 2.0,arrowlength=1.4,arrowinset=0.4]{->}(3.6764054,-4.184)(4.3964067,-4.484)
\psdots[dotsize=0.08,linecolor=red](4.3964067,-4.464)
\psline[linewidth=0.024cm,linecolor=red,linestyle=dashed,dash=0.16cm 0.16cm](4.3964067,-4.464)(5.1764054,-4.664)
\psdots[dotsize=0.08,linecolor=red](5.1564064,-4.664)
\psdots[dotsize=0.08,linecolor=red](5.7964067,-4.184)
\psdots[dotsize=0.08,linecolor=red](5.2164054,-3.564)
\psdots[dotsize=0.08,linecolor=red](4.3764067,-3.464)
\psline[linewidth=0.024cm,linecolor=red,linestyle=dashed,dash=0.16cm 0.16cm](5.2164054,-3.564)(4.3764067,-3.464)
\psline[linewidth=0.024cm,linecolor=red,arrowsize=0.05291667cm 2.0,arrowlength=1.4,arrowinset=0.4]{->}(5.1564064,-4.664)(5.8164067,-4.204)
\psline[linewidth=0.024cm,linecolor=red,arrowsize=0.05291667cm 2.0,arrowlength=1.4,arrowinset=0.4]{->}(5.776407,-4.184)(5.2164054,-3.564)
\psline[linewidth=0.024cm,linecolor=red,arrowsize=0.05291667cm 2.0,arrowlength=1.4,arrowinset=0.4]{->}(4.3764067,-3.464)(3.6964066,-4.184)
\usefont{T1}{ppl}{m}{n}
\rput(4.7245297,-4.026){$\frac{3M}{4}$}
\usefont{T1}{ptm}{m}{n}
\rput(5.9845314,-2.446){$a_1$}
\psdots[dotsize=0.08,linecolor=color491](3.1228125,-2.576)
\psline[linewidth=0.024cm,linecolor=color491,arrowsize=0.05291667cm 2.0,arrowlength=1.4,arrowinset=0.4]{->}(4.08,-2.584)(3.56,-2.584)
\psline[linewidth=0.024cm,linecolor=color491,arrowsize=0.05291667cm 2.0,arrowlength=1.4,arrowinset=0.4]{->}(3.56,-2.584)(3.12,-2.584)
\psline[linewidth=0.024cm,linecolor=color491,arrowsize=0.05291667cm 2.0,arrowlength=1.4,arrowinset=0.4]{->}(3.12,-2.584)(2.64,-2.584)
\psline[linewidth=0.024cm,linecolor=color491,arrowsize=0.05291667cm 2.0,arrowlength=1.4,arrowinset=0.4]{->}(2.64,-2.584)(2.26,-2.584)
\end{pspicture} 
}

\label{fig:cac}
\caption{The Cactus Graph $G_{3P}$ }
\end{center}
\end{sidewaysfigure}
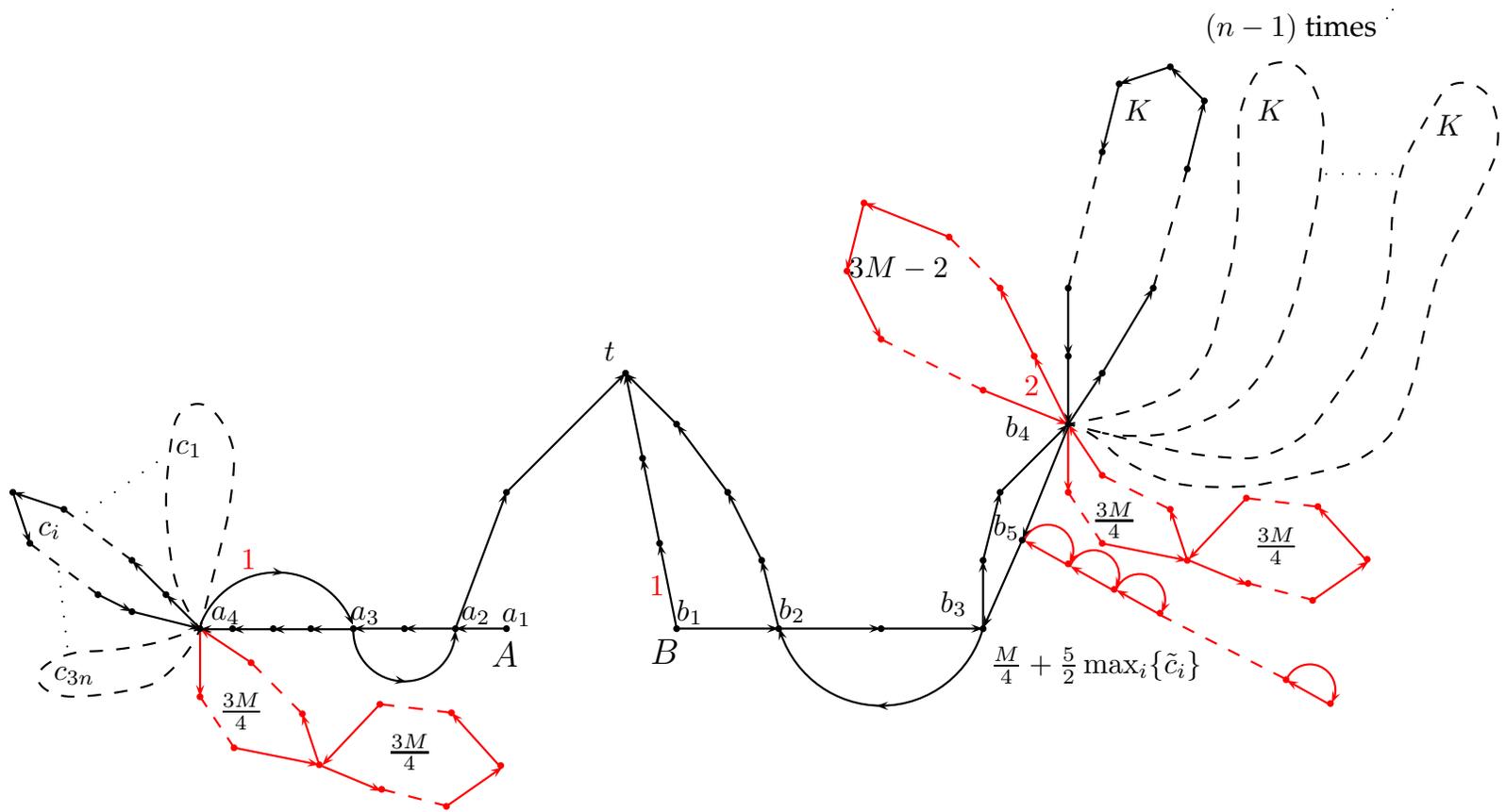

\subsection{A Polynomial Algorithm for \scg\ restricted to Simple Paths on Cactus Graphs}
\label{sec:cactussimple}

It turns out that in contrast to the negative result of Theorem~\ref{th:hardcactus},
imposing (R3) and thus eliminating cycles
allows a polynomial time algorithm for \scg\ on a cactus graph.
It is based on a dynamic programming approach which mimics the backward induction 
process.
However, because of the strong structural property of a cactus graph,
we can reduce the feasible domain for one player, say $A$,
to an acyclic directed graph, which allows a linear time backtracking process.
For each vertex, which $A$ has currently reached,
we consider all (that is $n$) possible vertices as current positions of $B$
and determine recursively the ``best'' (in the \opath\ sense) paths for both players
to a potential meeting point.
The cactus property allows to reduce the options for these meeting points.

To construct the acyclic digraph for $A$ we first introduce 
an auxiliary edge set $E' \subseteq E$.
Initially, $E'$ contains all edges of $E$ that might be used by player $A$.
More formally, $(u,v)\in E$ is contained in $E'$ if there exists a directed path 
from $s$ to $u$.
Next, consider all directed cycles $C$ in $E'$.
By definition of a cactus, there can be only one path from $s$ entering $C$ at some vertex $v$:
Otherwise, a second path entering $C$ at $v'$ would imply that either the edges between $v$ and $v'$
or between $v'$ and $v$ are contained in two cycles.
Clearly, $A$ cannot use the ``last'' edge of cycle $C$, i.e.\ the edge $(u,v)$ entering $v$,
since $A$ already must have passed through $v$ to reach $C$.
Therefore, we eliminate $(u,v)$ from $E'$ which makes $E'$ acyclic and connected.
$E'$ will remain constant throughout the execution of the algorithm.

For every vertex $a \in V$ define $M(a) \subseteq V$ as the set of vertices which player $A$
{\em must} visit in order to reach $a$ from $s$, i.e.\ $M(a)$ consists of those vertices
that are contained in every directed path from $s$ to $a$.
$M(a)$ also contains $a$.

\begin{prop}\label{th:meeting}
If in \opath\ player $A$ moves from $s$ until $a\in V$, then the meeting point $m$ is contained in $M(a)$
w.l.o.g.
\end{prop}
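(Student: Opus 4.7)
The argument will go by contradiction, leveraging the cactus structure of the underlying graph together with the construction of the acyclic subgraph $E'$. Assume that in \opath\ player $A$ reaches $a$ via some simple path $P_A \subseteq E'$, and suppose for contradiction that the meeting point satisfies $m \notin M(a)$.

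Since $m \notin M(a)$, by definition there exists an alternative directed path from $s$ to $a$ in $E'$ that avoids $m$. The cactus property combined with the way $E'$ is constructed (by removing the entering edge of every directed cycle) implies that any two distinct directed paths from $s$ to $a$ in $E'$ can differ only by traversing opposite halves of a single cycle block $C$ of the underlying cactus. Hence $m$ lies strictly inside $A$'s chosen half of such a block $C$, between two cut vertices $u, u' \in M(a)$ that are respectively the entry and exit of $C$ along $P_A$. Since $u$ and $u'$ separate $C$ from the rest of the cactus, any simple path from $t$ to $m$ must enter $C$ through $u$ or $u'$.

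The main case analysis then distinguishes the two entry options. If $B$ enters $C$ through $u$, then since $A$ has already visited $u$ before moving into the cycle (as $u$ lies on $P_A$ strictly before $m$), the game is forced to end at $u$ itself, contradicting $m \neq u$ (and moreover preventing $A$ from continuing past $u$ to reach $a$ in the case that $B$ happened to arrive at $u$ first). If instead $B$ enters $C$ through $u'$, then since the edges of $A$'s half of $C$ are directed from $u$ towards $u'$ in $P_A$, $B$ cannot traverse them in the reverse direction, and in the cactus the only way from $u'$ to $m$ not using these edges passes back through $u$, returning us to the previous case. The degenerate sub-case of two parallel directed edges forming a $2$-cycle is handled separately by observing that such parallel edges immediately place $u$ or $u'$ on $B$'s path as an earlier valid meeting point. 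In every scenario the meeting must have occurred at a cut vertex in $M(a)$, contradicting the assumption $m \notin M(a)$.

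The main obstacle will be the precise treatment of the directed edge orientations inside $C$ and verifying that $B$ cannot circumvent the cut vertices by traversing the ``other'' half of $C$ in an unexpected direction; this step combines the cactus property, the back-edge removal in the construction of $E'$, and the simple-path restriction (R3). The ``w.l.o.g.'' qualification in the statement accommodates the possibility of multiple subgame-perfect equilibria of equal cost, in which case the tie-breaking rule of Section~\ref{sec:game} is invoked to select the equilibrium that places the meeting point inside $M(a)$.
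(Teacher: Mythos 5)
Your argument tries to prove a stronger claim than the paper actually establishes, and the pivotal step fails. You assume that any directed path from $t$ to $m$ must enter the cycle $C$ through the two vertices $u,u'$ where $A$'s alternative routes branch and rejoin. In a cactus this is not true: the blocks of the graph may be attached at \emph{any} vertex of $C$ (two cycles may share a vertex, and trees or further cycles can hang off internal vertices of $C$), so $B$ may have a directed path from $t$ that enters $C$ directly at $m$ without ever touching $u$ or $u'$. For example, take a cycle with both halves $u\to x\to u'$ and $u\to y\to u'$ directed away from $u$, let $A$'s route to $a$ use the half through $x$, and attach a directed edge $(w,x)$ reachable from $t$: then $B$ can legally meet $A$ at $x\notin M(a)$, and neither of your two cases applies. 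Consequently the statement ``the meeting point is always in $M(a)$'' is simply false without the qualification, and no contradiction argument of the kind you sketch can close this gap. Your reading of the ``w.l.o.g.''\ as a tie-breaking device among equal-cost equilibria is also not what is intended.

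The paper's proof is structured quite differently: it \emph{concedes} that a meeting point $m\notin M(a)$ may occur, but shows that at most one cycle of $E'$ can give rise to such a configuration. If two cycles $C,C'$ both admitted potential meeting points $m,m'$ reachable from $t$, then the directed paths from $t$ to $m$ and to $m'$, together with the connection between $m$ and $m'$ along $A$'s side, would force one of the two edges of $C$ emanating from $m$ to lie in two simple cycles, contradicting the cactus property. Since at most one such configuration exists, the ``w.l.o.g.''\ is realized by exchanging the roles of $A$ and $B$ (adding a dummy edge $(t',t)$ so that the order of moves is preserved), after which the meeting point lies in the must-visit set of the player whose side is made acyclic. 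To repair your write-up you would need to abandon the contradiction strategy, prove the uniqueness-of-the-bad-cycle claim, and make the role exchange explicit.
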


\begin{proof}
Consider a possible meeting point $m \not\in M(a)$.
This means that there exists a cycle $C$ which $A$ has to traverse on the way from $s$ to $a$ and
whose edges are oriented in such a way that $A$ has two possible ways to pass through $C$,
one of them containing $m$.
There can exist at most one such cycle:
Otherwise, assume that there is a second cycle $C'$ with a potential meeting point $m'$.
Now, there is a path from $t$ to $m$ and from $t$ to $m'$.
Since there is also path between $m$ and $m'$,
one of the two edges in $C$ emanating from $m$ must be contained in two cycles.

Since only one such configuration can exist, we can exchange the roles of $A$ and $B$ 
(and add a dummy edge $(t',t)$ to preserve the role of the first mover).~\qed
\end{proof}

\def\costa{\mbox{Cost}_A}
\def\costb{\mbox{Cost}_B}
\def\costx{\mbox{Cost}}
\def\succ{\mbox{succ}}
\def\besta{\tilde{a}}

In the following we describe a dynamic programming scheme to determine \opath.
\begin{quote}
Assume that $A$ is currently in $a \in V$ and $B$ in $b \in V$ with $A$ making the next move.
We introduce arrays $\costa(a,b)$ resp.\ $\costb(a,b)$ for $A$ resp.\ $B$
containing the costs of 
the optimal subgame perfect equilibrium paths from the current vertices to the end of the game.
\end{quote}
If a configuration $(a,b)$ is found out to be infeasible 
(e.g.\ $a$ is a leaf of $E'$ and $b\neq a$),
then we set $\costa(a,b)=\costb(a,b)=\infty$.
Note that this does not concern the aspect whether $(a,b)$ can be reached from the
starting configuration $(s,t)$, but only whether an endpoint of the game 
could be reached from $(a,b)$.
The cost of \opath\ will finally be reported in $\costa(s,t)$ and $\costb(s,t)$.

The recursive computations of these arrays will be performed in a bottom up way
guided by an auxiliary edge set $F\subseteq E$ initialized by $F=E'$.
Recall that $F$ is acyclic.
Throughout the computation $F$ contains all edges leading to vertices
which were not yet considered as positions of $A$.
In the main computation we determine for each vertex $a \in V$
with outdegree $0$ w.r.t.\ $F$
(i.e.\ for vertices whose successors were all dealt with)
the entries $\costx(a,b)$ for all $b\in V$.
Note that each entry of $\costx(a, \cdot)$ is computed only once and never updated.
After completion of this task all incoming edges $(u,a)$ are eliminated from $F$
possibly generating new vertices with outdegree $0$ w.r.t.\ $F$.
This process is continued until $F=\emptyset$ and $\costx(s, b)$ is determined
for all $b$.

Processing a vertex $a$ with no successors w.r.t.\ $F$ works as follows:
For all $b \in M(a)$ player $B$ has reached a meeting point.
Hence, $\costa(a,b)=\costb(a,b)=0$. 
(This includes the case $a=b$.)
For all $b\ \not\in M(a)$, we consider all possible decisions for $A$ in $a$
and all possible reactions of $B$ in $b$.
Therefore, let 
$\succ(a)=\{v \in V \mid (a,v) \in E'\}$ and
$\succ(b)=\{v \in V \mid (b,v) \in E\}$.
If $\succ(a)=\emptyset$ then player $A$ ``got stuck'' and no feasible solution
can be reached. 
Thus, $\costa(a,b)=\costb(a,b)=\infty$. 
Otherwise, if $\succ(b)=\emptyset$ then player $B$ has no feasible move left.
If player $A$ can still reach $b$ by traversing one edge, a meeting point is reached.
Therefore, if $(a,b) \in E'$ then $\costa(a,b)=c(a,b)$, $\costb(a,b)=0$.
Otherwise, if $(a,b) \not\in E'$ then $\costa(a,b)=\costb(a,b)=\infty$.
There remains the general case where both 
$\succ(a)$ and $\succ(b)$ are non-empty.

For all possible decisions taken by $A$ in $a$, i.e.\ for all $a' \in \succ(a)$,
we determine the best reaction by $B$.
If $a'=b$ then no move is necessary for $B$ and we set $b(a')=b$. 
Otherwise, there is
\begin{equation}\label{eq:bestb}
b(a') = \arg\min_{b'\in \succ(b)} \{c(b,b') + \costb(a',b')\}.
\end{equation}

This implies the cost of $A$ resulting from choosing $a'$.
It remains to select the best decision $\besta$ for $A$:
\begin{equation}\label{eq:besta}
\besta = \arg\min_{a'\in \succ(a)} \{c(a,a') + \costa(a', b(a'))\}
\end{equation}
Finally, we set 
\begin{eqnarray*}
\costa(a,b) &=& c(a,\besta) + \costa(\besta, b(\besta))\\
\costb(a,b) &=& c(b,b(\besta)) + \costb(\besta, b(\besta)).
\end{eqnarray*}

\begin{theorem}\label{th:time}
Imposing (R3),  \opath\ of \scg\ on cactus graphs can be computed in $O(n^2)$ time.
\end{theorem}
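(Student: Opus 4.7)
The plan is to show that the dynamic program described above is correct and that it runs in $O(n^2)$ time.

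First I would argue correctness by induction on the reverse topological order in which vertices are removed from $F$. Because $E'$ is acyclic and connected, this order is well-defined, and when $a$ is processed every $a'\in\succ(a)$ has already supplied its $\costx(a',\cdot)$ values. For the base cases the reasoning is short: if $b\in M(a)$ then by Proposition~\ref{th:meeting} player $A$ has passed through $b$ on every path used to reach $a$, so the game has already been decided and the residual cost is indeed zero; if $\succ(a)=\emptyset$ and $b\notin M(a)$ then $A$ is stuck without a meeting point; if only $\succ(b)=\emptyset$ we just check whether $A$ can terminate in one $E'$-step. The general case is a direct transcription of backward induction: the inner minimisation~(\ref{eq:bestb}) is $B$'s selfish best reply to a hypothetical move $a'$ of $A$, while the outer minimisation~(\ref{eq:besta}) is $A$'s selfish choice knowing this reply. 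Restriction~(R3) is automatically respected on $A$'s side because $E'$ is acyclic, and on $B$'s side because Proposition~\ref{th:meeting} confines the meeting point to $M(a)$, removing the need to consider branches where $B$ would have to revisit a vertex.

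Next I would bound the running time. In a cactus on $n$ vertices $|E|=O(n)$ and hence $|E'|\le|E|=O(n)$. The dynamic program fills $n^2$ entries, and the general-case evaluation of $\costx(a,b)$ takes $O(|\succ(a)|\cdot|\succ(b)|)$ time: for each $a'\in\succ(a)$ we perform one scan of $\succ(b)$ to evaluate~(\ref{eq:bestb}), plus one final pass to evaluate~(\ref{eq:besta}). Summing,
\[
\sum_{a\in V}\sum_{b\in V}|\succ(a)|\cdot|\succ(b)|
=\Bigl(\sum_{a}|\succ(a)|\Bigr)\Bigl(\sum_{b}|\succ(b)|\Bigr)
=|E'|\cdot|E|=O(n^2).
\]
The preprocessing --- constructing $E'$, computing all sets $M(a)$, and maintaining the topological queue of vertices with $F$-outdegree zero --- also fits within $O(n^2)$ by standard means.

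The step I expect to be most subtle is verifying that the state $(a,b)$ is genuinely Markovian under~(R3): a priori, the future legal moves depend on the entire history of visited vertices, not just on the current pair of positions. Here the cactus property is essential: since $A$ is confined to the acyclic graph $E'$, the vertices $A$ has already visited are determined by $a$ together with $M(a)$, and the uniqueness of the entry point into any cycle (a defining feature of a cactus) forces $B$'s residual choices inside each cycle to coincide with those seen in the recursion. Making this implicit claim rigorous is the crux of the correctness proof.
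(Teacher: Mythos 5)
Your proposal is correct and takes essentially the same route as the paper: the runtime bound is the same accounting, namely total work $\sum_a\sum_b |\mbox{succ}(a)|\cdot|\mbox{succ}(b)| = |E'|\cdot|E| = O(|E|^2) = O(n^2)$ since a cactus has $O(n)$ edges. The correctness sketch you add (backward induction along the reverse topological order of $E'$, with Proposition~\ref{th:meeting} used to handle the meeting-point/history issue) mirrors the paper's informal justification of the dynamic program given before the theorem, whose own proof consists only of the runtime analysis.
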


\begin{proof}
Throughout the execution of the algorithm, the total number of vertices $a'$ considered  in (\ref{eq:besta}) 
as a successor of some $a$ is bounded by $|E'|$,
since each edge is used only once (although a vertex may well be considered multiple times).
For each candidate $a'$ we have to consider all possible combinations
of some vertex with a successor $b'$. 
The number of such pairs $(b, b')$ is trivially bounded by $|E|$.
Thus, the running time is bounded by $O(|E|^2)$.
The statement follows since it is well-known that the number of edges of a cactus graph is $O(n)$.
\qed\end{proof}

\section{Conclusion}

In this work, we have shown that \scg\ is computationally difficult not only 
on general graphs, which follows from our \psp ness-proof for bipartite graphs,
but still $\np$-hard even for more restricted graph classes such as directed acyclic graphs or cactus graphs. 
Since \scg\ is trivial on trees, this hardness result for cactus graphs establishes rather clearly the boundary between tractable and hard cases.

On the other hand, imposing the restriction that each vertex can be visited 
only once by each player,
i.e.\ a restriction to simple paths,
makes \scg\ polynomially solvable on cactus graphs.  
Under this restriction, we believe that it is a challenging problem to determine the computational complexity status of \scg\ and possibly find polynomial time algorithms for classical generalizations of cactus graphs such as, for instance, outerplanar graphs or series parallel graphs. 
At least for a dynamic programming based approach it seems to be necessary to have
some knowledge about possible meeting points on the way to a certain configuration,
as given by Proposition~\ref{th:meeting} for cactus graphs.
Unfortunately, even for outerplanar graphs no direct generalization of this statement 
could be found.
Thus, we believe that \scg\ might be computationally hard even for minor generalizations
of cactus graphs.

\bibliographystyle{plainnat}
\bibliography{path-game}

\end{document}